\documentclass[a4paper]{article}
\usepackage{geometry}
\usepackage[utf8]{inputenc}
\usepackage[english]{babel}

\usepackage{microtype}
\usepackage{amssymb,amsthm,amsmath}
\usepackage{xspace}
\usepackage{cite}
\usepackage{graphicx}
\usepackage{enumerate}
\usepackage{enumitem}
\usepackage{url}
\usepackage[hyperfootnotes=false]{hyperref}
\usepackage{textcomp}
\usepackage{titling}
\usepackage{float}
\usepackage{caption}
\usepackage{tikz}
\tikzset{ %
	V/.style={circle,draw=black,inner sep=0pt,minimum size=8pt,fill=gray!20},
	A/.style={V,fill=blue!20},
	B/.style={V,fill=red!80!yellow!50!gray!90}
}

\theoremstyle{plain}
\newtheorem{theorem}{Theorem}
\newtheorem{corollary}[theorem]
{Corollary}
\newtheorem{lemma}[theorem]
{Lemma}
\newtheorem{observation}[theorem]
{Observation}
\newtheorem{definition}[theorem]
{Definition}

\newtheorem{claim}[theorem]
{Claim}

\newcommand{\aic}{{\bf aic}}

\newcommand{\ctw}{\ensuremath{\mathbf{cw}}\xspace}

\newcommand{\dcw}{{\bf dcw}}

\newcommand{\Oh}{{O}}

\newcommand{\NN}{\ensuremath{\mathbb{N}}}

\newcommand{\cuts}{\mathtt{cuts}}
\newcommand{\width}{\mathtt{width}}

\newcommand{\obs}{\ensuremath{\mathbf{obs}}}
\newcommand{\C}{\ensuremath{\mathcal{C}}}

\newcommand{\intv}[2]{[#1,#2]}

\def\cqedsymbol{\ifmmode$\lrcorner$\else{\unskip\nobreak\hfil
\penalty50\hskip1em\null\nobreak\hfil$\lrcorner$
\parfillskip=0pt\finalhyphendemerits=0\endgraf}\fi} 
\newcommand{\cqed}{\renewcommand{\qed}{\cqedsymbol}}


\begin{document}

\thanksmarkseries{alph}

\title{{\bf\Large  Cutwidth: obstructions and algorithmic aspects}%
\thanks{
This work was partially done while Archontia C. Giannopoulou was holding a post-doc position at Warsaw Center of Mathematics and Computer Science.
The research of Archontia C. Giannopoulou has been supported by the European Research Council (ERC) under the European Union's Horizon
2020 research and innovation programme (ERC consolidator grant DISTRUCT, agreement No 648527).
The research of Michał Pilipczuk and Marcin Wrochna is supported by the Polish National Science Center grant SONATA UMO-2013/11/D/ST6/03073.
The research of Jean-Florent Raymond is supported by the Polish National Science Center grant PRELUDIUM UMO-2013/11/N/ST6/02706.
Michał Pilipczuk is supported by the Foundation for Polish Science (FNP) via the START stipend programme.
} $^{,}$\thanks{Emails:
\nolinkurl{archontia.giannopoulou@tu-berlin.de}, 
\nolinkurl{michal.pilipczuk@mimuw.edu.pl}, 
\nolinkurl{jean-florent.raymond@mimuw.edu.pl},
\nolinkurl{sedthilk@thilikos.info}, and
\nolinkurl{m.wrochna@mimuw.edu.pl}.
}
}

\author{Archontia C. Giannopoulou\thanks{Technische Universität Berlin, Berlin, Germany.}
\and Michał Pilipczuk\thanks{Institute of Informatics, University of Warsaw, Poland.}
\and Jean-Florent Raymond$^{\mbox{\footnotesize d}}$$^,$\thanks{AlGCo project team, CNRS, LIRMM, Montpellier, France.}\vspace{3mm}
\and Dimitrios  M. Thilikos$^{\mbox{\footnotesize e}}$$^,$\thanks{Department of Mathematics, National and Kapodistrian University of Athens, Athens, Greece.}
\and Marcin Wrochna$^{\mbox{\footnotesize d}}$
}

\date{\empty}

\maketitle

\vspace{-9mm}
\begin{abstract}
\noindent Cutwidth is one of the classic layout parameters for graphs. 
It measures how well one can order the vertices of a graph in a linear manner, so that the maximum number of edges between any prefix and its complement suffix is minimized.
As graphs of cutwidth at most $k$ are closed under taking immersions, the results of Robertson and Seymour imply that 
there is a finite list of minimal immersion obstructions for admitting a cut layout of width at most $k$.
We prove that every minimal immersion obstruction for cutwidth at most $k$ has size at most $2^{\Oh(k^3\log k)}$.
%
%

As an interesting algorithmic byproduct, we design a new fixed-parameter algorithm for computing the cutwidth of a graph that runs in time $2^{\Oh(k^2\log k)}\cdot n$,
where $k$ is the optimum width and $n$ is the number of vertices. 
While being slower by a $\log k$-factor in the exponent than the fastest known algorithm, given by Thilikos, Bodlaender, and Serna in [Cutwidth {I:} {A} linear time fixed parameter algorithm, {\em J. Algorithms}, 56(1):1--24, 2005] and [Cutwidth {II:} {A}lgorithms for partial $w$-trees of bounded degree, \newblock {\em J. Algorithms}, 56(1):25--49, 2005], 
our algorithm has the advantage of being simpler and self-contained; arguably, it explains better the combinatorics of optimum-width layouts.
\end{abstract}

\vspace{3mm}

\noindent {\bf Keywords}: cutwidth, obstructions, immersions, fixed-parameter tractability.

\vspace{6mm}

\section{Introduction}\label{sec:intro}

The cutwidth of a graph is defined as the minimum possible {\em{width}} of a linear ordering of its vertices, where the width of an ordering $\sigma$ is the maximum, among all the prefixes of $\sigma$,
of the number of edges that have exactly one vertex in a prefix.
Due to its natural definition, cutwidth has various applications in a range of practical fields of computer science: 
whenever data is expected to be roughly linearly ordered and dependencies or connections are local, one can expect the cutwidth of the corresponding graph to be small.
These applications include circuit design, graph drawing, bioinformatics, and text information retrieval; we refer to the survey of layout parameters of  Díaz,  Petit, and  Serna~\cite{DiazPS02} for a broader discussion.

As finding a layout of optimum width is NP-hard~\cite{garey1979computers}, the algorithmic and combinatorial aspects of cutwidth were intensively studied.
There is a broad range of polynomial-time algorithms for special graph classes~\cite{HeggernesLMP11,HeggernesHLN12,Yannakakis85}, approximation algorithms~\cite{LeightonR99}, and fixed-parameter algorithms~\cite{ThilikosSB05,ThilikosSB05a}.
In particular, Thilikos, Bodlaender, and Serna~\cite{ThilikosSB05,ThilikosSB05a} proposed a fixed-parameter algorithm for computing the cutwidth of a graph that runs\footnote{Thilikos, Bodlaender, and Serna~\cite{ThilikosSB05,ThilikosSB05a} do not
specify  the parametric dependence of the running time of their algorithm. A careful analysis of their algorithm yields the above claimed running time bound.}
in time~$2^{\Oh(k^2)}\cdot n$, 
where $k$ is the optimum width and $n$ is the number of vertices. Their approach is to first compute the pathwidth of the input graph, which is never larger than the cutwidth.
Then, the optimum layout can be constructed by an elaborate dynamic programming procedure on the obtained path decomposition. 
To upper bound the number of relevant states, the authors had to understand how an optimum layout can look in a given path decomposition.
For this, they borrow the technique of {\em{typical sequences}} of Bodlaender and Kloks~\cite{BodlaenderK96}, which was introduced for a similar reason, but for pathwidth and treewidth instead of cutwidth.

Since the class of graphs of cutwidth at most $k$ is closed under immersions, 
and the immersion order is a well-quasi ordering of graphs\footnote{All graphs considered in this paper may have parallel edges, but no loops.}~\cite{RobertsonS10},
it follows that for each $k$ there exists a {\sl finite} obstruction set $\mathcal{L}_k$ of graphs such that a graph has cutwidth at most $k$ if and only if it does not admit any graph from $\mathcal{L}_k$ as an immersion.
However, this existential result does not give any hint on how to generate, or at least estimate the sizes of the obstructions.
The sizes of obstructions are important for efficient treatment of graphs of small cutwidth; this applies also in practice, as indicated by Booth et al.~\cite{BoothGLR92} in the context of VLSI design.

The estimation of sizes of minimal obstructions for graph parameters like pathwidth, treewidth, or cutwidth, has been studied before.
For minor-closed parameters pathwidth and treewidth, 
Lagergren~\cite{Lagergren98} showed that any minimal minor obstruction to admitting a path decomposition of width $k$ has size at most single-exponential in $\Oh(k^4)$,
whereas for tree decompositions he showed an upper bound double-exponential in $\Oh(k^5)$ .
Less is known about immersion-closed parameters, like cutwidth. 
Govindan and Ramachandramurthi~\cite{GOVINDAN2001189} showed that the number of minimal immersion obstructions for the class of graphs of cutwidth at most $k$ is at least $3^{k-7}+1$,
and their construction actually exemplify minimal obstructions for cutwidth at most $k$ with ${(3^{k-5}-1)}/{2}$ vertices.
To the best of our knowledge, nothing was known about upper bounds for the cutwidth case.

\subsection{Results on obstructions.} Our main result concerns the sizes of obstructions for cutwidth.

\begin{theorem}\label{thm:main}
Suppose a graph $G$ has cutwidth larger than $k$, but every graph with fewer vertices or edges (strongly) immersed in $G$ has cutwidth at most $k$.
Then $G$ has at most $2^{\Oh(k^3\log k)}$ vertices and edges.
\end{theorem}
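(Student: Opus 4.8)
The plan is to first squeeze structural consequences out of minimality, then to attach to each cross-section of a suitable linear decomposition of $G$ a bounded-size \emph{signature} describing how width-$\le k$ layouts interact with it, and finally to show that if $G$ is too large then a repetition of signatures lets us shorten $G$ to a strictly smaller strong immersion that still has cutwidth larger than $k$, contradicting minimality.

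\noindent\textbf{Preprocessing from minimality.} Deleting an edge drops cutwidth by at most $1$, so minimality gives $\ctw(G)=k+1$; since cutwidth is the maximum over connected components, $G$ is connected; since a vertex of degree $d$ makes the cut just before or just after it have at least $\lceil d/2\rceil$ edges, $\Delta(G)\le 2k+2$; and $\pw(G)\le\ctw(G)\le k+1$. Moreover, subdividing an edge does not change cutwidth — one direction is immersion-monotonicity, the other follows by placing the new degree-$2$ vertices immediately next to an endpoint of the subdivided edge without increasing any cut. Hence lifting a degree-$2$ vertex with two distinct neighbours, and also deleting an isolated vertex, a degree-$1$ vertex, or (for $k\ge 2$) a degree-$2$ vertex incident to a double edge, each produces a strictly smaller strong immersion of cutwidth still larger than $k$, contradicting minimality. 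So (for $k\ge 2$; the cases $k\le 1$ are finite and handled directly) $G$ has minimum degree at least $3$, hence $\tfrac{3}{2}|V(G)|\le|E(G)|\le(k+1)|V(G)|$, and it suffices to bound $|V(G)|$.

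\noindent\textbf{Signatures and a long lean decomposition.} Fix a nice path decomposition $(X_1,\dots,X_m)$ of $G$ of width at most $k+1$ that is additionally \emph{lean}: any separator between two cross-sections is witnessed by as many vertex-disjoint paths through the part between them as its size allows. To cross-section $i$, with left part $G_i^-=G[X_1\cup\dots\cup X_i]$ and separator $S_i=X_i\cap X_{i+1}$, attach a signature recording, via the typical-sequences technique of Bodlaender--Kloks as adapted to cutwidth by Thilikos--Bodlaender--Serna, the complete information needed to describe how a width-$\le k$ partial layout of $G_i^-$ can be completed to the right: for each linear order of $S_i$ and each of the $\Oh(k)$ gaps it induces, the typical sequence of attainable cut-widths together with the available spare capacity. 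A careful count bounds the number of distinct signatures by $2^{\Oh(k^2\log k)}$ — exactly the state space of the claimed $2^{\Oh(k^2\log k)}\cdot n$ algorithm. Combining a pigeonhole on signatures with a recursion over the $\Oh(k)$ possible separator sizes — which raises the relevant count to $\bigl(2^{\Oh(k^2\log k)}\bigr)^{\Oh(k)}=2^{\Oh(k^3\log k)}$ — and invoking leanness, one obtains, as soon as $m$ exceeds $2^{\Oh(k^3\log k)}$, two cross-sections $i<j$ with the same signature, with $p:=|S_i|=|S_j|$, and with $p$ vertex-disjoint paths $Q_1,\dots,Q_p$ from $S_i$ to $S_j$ routed through the middle part $M=G[X_{i+1}\cup\dots\cup X_j]$ and internally disjoint from $G_i^-$ and from $G_j^+=G[X_{j+1}\cup\dots\cup X_m]$.

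\noindent\textbf{The replacement and the main obstacle.} Let $G'$ be obtained from $G$ by deleting every vertex of $M$ lying outside $S_i\cup S_j$ and outside all the $Q_\ell$'s, and then lifting each $Q_\ell$ to a single edge between its two endpoints. Routing along the internally vertex-disjoint $Q_\ell$'s shows that $G'$ is a strong immersion of $G$; the shortening removes at least one vertex, so $|V(G')|<|V(G)|$; and, since cutwidth is immersion-monotone, $\ctw(G')\le\ctw(G)=k+1$. The crux is the \emph{Replacement Lemma}: $\ctw(G')=k+1$. To prove it, take any width-$\le k$ layout $\tau$ of $G'$; subdividing the lifted edges back into the paths $Q_\ell$ keeps the width $\le k$, and then the equality of signatures at $i$ and $j$ — transported across $S_i\leftrightarrow S_j$ via the $Q_\ell$'s — is exactly what allows one to re-insert the remaining vertices and edges of $M$ into $\tau$ while keeping the width $\le k$, by a typical-sequences exchange argument; the result is a width-$\le k$ layout of $G$, contradicting $\ctw(G)=k+1$. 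Given the lemma, $G'$ is a strictly smaller strong immersion of $G$ of cutwidth larger than $k$, contradicting minimality. Hence $m\le 2^{\Oh(k^3\log k)}$, and since each bag has at most $k+2$ vertices, $|V(G)|\le(k+2)\,m\le 2^{\Oh(k^3\log k)}$, whence also $|E(G)|\le(k+1)|V(G)|\le 2^{\Oh(k^3\log k)}$. I expect the Replacement Lemma to be the main obstacle: one must isolate an invariant of a left part that is simultaneously preserved by the shortening and strong enough to transfer a layout back across it, and make the spare-capacity bookkeeping over the $\Oh(k)$ gaps mesh with the $p$ disjoint linkages — getting these two mechanisms (typical sequences for controlling cutwidth, leanness for realising the shortening as a \emph{strong} immersion) to cooperate is where the real work lies.
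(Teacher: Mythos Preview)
Your overall strategy matches the paper's, but there is a genuine gap in the Replacement Lemma, stemming from a mismatch between the decomposition you use and the replacement you perform. Your signatures live at \emph{vertex separators} $S_i$ of a lean path decomposition, so ``same signature at $i$ and $j$'' says: for any right part $R$, gluing $G_i^-$ to $R$ by \emph{identifying} $S_i$ with the boundary of $R$ yields the same cutwidth as gluing $G_j^-$ to $R$ by identifying $S_j$. But your $G'$ is not built that way: it is $G_i^-$ and $G_j^+$ kept disjoint and joined by $p$ new \emph{edges} (the lifted $Q_\ell$). The left part of $G'$ at the separator $S_j$ is therefore not $G_i^-$ with boundary $S_i$, but $G_i^-$ together with the vertices of $S_j$ and $p$ extra edges to them --- and nothing you have assumed says this object has the same signature as $G_j^-$. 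Your ``transport across $S_i\leftrightarrow S_j$ via the $Q_\ell$'' is precisely where the argument breaks: vertex-separator signatures are tailored to gluing-by-identification (the minor world, as in Lagergren), whereas immersion demands gluing-by-edges. If instead you contracted the $Q_\ell$'s to identify $S_i$ with $S_j$, the signature argument would go through, but you would only obtain a minor of $G$, not a strong immersion.

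The paper resolves this by dropping path decompositions and working directly with a \emph{linked optimum cutwidth ordering}. There every cross-section is already an \emph{edge} cut of some size $q$; the signature (their ``bucket interface'') is attached to a $q$-boundaried graph whose boundary records the $q$ cut edges; gluing is the operation $\oplus$ that adds exactly $q$ edges; and linkedness supplies $q$ \emph{edge}-disjoint paths, so that the shortening is literally an instance of $\oplus$ and hence a strong immersion to which the signature equality applies verbatim. One further structural point the paper exploits and you omit: the interface sets along the ordering form a monotone chain (each prefix immerses in the next, so its set of conforming interfaces can only shrink), which forces two \emph{consecutive} positions with equal signature rather than merely two somewhere. (Minor aside: your claim that deleting a degree-$1$ vertex preserves cutwidth $>k$ is false --- $K_{1,3}$ has cutwidth $2$ but $P_3$ has cutwidth $1$ --- though this is harmless, since $\Delta(G)\le 2k+2$ already gives $|E(G)|\le(k+1)|V(G)|$ without any lower bound on the minimum degree.)
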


\noindent
The above result immediately gives the same upper bound on the sizes of graphs from the minimal obstruction sets $\mathcal{L}_k$ 
as they satisfy the prerequisites of Theorem~\ref{thm:main}.
This somewhat matches the $({3^{k-5}-1})/{2}$ lower bound of Govindan and Ramachandramurthi~\cite{GOVINDAN2001189}.

Our approach for Theorem~\ref{thm:main} follows the technique used by Lagergren~\cite{Lagergren98} to prove that minimal minor obstructions for pathwidth at most $k$ have sizes single-exponential in $\Oh(k^4)$.
Intuitively, the idea of Lagergren is to take an optimum decomposition for a minimal obstruction, which must have width $k+1$, and to assign to each prefix of the decomposition one of finitely many ``types'',
so that two prefixes with the same type ``behave'' in the same manner. If there were two prefixes, one being shorter than the other, with the same type, then one could replace one with the other, thus
obtaining a smaller obstruction. Hence, the upper bound on the number of types, being double-exponential in $\Oh(k^4)$, gives some upper bound on the size of a minimal obstruction.
This upper bound can be further improved to single-exponential by observing that types are ordered by a natural domination relation, and the shorter a prefix is, the weaker is its type.
An important detail is that one needs to make sure that the replacement can be modeled by minor operations. 
For this, Lagergren uses the notion of {\em{linked path decompositions}} (a weaker variant of {\em{lean path decompositions}}; cf.~\cite{Thomas90,BellenbaumD02}).

To prove Theorem~\ref{thm:main}, we perform a similar analysis of prefixes of an optimum ordering of a minimal obstruction. 
We show that prefixes can be categorized into a bounded number of types, each comprising prefixes that have the same ``behavior''.
Provided two prefixes with equally strong type appear one after the other, we can ``unpump'' the part of the graph in their difference.

To make sure that unpumping is modeled by taking an immersion, we define {\em{linked orderings}} for cutwidth and reprove the analogue of the result of Thomas~\cite{Thomas90} (see~\cite{BellenbaumD02} for simplified proofs): 
there is always an optimum-width ordering that is linked.
We remark this already follows from more general results on submodular functions: the same is true for parameters like \emph{linear rank-width}, as observed by Kant\'{e} and Kwon~\cite{KanteK14}, which in turns follows from the proof of an analogous theorem of Geelen et~al.~\cite{GeelenGW02} that applies to branch-decompositions, and thus, e.g., to parameters known as \emph{branch-width} and \emph{carving-width}.

The proof of the upper bound on the number of types essentially boils down to the following setting. We are given a graph $G$ and a subset $X$ of vertices, such that at most $\ell$ edges have exactly one endpoint in $X$.
The question is how $X$ can look like in an optimum-width ordering of $G$. We prove that there is always an ordering where $X$ is split into at most $\Oh(k\ell)$ blocks, where $k$ is the optimum width. 
This allows us to store the relevant information on the whole $X$ in one of a constant number of types (called {\em{bucket interfaces}}).
The swapping argument used in this proof holds the essence of the typical sequences technique of Bodlaender and Kloks~\cite{BodlaenderK96}, while being, in our opinion, more natural and easier to understand.
%

As an interesting byproduct, we can also use our understanding to treat the problem of removing edges to get a graph of small cutwidth.
More precisely, for parameters $w,k$, we consider the class of all graphs $G$, such that $w$ edges can be removed from $G$ to obtain a graph of cutwidth at most $k$.
We prove that for every constant $k$, the minimal (strong) immersion obstructions for this class have sizes bounded {\em{linearly}} in $w$.  Moreover we  give an exponential lower bound to the number of these obstructions. These results are presented in Section~\ref{sec:remddist}.

\subsection{Algorithmic results.} Consider the following ``compression'' problem: given a graph $G$ and its ordering $\sigma$ of width $\ell$, we would like to construct, if possible, a new ordering 
of the vertices of $G$ of width at most $k$, where $k<\ell$.
Then the types defined above essentially match states that would be associated with prefixes of $\sigma$ in a dynamic programming algorithm solving this problem. Alternatively, one can think 
of building an automaton that traverses the ordering $\sigma$ of width $\ell$ while constructing an ordering of $G$ of width at most $k$.
Hence, our upper bound on the number of types can be directly used to limit the state space in such a dynamic programming procedure/automaton, yielding an FPT algorithm for the above problem.

With this result in hand, it is not hard to design of an exact FPT algorithm for cutwidth. 
One could introduce vertices one by one to the graph, while maintaining an ordering of optimum width.
Each time a new vertex is introduced, we put it anywhere into the ordering, and it can be argued that the new ordering has width at most three times larger than the optimum.
Then, the dynamic programming algorithm sketched above can be used to ``compress'' this approximate ordering to an optimum one in linear FPT time.

The above approach yields a quadratic algorithm. To match the optimum, linear running time, 
 we use a similar trick as Bodlaender in his
linear-time algorithm for computing the treewidth of the graph~\cite{Bodlaender96}. Namely, we show that instead of processing vertices one by one, 
we can proceed recursively by removing a significant fraction of all the edges at each step, so that their reintroduction increases the width at most twice. 
We then run the compression algorithm on the obtained 2-approximate ordering to get an optimum one.
The main point is that, since we remove a large portion of the graph at each step, the recursive equation on the running time solves to a linear function, instead of quadratic. 
This gives the following.

\begin{theorem}\label{thm:algo}
There exists an algorithm that, given an $n$-vertex graph $G$ and an integer $k$, runs in time $2^{\Oh(k^2\log k)}\cdot n$ and either correctly concludes that the cutwidth of $G$ is larger than $k$, 
or outputs an ordering of $G$ of width at most $k$.
\end{theorem}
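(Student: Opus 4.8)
The plan is to combine two ingredients advertised in the introduction: a \emph{compression} subroutine and a \emph{recursive edge-removal} scheme à la Bodlaender. First I would isolate the compression lemma: given a graph $G$ on $n$ vertices together with an ordering $\sigma$ of width at most $\ell$, and a target $k$, there is an algorithm running in time $2^{\Oh((k+\ell)\log(k+\ell))}\cdot n$ (in our application $\ell=\Oh(k)$, so this is $2^{\Oh(k\log k)}\cdot n$) that either produces an ordering of width at most $k$ or reports that none exists. The proof of this lemma is exactly the dynamic programming / automaton over prefixes of $\sigma$ sketched in the introduction: the states are the \emph{bucket interfaces} (types), and the combinatorial heart is the bound that the sought optimum-width ordering can be assumed to split each prefix $X$ of $\sigma$ into $\Oh(k\ell)$ blocks, which caps the number of types by $2^{\Oh(k\ell\log(k\ell))}=2^{\Oh(k^2\log k)}$. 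Transitions when moving one vertex further along $\sigma$ are computable in time polynomial in the number of types, and one sweeps through $\sigma$ once; correctness follows from the linked-ordering replacement argument (two prefixes with equally strong type can be unpumped), which guarantees an optimum-width target ordering is captured by some path of states. Running time is (number of states)${}\times n\times\poly$, i.e.\ $2^{\Oh(k^2\log k)}\cdot n$.

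Next I would set up the recursion. Given $G$ with $m$ edges, if $G$ has more than, say, $ck$ edges per vertex on average for a suitable constant, then cutwidth exceeds $k$ and we stop; otherwise $m=\Oh(kn)$. The key structural claim is that one can find, in linear time, a set $F$ of edges with $|F|\ge \varepsilon m$ for some absolute constant $\varepsilon>0$ such that: (i) each connected component of $G-F$ has at most a constant fraction fewer edges (so recursion shrinks the instance), and (ii) given any ordering of $G-F$ of width $\le k$, one can reinsert the edges of $F$ to obtain an ordering of $G$ of width $\le 2k$ — or conclude $\ctw(G)>k$. Concretely, a natural choice is to delete edges incident to a constant fraction of low-degree vertices, or to remove a maximal matching / a sparse ``cut sparsifier'' so that every retained vertex still sees few deleted edges; reintroducing $F$ can then at most double the width because each retained vertex is hit by $\Oh(1)$ edges of $F$ on average and a standard local-surgery argument controls the increase. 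We recurse on $G-F$ to obtain (if possible) an ordering of width $\le k$, lift it to an ordering of $G$ of width $\le 2k$, and finally call the compression subroutine with $\ell=2k$ to either get width $\le k$ or certify $\ctw(G)>k$. Handling the several components of $G-F$ just means solving each independently and concatenating orderings.

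For the running time, let $T(m)$ denote the time on an instance with $m$ edges (and $n=\Oh(m/k)$ vertices, with $k$ fixed for the recursion). The recursion gives $T(m)\le T((1-\varepsilon)m)+2^{\Oh(k^2\log k)}\cdot m$, where the additive term bounds both the linear-time preprocessing (finding $F$, splitting into components, lifting the ordering) and the compression call. Since $\sum_i (1-\varepsilon)^i m$ is a geometric series, this solves to $T(m)=2^{\Oh(k^2\log k)}\cdot m=2^{\Oh(k^2\log k)}\cdot n$, as claimed. Throughout, whenever any intermediate instance is found to have cutwidth $>k$ (detected either by the edge-count test or by a failed compression call) we immediately output the correct ``no'' answer, which is valid because both $G-F$ having cutwidth $>k$ and the compression reporting infeasibility are certificates that $\ctw(G)>k$.

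\medskip
\noindent\textbf{Main obstacle.} The delicate point is the two-sided requirement on the edge set $F$: it must simultaneously be large enough (a constant fraction of the edges) to make the recursion bottom out in linearly many total edges, \emph{and} structured enough that reinserting it blows up the width by only a constant factor, \emph{and} this must be detectable in linear time without already knowing an optimum layout. The treewidth analogue in~\cite{Bodlaender96} removes a large set of low-degree vertices forming an independent set; adapting this to the edge/immersion setting — so that ``reinsertion increases width at most twice'' genuinely holds and the argument is local enough to run in linear time — is where the real work lies. The compression subroutine itself, by contrast, is essentially bookkeeping on top of the type machinery and linked-ordering exchange argument already developed for Theorem~\ref{thm:main}.
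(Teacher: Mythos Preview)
Your high-level scheme---recursive reduction that shrinks the edge set by a multiplicative factor, followed by compression to optimum width via dynamic programming over bucket interfaces on the lifted approximate ordering---matches the paper exactly, and your running-time recurrence is the right one. Two points deserve correction, one minor and one substantive.

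\textbf{Minor.} The correctness of the compression subroutine does \emph{not} rest on the linked-ordering/unpumping machinery used for Theorem~\ref{thm:main}. What is needed instead is a strengthening of Lemma~\ref{lem:num_blocks}: there exists a \emph{single} optimum ordering $\tau$ in which \emph{every} prefix of the given approximate ordering $\sigma$ has $\Oh(rk)$ blocks simultaneously (the paper's Lemma~\ref{lem:dpalgcor}). This is obtained by iterating the block-swapping argument of Lemma~\ref{lem:num_blocks} across the prefixes; no linkedness enters.

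\textbf{Substantive.} Your description of the reduction step is where the genuine gap lies, and you are right to flag it as the main obstacle. Neither of your suggestions (a maximal matching, or edges incident to many low-degree vertices) comes with the required ``width at most doubles upon reinsertion'' guarantee. The paper's construction (Lemma~\ref{lem:reduce}) is quite specific. First, dissolve all degree-$2$ vertices with distinct neighbours, so that every surviving degree-$2$ vertex sits on a multi-edge. Then split into two cases on the number of degree-$1$ vertices. If there are many, delete one leaf per vertex of $N(V_1)$; reinserting adds at most $+1$ to the width. If degree-$1$ vertices are scarce, greedily pack BFS balls of radius $4(k+1)$: in each ball disjoint from $V_1$, every vertex has degree $\geq 3$, so either the ball contains a cycle or its BFS tree is a perfect binary tree of height $2(k+1)$, certifying $\ctw>k$ outright. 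Removing one edge $e_C$ per cycle $C$ gives the set $F$; reinsertion at most doubles width because each $e_C$ is ``paid for'' by the surviving path $C\setminus e_C$, which already crosses every $\sigma$-cut that $e_C$ crosses. Finally, note that the fraction of edges removed is \emph{not} an absolute constant as you write: it is only $1/(2k+1)^{\Oh(k)}$. The geometric series still converges, but the extra $(2k+1)^{\Oh(k)}$ factor must be (and is) absorbed into the $2^{\Oh(k^2\log k)}$ overhead.
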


The algorithm of Theorem~\ref{thm:algo} has running time slightly larger than that of Thilikos, Bodlaender, and Serna~\cite{ThilikosSB05,ThilikosSB05a}. 
The difference is the $\log k$ factor in the exponent, the reason for which is that we use a simpler bucketing approach to bound the number of states, instead of the more entangled, 
but finer, machinery of typical sequences. 
We believe the main strength of our approach lies in its explanatory character. 
Instead of relying on algorithms for computing tree or path decompositions, which are already difficult by themselves, and then designing a dynamic programming algorithm on a path decomposition,
we directly approach cutwidth ``via cutwidth'', and not ``via pathwidth''. That is, the dynamic programming procedure for computing the optimum cutwidth ordering on an approximate cutwidth ordering 
is technically far simpler and conceptually more insightful than performing the same on a general path decomposition.
We also show that the ``reduction-by-a-large-fraction'' trick of Bodlaender~\cite{Bodlaender96} can be performed also in the cutwidth setting, yielding a self-contained, natural, and understandable algorithm.

\section{Preliminaries}\label{sec:prelims}

We denote the set of non-negative integers by $\NN$ and the set of positive integers by $\NN^+$.
For $r,s\in \NN$ with $r\leq s$, we denote $[r]=\{1,\ldots,r\}$ and $\intv{r}{s}=\{r,\ldots,s\}$.
Notice that $[0]=\emptyset$.

\paragraph{Graphs.}
All graphs considered in this paper are undirected, without loops, and may have multiple edges.
The vertex and edge sets of a graph $G$ are denoted by $V(G)$ and $E(G)$, respectively.
For disjoint $X,Y\subseteq V(G)$, by $E_G(X,Y)$ we denote the set of edges of $G$ with one endpoint in $X$ and one in $Y$.
If $S\subseteq V(G)$, then we denote $\delta_{G}(S)=|E_G(S,V(G)\setminus S)|$. 
We drop the subscript if it is clear from the context.
Every partition $(A,B)$ of $V(G)$ is called a {\em{cut of $G$}}; the {\em{size}} of the cut $(A,B)$ is $\delta(A)$.

%

\paragraph{Cutwidth.}
Let $G$ be a graph and $\sigma$ be an ordering of $V(G)$.
For $u,v\in V(G)$, we write $u<_{\sigma} v$ if $u$ appears before $v$ in $\sigma$.
Given two disjoint sequences $\sigma_{1}=\langle x_1,\ldots,x_{r_{1}}\rangle$ 
and $\sigma_{2}=\langle y_1,\ldots,y_{r_{2}}\rangle$ of vertices in $V(G)$, we define their {\em{concatenation}} as $\sigma_1\circ \sigma_{2}=\langle x_1,\ldots,x_{r_{1}},y_1,\ldots,y_{r_{2}}\rangle$.
For $X\subseteq V(G)$, let $\sigma_X$ be the ordering of $X$ induced by $\sigma$, i.e., the 
ordering obtained from $\sigma$ if we remove the vertices that do not belong in $X$.
For a vertex $v$ we denote by $V_{v}^{\sigma}$ the set $\{u\in V(G) \mid u \leq_{\sigma} v\}$.
A \emph{$\sigma$-cut} is any cut of the form $(V^\sigma_v,V(G) \setminus V^\sigma_v)$ for $v\in V(G)$.
The {\em cutwidth of an ordering $\sigma$ of $G$} is defined as $\ctw_{\sigma}(G) = \max_{v \in V(G)} \delta(V^{\sigma}_{v})$. The {\em cutwidth of $G$}, $\ctw(G)$, is the minimum of $\ctw_{\sigma}(G)$ over
all possible orderings of $V(G)$. 


\paragraph{Obstructions.}
Let $\leq $ be a partial order on graphs. 
We say that $G'\lneqq G$ if $G'\leq G$ and $G'$ is not isomorphic to $G$.
A graph class ${\cal G}$ is {\em{closed under $\leq$}} if whenever $G'\leq G$ and $G\in {\cal G}$, we also have that $G'\in {\cal G}$.
Given a partial order $\leq$
and a graph class ${\cal G}$ closed under $\leq$, we define the {\em{(minimal) obstruction
set}} of ${\cal G}$ w.r.t. $\leq $, denoted by ${\bf obs}_{\leq}({\cal G})$, as the set containing all graphs 
where the following two conditions hold: 

\begin{itemize}
\item[]\hspace{-.97cm} {\bf O1}: $G\not\in {\cal G}$, i.e., $G$ is not a member of ${\cal G}$, and 

\item[]\hspace{-.97cm} {\bf O2}: for each $G'$ with $G'\lneqq G$, we have that $G'\in{\cal G}$.
\end{itemize}

 We say that a set of graphs ${\cal H}$ is a {\em  $\leq$-antichain} if  it does not contain any  pair of comparable elements wrt. $\leq$. By definition, for any class 
 ${\cal G}$ closed under $\leq$, the set ${\bf obs}_{\leq}({\cal G})$ is an antichain. 
\looseness=-1

\paragraph{Immersions.}
Let $H$ and $G$ be graphs. We say that $G$ contains $H$ as an
\emph{immersion} if there is a pair of functions $(\phi, \psi)$, called
an $H$-\emph{immersion model of $G$}, such that $\phi$ is an injection from $V(H)$ to $V(G)$ and
$\psi$ maps every edge $uv$ of $H$ to a path of $G$ between
$\phi(u)$ and $\phi(v)$ so that
different edges are mapped to edge-disjoint paths.
Every vertex in the image of $\phi$ is called a \emph{branch vertex}. 
If we additionally demand that no internal vertex of a path in $\psi(E(H))$ is a branch vertex, then we say that  $(\phi, \psi)$
is a {\em strong $H$-immersion model} and $H$ is a 
{\em strong immersion} of $G$. We denote by $H\leq_{\rm i} G$ ($H\leq_{\rm si} G$) the fact that $H$ is an immersion (strong immersion) of $G$; these are partial orders.
Clearly, for any two graphs $H$ and $G$, if $H\leq_{\rm si}G$ then $H\leq_{\rm i}G$. This  implies the following observation:

\begin{observation}
\label{osesimme}
If ${\cal G}$ is a graph class closed under $\leq_{\rm i}$, 
then  ${\bf obs}_{\leq_{\rm i}}({\cal G})\subseteq  {\bf obs}_{\leq_{\rm si}}({\cal G})$.
\end{observation}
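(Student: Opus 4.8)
The plan is to verify that every $G\in{\bf obs}_{\leq_{\rm i}}({\cal G})$ also satisfies the two defining conditions {\bf O1} and {\bf O2} of a minimal obstruction with respect to $\leq_{\rm si}$. First I would observe that the right-hand side of the claimed inclusion is even well defined: since $\leq_{\rm si}$ refines $\leq_{\rm i}$ (as already noted, $H\leq_{\rm si}G$ implies $H\leq_{\rm i}G$), any class closed under $\leq_{\rm i}$ is in particular closed under $\leq_{\rm si}$, so ${\bf obs}_{\leq_{\rm si}}({\cal G})$ makes sense under the hypothesis of the statement.

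Condition {\bf O1}, namely $G\notin{\cal G}$, is literally the same requirement for both orders, so there is nothing to do there. For {\bf O2} with respect to $\leq_{\rm si}$, I would take an arbitrary $G'$ with $G'\lneqq_{\rm si} G$ and argue that $G'\in{\cal G}$. Indeed, $G'\leq_{\rm si}G$ gives $G'\leq_{\rm i}G$, and since $G'$ is not isomorphic to $G$ we obtain $G'\lneqq_{\rm i}G$; now {\bf O2} for $G$ with respect to $\leq_{\rm i}$ yields $G'\in{\cal G}$, as required. Combining this with {\bf O1}, the graph $G$ meets both conditions for $\leq_{\rm si}$, so $G\in{\bf obs}_{\leq_{\rm si}}({\cal G})$.

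I do not expect any genuine obstacle here: the entire argument is a direct consequence of the set-theoretic inclusion of the relation $\leq_{\rm si}$ in $\leq_{\rm i}$. The only minor point to keep track of is that the strict versions $\lneqq_{\rm si}$ and $\lneqq_{\rm i}$ are defined via non-isomorphism, so one must carry the ``not isomorphic to $G$'' clause along when weakening a strong immersion to an immersion — which is immediate, since the graph $G'$ itself does not change in the process.
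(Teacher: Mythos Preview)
Your proof is correct and is exactly the argument the paper has in mind: the observation is stated immediately after the sentence ``if $H\leq_{\rm si}G$ then $H\leq_{\rm i}G$'' and is left without further justification, so you have simply spelled out the straightforward verification of {\bf O1} and {\bf O2} that the paper omits.
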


Robertson and Seymour proved in~\cite{RobertsonS10} that every 
$\leq_{\rm i}$-antichain is finite and conjectured the same for $\leq_{\rm si}$.
It is well-known that for every $k\in \mathbb{N}$, the class ${\cal C}_{k}$ of graphs of cutwidth at most $k$ is closed under immersions.
It follows from the results of~\cite{RobertsonS10} that  $\obs_{\leq_{\rm i}}(\C_{k})$ is finite; the goal of this paper is to provide good estimates on the sizes of graphs in $\obs_{\leq_{\rm si}}({\cal C}_{k})$.
As the cutwidth of a graphs is the maximum cutwidth of its connected components, it follows that graphs in $\obs_{\leq_{\rm si}}(\C_{k})$ are connected.
Moreover, every graph in $\obs_{\leq_{\rm si}}(\C_{k})$ has cutwidth exactly $k+1$, because the removal of any of its edges decreases its cutwidth to at most $k$.

\section{Bucket interfaces}\label{sec:interfaces}

Let $G$ be a graph and $\sigma$ be an ordering of $V(G)$. 
For a set $X\subseteq V(G)$, the {\em{$X$-blocks}} in $\sigma$ are the maximal subsequences of consecutive vertices of $\sigma$ that belong to $X$.
Suppose $(A,B)$ is a cut of $G$. Then we can write
 $\sigma = b_1\circ\ldots\circ b_p,$
where $b_1,\ldots,b_p$ are the $A$- and $B$-blocks in $\sigma$; these will be called jointly {\em{$(A,B)$-blocks}}.
%
The next lemma is the cornerstone of our approach: we prove that given a graph $G$ and a cut $(A,B)$ of $G$, 
there exists an optimum cutwidth ordering of $G$ where number of blocks depends only on the cutwidth and the size of $(A,B)$.


\begin{lemma}\label{lem:num_blocks}
Let $\ell\in \mathbb{N}^+$ and $G$ be a graph. If $(A,B)$ is a cut of $G$ of size $\ell$, 
then there is an optimum cutwidth ordering $\sigma$ of $V(G)$ with 
at most $(2\ell+1) \cdot (2\ctw(G)+3)+2\ell$ $(A,B)$-blocks.
\end{lemma}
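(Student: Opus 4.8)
The plan is to start from an arbitrary optimum-width ordering $\sigma_0$ of $G$ and repeatedly perform local swaps that merge blocks, while never increasing the width, until the number of $(A,B)$-blocks drops below the claimed bound. The key quantity to track is, for each $\sigma$-cut $(V^\sigma_v, V(G)\setminus V^\sigma_v)$, how the $\ell$ edges of $E_G(A,B)$ are split by it: namely, how many of them lie entirely on the left, entirely on the right, or cross. As we sweep a pointer through $\sigma$ from left to right, the ``number of $E_G(A,B)$-edges fully to the left'' is a non-decreasing function going from $0$ to $\ell$, so it changes value at most $\ell$ times; likewise for the cross-count. This partitions the positions of $\sigma$ into at most $2\ell+1$ maximal \emph{intervals} inside which the interaction of the prefix with the cut $(A,B)$ is ``constant'' in the relevant sense. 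The heart of the argument is to show that within a single such interval one can reorder the vertices so that all the $A$-vertices come before all the $B$-vertices (or vice versa), producing at most $2$ blocks per interval, without increasing $\ctw_\sigma(G)$; together with a crude additive $O(\ell)$ slack for boundary effects this yields the stated bound $(2\ell+1)(2\ctw(G)+3)+2\ell$.

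Concretely, I would first fix notation: write $\sigma = b_1\circ\dots\circ b_p$ as in the paragraph preceding the lemma, and for a position in $\sigma$ define $L$, $R$, $C$ to be the numbers of $E_G(A,B)$-edges fully left, fully right, and crossing that position. Observe $L+R+C=\ell$ and that $L$ is non-decreasing and $R$ non-increasing along $\sigma$; declare a \emph{breakpoint} at any vertex $v$ such that $L$ or $R$ strictly increases/decreases when passing $v$. There are at most $2\ell$ breakpoints, hence at most $2\ell+1$ intervals $I_1,\dots,I_q$ of consecutive vertices between them. The second step is the local rearrangement: inside one interval $I_j$, take the subordering $\sigma|_{I_j}$ and replace it by $(\sigma|_{I_j})_A \circ (\sigma|_{I_j})_B$, i.e. push all of $A\cap I_j$ to the front of the interval. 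I must check this does not raise the width. For a cut located \emph{between} intervals, nothing inside $I_j$ is moved across it, so such cuts are unaffected in size. For a cut located \emph{inside} $I_j$, the relevant count is: its size is $\delta$ of the prefix, and moving a vertex of $A$ leftward past a vertex of $B$ within $I_j$ changes the cut sizes at intermediate positions only by the edges incident to those two vertices — here one uses that across $I_j$ no $E_G(A,B)$-edge starts or finishes, so the contribution of $E_G(A,B)$ to any in-$I_j$ cut is the constant $C$ on that interval, and the genuinely movable edges are those within $G[A]$, within $G[B]$, and from $I_j$ to the outside. A cleaner way to phrase the swap is via an exchange argument: repeatedly find an adjacent pair $\dots y x \dots$ with $x\in A$, $y\in B$ inside $I_j$ and swap them; each swap decreases the number of inversions, terminates, and one shows the maximum cut size over positions between $y$ and $x$ does not increase because the sum of the two cut-size changes telescopes and is governed by $\deg(x)$ versus $\deg(y)$ in a balanced way — the standard typical-sequences-style accounting. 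I would make this precise by comparing, for each intermediate cut, its size before and after one elementary swap and showing the new value is at most the max of the two old neighboring values, which are $\le \ctw_\sigma(G)=\ctw(G)$.

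The third step is the bookkeeping. After processing every interval, each interval contributes at most $2$ blocks, so we get at most $2(2\ell+1)$ blocks — this is much better than claimed, but the subtlety is that the rearrangement inside one interval can in principle interact with the boundary vertices and with how breakpoints were chosen; to be safe I would allow the breakpoint set to also include a bounded number of ``auxiliary'' vertices (e.g. the endpoints of the $E_G(A,B)$-edges that are neither in $A$... wait, they are all in $A\cup B$), and absorb any off-by-constant effects into the factor $(2\ctw(G)+3)$ and the additive $2\ell$. In fact the factor $2\ctw(G)+3$ strongly suggests the intended argument refines intervals further using the cut \emph{sizes}: within an interval of constant $(L,R,C)$-type one additionally cuts at the $\le 2\ctw(G)+2$ positions where the cut size attains a local extremum or changes monotonicity — i.e. one applies a typical-sequence-style normalization to the sequence of cut sizes, which has values in $[0,\ctw(G)]$ — and only then argues that each resulting sub-sub-interval can be made into $O(1)$ blocks. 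I expect the main obstacle to be exactly this: proving that the swap inside a refined interval never creates a cut of size exceeding $\ctw(G)$, since a naive swap of $x\in A$ to the left can momentarily inflate an intermediate cut. The fix — and the technical crux — is to be clever about \emph{which} order to move vertices in (moving the cut-size sequence to its ``typical'' shape first, then sorting $A$ before $B$ only within plateaus), so that every intermediate cut size stays sandwiched between two already-bounded values. Once that local lemma is established, summing $2$ (or $O(1)$) blocks over the $(2\ell+1)(2\ctw(G)+3)$ sub-intervals plus the $2\ell$ slack finishes the proof.
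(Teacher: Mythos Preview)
Your high-level partition into $2\ell+1$ intervals determined by the $E_G(A,B)$-edge endpoints, and your instinct that within each interval the cut-size profile should have a ``down-then-up'' shape bounded by $\ctw(G)$, match the paper's structure. But there is a genuine gap at exactly the point you yourself flag as ``the main obstacle'': you never establish that rearranging vertices inside an interval --- pushing $A$-vertices before $B$-vertices via adjacent swaps --- preserves optimum width. The claim that after an adjacent swap $yx\mapsto xy$ ``the new value is at most the max of the two old neighboring values'' is false in general: the single cut between $x$ and $y$ can strictly exceed both flanking cuts (there may be many $G[A]$- or $G[B]$-edges between $x$ and the rest of the interval). Your fallback, ``normalize the cut-size sequence to its typical shape first, then sort $A$ before $B$ within plateaus'', is not worked out and is circular as stated: you need the shape result to justify the swaps, but the swaps are what is supposed to produce the shape.

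The paper avoids all of this with two moves you are missing. First, it argues \emph{extremally}: take an optimum-width ordering that already minimizes the number of $(A,B)$-blocks, and derive a contradiction if there are too many, rather than constructing a good ordering by local improvements. Second --- and this is the decisive technical point --- it swaps whole adjacent \emph{blocks} $b_j,b_{j+1}$, not individual vertices. A block is declared \emph{free} if it contains no endpoint of any edge in $E_G(A,B)$; since $b_j$ and $b_{j+1}$ lie on opposite sides of $(A,B)$ and both are free, there are \emph{no edges whatsoever} between $b_j$ and $b_{j+1}$. This makes the swap analysis clean: letting $\mu(j)$ be the cut size between $b_j$ and $b_{j+1}$, one checks directly that if $\mu(j)\geq\max(\mu(j-1),\mu(j+1))$ then swapping $b_j$ with $b_{j+1}$ does not increase any cut, contradicting minimality of the block count. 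Hence $\mu$ is strictly decreasing then strictly increasing along every maximal run of free blocks, so each run has length at most $2\ctw(G)+3$; with at most $2\ell$ non-free blocks this gives the bound. Your vertex-level swaps generally have edges between the swapped items, which is precisely why your accounting breaks down.
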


\begin{proof}
	Let $\sigma$ be an optimum cutwidth ordering such that, subject to the width being minimum, the number of $(A,B)$-blocks it defines is also minimized.
	Let $\sigma = b_1 \circ b_2 \circ \dots \circ b_r$, where $b_1, b_2, \dots, b_r$ are the $(A,B)$-blocks of $\sigma$.
	If $\sigma$ defines less than three blocks, then the claim already follows, so let us assume $r \geq 3$.
	
	Consider any ordering $\sigma'$ obtained by swapping two blocks, i.e.,  $\sigma' = b_1 \circ \dots \circ b_{j-1} \circ b_{j+1} \circ b_j \circ b_{j+2} \dots b_r$, for some $j \in [r-1]$.
	Observe that since the blocks $b_1, \dots, b_r$ alternate as $A$-blocks and $B$-blocks, the ordering $\sigma'$ 	has a strictly smaller number of blocks;
	indeed, either $j-1 \geq 1$, in which case $b_{j-1} \circ b_{j+1}$ defines a single block of $\sigma'$, or $j=1$ and hence $j+2 \leq r$, in which case $b_j \circ b_{j+2}$ does.
	Therefore, by choice of $\sigma$, for each $j\in [r-1]$, swapping $b_j$ and $b_{j+1}$ in $\sigma$ must yield an ordering with strictly larger cutwidth.
	
	We call a block \emph{free} if it does not contain any endpoint of the cut edges $E_G(A,B)$.
We  now prove that any sequence of consecutive free blocks in $\sigma$ has at most $2\ctw(G)+3$ blocks.
Since the cut $(A,B)$ has size $\ell$, there are at most $2\ell$ blocks that are not free. 
This implies the claimed bound on the total number of all blocks in $\sigma$.

Suppose, to the contrary, that there exists a sequence of $q>2\ctw(G)+3$ consecutive free blocks in $\sigma$. Let these blocks be
$b_r,b_{r+1},\ldots,b_{s}$, where $s-r+1=q$.
For $j \in [r,s-1]$, we define $\mu(j)$ to be the size of the cut between all vertices inside or preceding the vertices of block $b_j$ and all vertices inside or following the vertices
of block $b_{j+1}$ in $\sigma$; see Figure~\ref{fig:num-blocks}.

\begin{claim}\label{clm:clm1}
For all $j\in [r+1,\dots,s-2]$, we have that $\mu(j-1) > \mu(j)$ or $\mu(j) < \mu(j+1)$. 
\end{claim}
\begin{proof}
Suppose that for some $j\in [r+1,s-2]$, 
$\mu(j) \geq \max(\mu(j-1),\mu(j+1))$. 
We will then show that the ordering $\sigma'$ obtained by swapping the blocks $b_{j}$ and $b_{j+1}$ still has optimum cutwidth, a contradiction to the choice of $\sigma$.
Notice that for every vertex $v$ preceding all vertices of $b_{j}$ or succeeding all vertices of $b_{j+1}$, $\delta(V^{\sigma'}_{v})=\delta(V^{\sigma}_{v})$.
Thus, it remains to show that for any vertex $v$ belonging to the block $b_{j}$ or to the block $b_{j+1}$, also $\delta(V^{\sigma'}_{v})\leq \delta(V^{\sigma}_{v})$. 

Let $p_{j}$  be the number of edges of $G$ with one endpoint in the block $b_{j}$ and the other endpoint preceding (in $\sigma$) all vertices of $b_{j}$.
Let also $s_{j}$ be the number of edges of $G$ with one endpoint in $b_j$ and the other endpoint succeeding (in $\sigma$) all vertices of $b_{j}$ (and hence succeeding all vertices of block $b_{j+1}$, since both $b_{j}$ and $b_{j+1}$ are free).
Notice that $\mu(j)=\mu(j-1) - p_{j} + s_{j}$ and recall that $\mu(j)\geq\mu(j-1)$. This yields that $s_{j} \geq p_{j}.$

Similarly, let $p_{j+1}$  be the number of edges of $G$ with one endpoint in $b_{j+1}$ and the other endpoint preceding all vertices of the block $b_{j+1}$ 
(and, in particular, all vertices of block $b_{j}$).
Let also $s_{j+1}$ be the number of edges of $G$ with one endpoint in $b_{j+1}$ and the other endpoint succeeding all vertices of block $b_{j+1}$. Again, we have $\mu(j+1)=\mu(j) - p_{j+1} + s_{j+1}$ and $\mu(j)\geq\mu(j+1)$. This yields that $p_{j+1} \geq s_{j+1}.$

Let $v$ be a vertex of the block $b_{j}$.
Recall that the blocks $b_{j}$ and $b_{j+1}$ are free and thus, there are no edges between them.
Observe then that $\delta(V^{\sigma'}_{v})=\delta(V^{\sigma}_{v})+s_{j+1}-p_{j+1}\leq \delta(V^{\sigma}_{v})$.
Symmetrically, for any vertex $v$ in $b_{j+1}$, observe that $\delta(V^{\sigma'}_{v})=\delta(V^{\sigma}_{v})+p_{j}-s_{j}\leq \delta(V^{\sigma}_{v})$.
Thus, $\ctw_{\sigma'}(G)\leq \ctw_{\sigma}(G)=\ctw(G)$, a contradiction.
\cqed\end{proof}
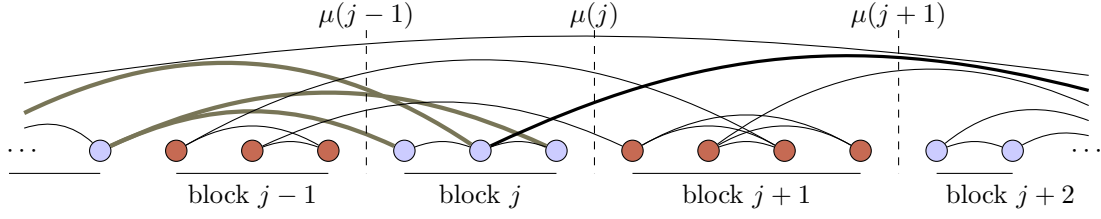
\begin{figure}[t]
	\centering

\begin{tikzpicture}[scale = 1]
	\node at (0,0) {$\cdots$};
	\node[A] (v1) at (1,0) {};
	\draw (-0.2,-0.3) -- (1,-0.3);
	\node[B] (v2) at (2,0) {};
	\node[B] (v3) at (3,0) {};
	\node[B] (v4) at (4,0) {};
	\draw (2,-0.3) -- (4,-0.3);
	\node[A] (v5) at (5,0) {};
	\node[A] (v6) at (6,0) {};
	\node[A] (v7) at (7,0) {};
	\draw (5,-0.3) -- (7,-0.3);
	\node[B] (v8) at (8,0) {};
	\node[B] (v9) at (9,0) {};
	\node[B] (v10) at (10,0) {};
	\node[B] (v11) at (11,0) {};
	\draw (8,-0.3) -- (11,-0.3);
	\node[A] (v12) at (12,0) {};
	\node[A] (v13) at (13,0) {};
	\draw (12,-0.3) -- (13,-0.3);
	\node at (14,0) {$\cdots$};

	\draw (0,0.9) to [bend left=8] (14,1.0);

	\draw (0,0.3) to[bend left] (v1);
	\draw (v2) to[bend left] (v4);
	\draw (v3) to[bend left=20] (v4);

	\draw[yellow!35!black,ultra thick] (0,0.5) to [bend left] (v6);
	\draw[yellow!35!black,ultra thick] (v1) to[bend left=25] (v5);
	\draw[yellow!35!black,ultra thick] (v1) to[bend left=25] (v7);
	\draw (v3) to[bend left=25] (v8);

	\draw (v5) to [bend left=20] (v6);
	\draw (v6) to [bend left=20] (v7);

	\draw[very thick] (v6) to [bend left=20] (14, 0.8);
	\draw (v2) to [bend left] (v10);

	\draw (v8) to [bend left] (v10);
	\draw (v9) to [bend left=20] (v10);
	\draw (v9) to [bend left] (v11);
	\draw (v8) to [bend left] (v11);
	\draw (v9) to [bend left] (14,0.6);

	\draw (v13) to [bend left=20] (14,0.2);

	\draw (v12) to [bend left=20] (v13);
	\draw (v12) to [bend left] (14,0.4);

	\node at (3,-0.6) {block $j-1$};
	\node at (4.5,1.8) {$\mu(j-1)$};
	\draw[dashed] (4.5,1.6)--(4.5,-0.3);
	\node at (6,-0.6) {block $j$};
	\node at (7.5,1.8) {$\mu(j)$};
	\draw[dashed] (7.5,1.6)--(7.5,-0.3);
	\node at (9.5,-0.6) {block $j+1$};
	\node at (11.5,1.8) {$\mu(j+1)$};
	\draw[dashed] (11.5,1.6)--(11.5,-0.3);
	\node at (12.5,-0.6) {~~~~~~~~block $j+2$};
\end{tikzpicture}

	\caption{A cut $(A,B)$ is highlighted (blue, red), with the corresponding blocks underlined and cuts between them marked with dashed lines. 
	Edges counted as $p_{j}$ and $s_{j}$ are thickened.}
\label{fig:num-blocks}
\end{figure}

Claim~\ref{clm:clm1} shows that for all $j\in [r+1,s-2]$, we have $\mu(j-1) > \mu(j)$ or $\mu(j) < \mu(j+1)$. It follows that any non-decreasing pair $\mu(j-1)\leq \mu(j)$ 
must be followed by an increasing pair $\mu(j) < \mu(j+1)$. Hence, if $j_{\min}$ is the minimum index such that $\mu(j_{\min})\leq \mu(j_{\min}+1)$, 
then the sequence $\mu(j)$ has to be strictly decreasing up to $j_{\min}$ and strictly increasing from $j_{\min}+1$ onward. Since $\mu(j) \leq \ctw(G)$ for all $j$, 
the length $q$ of the sequence of consecutive free blocks cannot be longer than $2\ctw(G)+3$ in total, concluding the proof.
\end{proof}

We use the above lemma to bound the number of ``types'' of prefixes in graph orderings.
To describe such a prefix, i.e., one side of a cut in a graph, we use the following definition.

\begin{definition}\label{def:boundaried-graph}
	A \emph{$k$-boundaried graph} is a pair $\mathbf{G}=(G,\bar{x})$ where $G$ is a graph and $\bar{x}=(x_{1},\dots,x_k)$ is a $k$-tuple of the graph's {\em{boundary vertices}} (ordered, not necessarily distinct).
	The \emph{extension} of $\mathbf{G}$ is the graph $G^{*}$ obtained from $G$ by adding $k$ new vertices $x_{1}',\dots,x_k'$ and edges $x_{1} x_{1}', \dots, x_k x_k'$.
	The \emph{join} $\mathbf{A} \oplus \mathbf{B}$ of two $k$-boundaried graphs  $\mathbf{A} = (A, \bar{x}), \mathbf{B}=(B,\bar{y})$ is the graph obtained from the disjoint union of $A$ and $B$ by adding an edge $x_{i} y_{i}$ for  $i\in[k]$.
\end{definition}

From Lemma~\ref{lem:num_blocks} we derive that for any given cut $(A,B)$ of size $\ell$ of a graph $G$ with $\ctw(G)\leq k$, there is an optimum cutwidth ordering in which 
the vertices of $A$ occur in $\Oh(k\ell)$ blocks. 
Our next goal is to show that the only information about $A$ that can affect the cutwidth of $G$ is: the placing of the endpoints of each cutedge ($x_{i}$ and $x_{i}'$) into blocks, and 
the cutwidth of each block (as an induced subgraph of $A$ or $A^{*}$).
%
Recall that for an ordering $\sigma$ of $V(G)$, \emph{$\sigma$-cuts} are cuts of the form $(V^\sigma_v, V(G)\setminus V^\sigma_v)$, for $v \in V(G)$.

\begin{definition}
	Let $G$ be a graph and $\sigma$ be an ordering of its vertices. An \emph{$\ell$-bucketing} of $\sigma$ is a function 
	$T\colon V(G) \to [\ell]$ such that $T(u)\leq T(v)$ for any $u$ appearing before $v$ in $\sigma$. 
	For every $i\in [\ell]$, the set $T^{-1}(i)$ will be called a {\em bucket}; a bucket is naturally ordered by $\sigma$. 
	For every bucket $T^{-1}(i)$, $i \in [\ell]$, let $\cuts(G,\sigma,T,i)$ be the family of $\sigma$-cuts containing on one side all vertices of buckets appearing before 
	$i$ and a prefix (in $\sigma$) of the $i$-th bucket.
	For an ordering $\sigma$ of the vertices of a graph $G$, define the \emph{width} of the bucket $i$, $i\in[\ell]$, as the maximum width of any cut in the family $\cuts(G,\sigma,T,i)$.
	Formally,
	\begin{eqnarray*}
	\cuts(G,\sigma,T,i) & = & \left\{ \left(T^{-1}([1,i-1])\ \cup\ L,\quad R\ \cup\ T^{-1}([i+1,\ell])\right) \colon \right. \\
	& &~~\left. (L,R)\mbox{ is a }\sigma\mbox{-cut of }T^{-1}(i)\right\},\\
	\width(G,\sigma,T,i) & = & \max \left\{\, \left|E_G(L,R)\right|\ \colon\ (L,R) \in \cuts(G,\sigma,T,i)\, \right\}.
	\end{eqnarray*}
\end{definition}

\noindent
Notice that every $\sigma$-cut of $G$ is in $\cuts(G,\sigma,T,i)$ for at least one bucket $i \in [\ell]$;
since $\ctw_\sigma(G)$ is the maximum of $\left|E_G(L,R)\right|$ over $\sigma$-cuts $(L,R)$, we have
\begin{equation}\label{eq:cwdthsgm}\ctw_\sigma(G) = \max_{i\in[\ell]}\ \width(G,\sigma,T,i).\end{equation}
For two $k$-boundaried graphs $\mathbf{A}=(A,\bar{x}),\mathbf{B}=(B,\bar{y})$, we slightly abuse notation and understand the edges $x_{1}x_{1}',\dots,x_kx_k'$ in $A^{*}$ to be the same as $y_{1}'y_{1},\dots,y_k'y_k$ in $B^{*}$ and as $x_{1} y_{1},\dots, x_k y_k$ in $\mathbf{A}\oplus\mathbf{B}$.
That is, for an ordering $\sigma$ of $\mathbf{A}\oplus\mathbf{B}$ with $\ell$-bucketing $T$,
we define $T|_{A^{*}}(v)$ to be $T(v)$ for $v \in V(A)$ and $T(y_{i})$ for $v=x_{i}'$.
We define $\sigma|_{A^{*}}$ as an ordering that orders $x_{i}'$ just as $\sigma$ orders $y_{i}$, with the order between $x_{i}'$ and $x_j'$ chosen arbitrarily when $y_{i}=y_j$.
The following lemma shows that if an $\ell$-bucketing respects the sides of a cut, then the width of any bucket can be computed as the sum of contributions of the sides.

%

\begin{lemma}\label{lem:width_by_parts}
	Let $k,\ell$ be positive integers and $\mathbf{A}=(A,\bar{x}),\mathbf{B}=(B,\bar{y})$ be two $k$-boundaried graphs.
	Let also $\sigma$ be a vertex ordering of $\mathbf{A}\oplus\mathbf{B}$ with $\ell$-bucketing $T$.
	If $T^{-1}(i)$ does not contain any vertex of $A$, for some $i\in [\ell]$,
	that is, $T^{-1}(i) \cap V(A) = \emptyset$, 
	then it holds that
	$\width(\mathbf{A}\oplus \mathbf{B}, \sigma, T, i) = 
	\width(A, \sigma|_{A}, T|_{A}, i) + \width(B^{*}, \sigma|_{B^{*}}, T|_{B^{*}}, i)$.
\end{lemma}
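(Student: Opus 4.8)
The plan is to exhibit a bijection between the cut family $\cuts(\mathbf{A}\oplus\mathbf{B},\sigma,T,i)$ and the cut family $\cuts(B^{*},\sigma|_{B^{*}},T|_{B^{*}},i)$ under which the size of every cut changes by one and the same additive constant, namely $\width(A,\sigma|_{A},T|_{A},i)$. Since shifting all elements of a finite set of integers by a constant shifts its maximum by that constant, the claimed equality of widths follows at once. The heart of the argument is therefore (a) checking that the two families are parametrized by the same index set, and (b) comparing corresponding cuts edge by edge.

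First I would exploit the bucketing hypothesis. Because $T$ is a bucketing, all vertices of bucket $j$ precede all vertices of bucket $j'$ in $\sigma$ whenever $j<j'$. The assumption $T^{-1}(i)\cap V(A)=\emptyset$ gives $T^{-1}(i)\subseteq V(B)$ and, in particular, $T(x_{j})\ne i$ for every $j\in[k]$, so each $x_{j}$ satisfies $T(x_{j})<i$ or $T(x_{j})>i$. Two consequences: on the $A$-side the bucket $i$ is empty, so $\cuts(A,\sigma|_{A},T|_{A},i)$ is the single cut $\big((T|_{A})^{-1}([1,i-1]),\,(T|_{A})^{-1}([i+1,\ell])\big)$, whose size is by definition $\width(A,\sigma|_{A},T|_{A},i)$; and on the $B^{*}$-side no pendant $y_{j}'$ lands in bucket $i$, so $(T|_{B^{*}})^{-1}(i)=T^{-1}(i)$ as sets, and $\sigma|_{B^{*}}$ restricted to this set coincides with $\sigma$. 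Hence a prefix $L$ of the bucket $T^{-1}(i)$ (ordered by $\sigma$) simultaneously parametrizes a cut of $\mathbf{A}\oplus\mathbf{B}$ and a cut of $B^{*}$, and as $L$ ranges over all such prefixes we obtain precisely the two cut families in question.

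Next, fix a prefix $L\subseteq T^{-1}(i)$, set $R=T^{-1}(i)\setminus L$, and let $(L',R')$ and $(L_{B}',R_{B}')$ be the induced cuts of $\mathbf{A}\oplus\mathbf{B}$ and of $B^{*}$ respectively. I would partition $E(\mathbf{A}\oplus\mathbf{B})$ into the $A$-edges, the $B$-edges, and the join edges $x_{j}y_{j}$, and identify, as the paper does, the edge $x_{j}y_{j}$ with the pendant edge $y_{j}y_{j}'$ of $B^{*}$; thus $E(B^{*})$ matches the $B$-edges together with the join edges. For the $A$-edges: both endpoints lie in $V(A)$, and $L'\cap V(A)=(T|_{A})^{-1}([1,i-1])$, $R'\cap V(A)=(T|_{A})^{-1}([i+1,\ell])$ regardless of $L$, so exactly $\width(A,\sigma|_{A},T|_{A},i)$ of them cross $(L',R')$. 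For the $B$-edges: $L'\cap V(B)=L_{B}'\cap V(B)$ and $R'\cap V(B)=R_{B}'\cap V(B)$, so such an edge crosses $(L',R')$ iff it crosses $(L_{B}',R_{B}')$. For a join edge $x_{j}y_{j}$: the endpoint $y_{j}\in V(B)$ sits on the same side in both cuts by the previous point, while $x_{j}$ and the pendant $y_{j}'$ sit on the same side because $T(x_{j})<i$ places both in the left part and $T(x_{j})>i$ places both in the right part; hence $x_{j}y_{j}$ crosses $(L',R')$ iff $y_{j}y_{j}'$ crosses $(L_{B}',R_{B}')$. Summing the three contributions yields $|E_{\mathbf{A}\oplus\mathbf{B}}(L',R')|=\width(A,\sigma|_{A},T|_{A},i)+|E_{B^{*}}(L_{B}',R_{B}')|$. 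Taking the maximum over all prefixes $L$ — the same index set for both families — and pulling the constant $\width(A,\sigma|_{A},T|_{A},i)$ out of the maximum gives the identity in the statement.

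I do not expect a genuine difficulty here; the two points that require care are (i) making explicit that the hypothesis forces $T(x_{j})\ne i$ for all $j$, which is exactly what makes bucket $i$ the same $\sigma$-ordered set in $\mathbf{A}\oplus\mathbf{B}$ and in $B^{*}$, and hence makes the two cut families parametrized identically, and (ii) organizing the three-way edge partition together with the identification $x_{j}y_{j}\equiv y_{j}y_{j}'$ so that no edge is double-counted or dropped. The degenerate subcase in which bucket $i$ of $A$ is empty also needs the convention that such a bucket still contributes its single ``trivial'' cut, which I would state at the outset.
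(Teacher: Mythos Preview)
Your proposal is correct and takes essentially the same approach as the paper: both arguments partition the edges of $\mathbf{A}\oplus\mathbf{B}$ into $A$-edges, $B$-edges, and join edges (identified with the pendant edges of $B^{*}$), and both use that the hypothesis $T^{-1}(i)\cap V(A)=\emptyset$ forces bucket $i$ on the $A$-side to be empty, so $\cuts(A,\sigma|_A,T|_A,i)$ is a single cut. The only difference is presentational: the paper proves the two inequalities $\leq$ and $\geq$ separately, whereas you establish a bijection between the two cut families and show every corresponding pair of cuts differs in size by the fixed constant $\width(A,\sigma|_A,T|_A,i)$, obtaining equality in one stroke.
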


\begin{proof}
	Consider any cut $(L,R)$ in $\cuts(G,\sigma,T,i)$.
	Observe that for every edge $e$ of $E_{\mathbf{A}\oplus \mathbf{B}}(L,R)$ one of the following holds:
	\begin{enumerate}
	\item $e\in E_{A}(L \cap V(A), R\cap V(A))$ or
	\item $e\in E_{B}(L \cap V(B), R \cap V(B))$ or
	\item $e\in E_{G}(L\cap V(A), R\cap V(B))$, or
	\item $e\in E_{G}(R\cap V(A), L\cap V(B))$.
	\end{enumerate}
	Since we do not distinguish between the vertices $x_{i}$ and the vertices $y_{i}'$, we equivalently obtain that for every edge $e\in E_{\mathbf{A}\oplus \mathbf{B}}(L,R)$,
	$e$ is either an edge in $E_A(L \cap V(A), R\cap V(A))$ or an edge in $E_{B^{*}}(L \cap V(B^{*}), R \cap V(B^{*}))$.	
	Observe that $(L \cap V(A), R\cap V(A))$ is a cut in $\cuts(A,\sigma|_{A},T|_{A},i)$ and 
	$(L \cap V(B^{*}), R\cap V(B^{*}))$ is a cut in $\cuts(B^{*},\sigma|_{B^{*}},T|_{B^{*}},i)$.
	Therefore, the total number of edges crossing these cuts is at most $\width(A, \sigma|_{A}, T|_{A}, i) + \width(B^{*}, \sigma|_{B^{*}}, T|_{B^{*}}, i)$.
	This proves that $$\width(\mathbf{A}\oplus \mathbf{B}, \sigma, T, i) \leq 
	\width(A, \sigma|_{A}, T|_{A}, i) + \width(B^{*}, \sigma|_{B^{*}}, T|_{B^{*}}, i).$$
	
	For the converse inequality, observe that since the bucket $T^{-1}(i)$ does not contain any vertices of $A$, we have $T|_{A}^{-1}(i) = \emptyset$.
	Hence there is exactly one cut in $\cuts(A,\sigma|_A,T|_A,i)$, namely $(L_A,R_A)$, where $L_A = T^{-1}(\{1,\dots,i-1\})\cap V(A)$ and $R_A=T^{-1}(\{i+1,\dots,\ell\}) \cap V(A)$.
	Let $(L_B,R_B)$ be a cut in $\cuts(B^{*},\sigma|_{B^{*}},T|_{B^{*}},i)$ maximizing $|E_{B^{*}}(L_B,R_B)|$.
	Then, since we assumed that $T^{-1}(i)$ does not contain any vertices of $A$ (and thus, may only contain vertices of $B$), it follows that
	$(L_A \cup L_B, R_A \cup R_B)$ is a cut in $\cuts(G,\sigma,T,i)$.
	As above, every edge of $\mathbf{A}\oplus\mathbf{B}$ crossing this cut is either in $E_A(L_A,R_A)$ or in $E_{B^{*}}(L_B,R_B)$, where we again do not distinguish between 
	the vertices $x_{i}$ and $y_{i}'$.
	Hence 
\begin{eqnarray*}
		\width(\mathbf{A}\oplus \mathbf{B}, \sigma, T, i) &  \geq & |E_{\mathbf{A}\oplus \mathbf{B}}(L,R)| \\ 
		& =&  |E_A(L_A,R_A)| + |E_{B^{*}}(L_B,R_B)|\\
		&  = & \width(A,\sigma|_A,T|_A,i) + \width(B^{*}, \sigma|_{B^{*}}, T|_{B^{*}}, i).
\end{eqnarray*}
\end{proof}

Replacing the roles of $\mathbf{A}$ and $\mathbf{B}$ above, we obtain that if $T^{-1}(i)$ does not contain any vertex of $B$, 
then $$\width(\mathbf{A}\oplus \mathbf{B}, \sigma, T, i) = \width(A^{*}, \sigma|_{A^{*}}, T|_{A^{*}}, i) + \width(B, \sigma|_{B}, T|_{B}, i).$$
Intuitively, this implies that the cutwidth of $\mathbf{A}\oplus\mathbf{B}$ depends on $A$ only in the widths of each block relative to $A$ and $A^{*}$ (in any bucketing where
buckets are either $A$-blocks or $B$-blocks).
Therefore, replacing $\mathbf{A}$ with another boundaried graph whose extension has an ordering and bucketing with the same widths preserves cutwidth (as long as endpoints of the cut edges are placed in the same buckets too).
This is formalized in the next definition.

\begin{definition}\label{def:interface}
	For $k,\ell\in\mathbb{N}$, a \emph{($k$,$\ell$)-bucket interface} consists of functions:
	\begin{itemize}
		\item $b,b': [k] \to [\ell]$ identifying the buckets which contain $x_{i}$ and $x_{i}'$, respectively and
		\item $\mu,\mu^{*}: [\ell] \to [0,k]$ corresponding to the widths of buckets. 
	\end{itemize}
	A $k$-boundaried graph $\mathbf{G}$ \emph{conforms} with a $(k,\ell)$-bucket interface
	if there exists an ordering $\sigma$ of the vertices of $G^{*}$ and an $\ell$-bucketing $T$ of $G^*$ such that:
	\begin{itemize}
		\item $T(v)$ is odd for $v\in V(G)$ and even for $v\in \{x_{1}',\dots,x_k'\}$,
		\item $T(x_{i}) = b(i)$ and $T(x_{i}') = b'(i)$,\ for each $i\in [k]$,
		\item $\width(G, \sigma|_{G}, T|_{G}, j) \leq \mu(j)$,\ for each $j\in [\ell]$,
		\item $\width(G^{*}, \sigma, T, j) \leq \mu^{*}(j)$,\ for each $j\in [\ell]$.
	\end{itemize}
\end{definition}

\begin{observation}\label{obs:bcketsze} For all $k,\ell\in\NN^+$ there are $
\leq 2^{2(k\log\ell +\ell\log(k+1))}$ $(k,\ell)$-bucket interfaces.
\end{observation}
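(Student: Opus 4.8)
The plan is to count the number of possible $(k,\ell)$-bucket interfaces directly from Definition~\ref{def:interface}, without using any property of the conformance relation. By that definition, such an interface is precisely a tuple of four functions: $b\colon[k]\to[\ell]$ and $b'\colon[k]\to[\ell]$ (the bucket assignments of the $x_i$ and the $x_i'$), together with $\mu\colon[\ell]\to[0,k]$ and $\mu^{*}\colon[\ell]\to[0,k]$ (the bucket widths). These four choices are completely independent, so the total number of interfaces is at most the product of the number of choices available for each.

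First I would record the elementary counts. The number of functions from $[k]$ to $[\ell]$ is exactly $\ell^{k}$, so there are $\ell^{k}$ possibilities for $b$ and, independently, $\ell^{k}$ for $b'$. The set $[0,k]$ has $k+1$ elements, hence there are $(k+1)^{\ell}$ functions from $[\ell]$ to $[0,k]$, giving $(k+1)^{\ell}$ possibilities for $\mu$ and, independently, $(k+1)^{\ell}$ for $\mu^{*}$.

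Multiplying these four counts, the number of $(k,\ell)$-bucket interfaces is at most
\[
\ell^{k}\cdot\ell^{k}\cdot(k+1)^{\ell}\cdot(k+1)^{\ell}\;=\;\ell^{2k}\cdot(k+1)^{2\ell}.
\]
To finish, rewrite each factor as a power of two using $\ell^{2k}=2^{2k\log\ell}$ and $(k+1)^{2\ell}=2^{2\ell\log(k+1)}$, so that the bound becomes $2^{2k\log\ell+2\ell\log(k+1)}=2^{2(k\log\ell+\ell\log(k+1))}$, which is exactly the claimed estimate.

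There is no real obstacle here: the statement is a pure counting fact about the combinatorial data specified in Definition~\ref{def:interface}. The only point requiring care is to match the bookkeeping to the definition, namely that there are \emph{two} bucket-assignment functions and \emph{two} width functions, which is precisely what produces the leading factor $2$ in the exponent.
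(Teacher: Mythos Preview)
Your proof is correct and is exactly the intended argument: the paper states this as an observation without proof, and the direct count of the four independent functions from Definition~\ref{def:interface} is precisely how the bound $\ell^{2k}(k+1)^{2\ell}=2^{2(k\log\ell+\ell\log(k+1))}$ arises.
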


We call two $k$-boundaried graphs $\mathbf{G}_{1}, \mathbf{G}_{2}$ \emph{($k$,$\ell$)-similar} if the sets of $(k,\ell)$-bucket interfaces they conform with are equal.
The following lemma subsumes the above ideas. The proof follows easily from Lemma~\ref{lem:width_by_parts} and the fact that $\ctw_\sigma(G) = \max_{i\in[\ell]}\ \width(G,\sigma,T,i)$ (Eq.~\eqref{eq:cwdthsgm}).

\begin{theorem}\label{thm:myhill_nerode}
	Let $k,r$ be two positive integers. Let also $\mathbf{A}_{1}$ and $\mathbf{A}_{2}$ be two $k$-boundaried graphs that are $(k,\ell)$-similar,
	where $\ell=(2k+1) \cdot (2r+4)$.
	Then for any $k$-boundaried graph $\mathbf{B}$ where $\ctw(\mathbf{A}_{1}\oplus \mathbf{B})\leq r$, it holds that 
	$\ctw(\mathbf{A}_{2}\oplus \mathbf{B})=\ctw(\mathbf{A}_{1}\oplus \mathbf{B})$.
\end{theorem}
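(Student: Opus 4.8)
The proof is essentially a Myhill--Nerode-style argument: since $\mathbf{A}_1$ and $\mathbf{A}_2$ conform with exactly the same $(k,\ell)$-bucket interfaces, and since the cutwidth of a join $\mathbf{A}\oplus\mathbf{B}$ can be read off from a bucket interface to which $\mathbf{A}$ conforms together with a ``dual'' bucketing witness on the $\mathbf{B}$ side, swapping $\mathbf{A}_1$ for $\mathbf{A}_2$ cannot change the optimum. By symmetry it suffices to prove one inequality, say $\ctw(\mathbf{A}_2\oplus\mathbf{B})\leq \ctw(\mathbf{A}_1\oplus\mathbf{B})=:r'\leq r$.

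\smallskip

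\textbf{Step 1: extract a good ordering of $\mathbf{A}_1\oplus\mathbf{B}$.} Take an optimum-width ordering $\sigma$ of $G:=\mathbf{A}_1\oplus\mathbf{B}$, so $\ctw_\sigma(G)=r'$. Recall that the edges $x_iy_i$ of the join form a cut of $G$ separating $V(A_1)$ from $V(B)$, of size at most $k$. Apply Lemma~\ref{lem:num_blocks} to $G$ with this cut: we get an optimum-width ordering, still call it $\sigma$, in which the number of $(V(A_1),V(B))$-blocks is at most $(2k+1)(2r'+3)+2k \leq (2k+1)(2r+4)=\ell$ (using $r'\leq r$ and absorbing the $+2k$ into the slack of the $+4$; a clean way is to note $(2k+1)(2r'+3)+2k < (2k+1)(2r'+4)\le (2k+1)(2r+4)$). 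Now refine this block decomposition into an $\ell$-bucketing $T$ of $G$ by assigning each block its own bucket index (in order), choosing indices so that $A_1$-blocks get odd numbers and $B$-blocks get even numbers; pad with empty buckets if necessary so that exactly $\ell$ buckets are used and the parity convention of Definition~\ref{def:interface} holds. By Eq.~\eqref{eq:cwdthsgm}, $r'=\max_{i\in[\ell]}\width(G,\sigma,T,i)$.

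\smallskip

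\textbf{Step 2: read off the interface $\mathbf{A}_1$ conforms with.} Consider the restrictions to the $\mathbf{A}_1$-side. Recall the identification of the join edges $x_iy_i$ with the extension edges $x_ix_i'$ of $A_1^*$ and $y_iy_i'$ of $B^*$. Define $b(i)=T(x_i)$ and $b'(i)=T|_{A_1^*}(x_i')=T(y_i)$; since $x_i$ lies in an $A_1$-block and $y_i$ in a $B$-block, $b(i)$ is odd and $b'(i)$ is even. Define $\mu(j)=\width(A_1,\sigma|_{A_1},T|_{A_1},j)$ and $\mu^*(j)=\width(A_1^*,\sigma|_{A_1^*},T|_{A_1^*},j)$ for $j\in[\ell]$; these are bounded by $\ctw_\sigma(G)\leq r\leq k$ is \emph{not} automatic, so one uses instead the bound $\width(A_1,\cdot)\le\width(G,\cdot)\le r'$ and $\width(A_1^*,\cdot)\le r'+k$? — here is the one genuinely delicate point (see below); modulo that, $(b,b',\mu,\mu^*)$ is a valid $(k,\ell)$-bucket interface, and $\sigma|_{A_1^*}$ with bucketing $T|_{A_1^*}$ witnesses that $\mathbf{A}_1$ conforms with it.

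\smallskip

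\textbf{Step 3: transplant and recombine.} Since $\mathbf{A}_1$ and $\mathbf{A}_2$ are $(k,\ell)$-similar, $\mathbf{A}_2$ also conforms with this interface; let $\tau$ be an ordering of $A_2^*$ and $S$ an $\ell$-bucketing of $A_2^*$ witnessing it. Now build an ordering $\rho$ of $\mathbf{A}_2\oplus\mathbf{B}$: in bucket $j$, place first the bucket-$j$ vertices of $A_2$ in $\tau$-order (for odd $j$), then the bucket-$j$ vertices of $B$ in $\sigma|_B$-order (for even $j$), using the identification of $x_i'$ (of $A_2^*$) with $y_i$ (of $B$) and of $y_i'$ (of $B^*$) with $x_i$ (of $A_2$) to make this coherent; let $T'$ be the obvious $\ell$-bucketing. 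For each bucket $j$, one side is empty ($A_2$ on even buckets, $B$ on odd buckets), so Lemma~\ref{lem:width_by_parts} (and its stated symmetric version) gives $\width(\mathbf{A}_2\oplus\mathbf{B},\rho,T',j)=\width(A_2,\rho|_{A_2},T'|_{A_2},j)+\width(B^*,\rho|_{B^*},T'|_{B^*},j)$ on even buckets, and the analogous sum $\width(A_2^*,\cdot,j)+\width(B,\cdot,j)$ on odd buckets. The $A_2$-side terms are $\le\mu(j),\mu^*(j)$ by conformance of $\mathbf{A}_2$; the $B$-side terms equal the corresponding terms for $\mathbf{A}_1\oplus\mathbf{B}$ under $\sigma,T$, again by Lemma~\ref{lem:width_by_parts} applied there. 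Summing, $\width(\mathbf{A}_2\oplus\mathbf{B},\rho,T',j)\le\width(\mathbf{A}_1\oplus\mathbf{B},\sigma,T,j)\le r'$ for every $j$, so by Eq.~\eqref{eq:cwdthsgm} $\ctw(\mathbf{A}_2\oplus\mathbf{B})\le\ctw_\rho(\mathbf{A}_2\oplus\mathbf{B})\le r'$. Swapping the roles of $\mathbf{A}_1,\mathbf{A}_2$ gives the reverse inequality, hence equality.

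\smallskip

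\textbf{Main obstacle.} The one point that needs care is that the restriction widths $\mu(j)=\width(A_1,\sigma|_{A_1},T|_{A_1},j)$ and especially $\mu^*(j)=\width(A_1^*,\sigma|_{A_1^*},T|_{A_1^*},j)$ must land in the codomain $[0,k]$ required by Definition~\ref{def:interface}. The subgraph $A_1$ inherits $\width\le\width(G,\sigma,T,j)\le r'$, which need not be $\le k$ in general — but here we may freely assume $r\le k$: if $r>k$ is allowed, the statement is about $\ctw\le r$ whereas interfaces cap widths at $k$, so one should either restrict to the regime $r\le k$ (which is all that is needed for the obstruction application, where $r=k+1$ is off by one and a trivial rescaling of $\ell$ absorbs it) or enlarge the interface codomain to $[0,\max(k,r)+k]$ and note Observation~\ref{obs:bcketsze}'s count degrades only polynomially. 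I would state the theorem (as the authors do) so that this is not an issue, and spend a sentence checking $\mu^*(j)\le\width(A_1^*,\cdot)\le\width(A_1,\cdot)+k\le r'+k$, adjusting the interface definition's range accordingly if needed. Everything else is bookkeeping: the coherent interleaving of $\tau$ on $A_2$ with $\sigma|_B$ on $B$, and the repeated application of Lemma~\ref{lem:width_by_parts}.
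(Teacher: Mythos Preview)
Your proof is correct and follows essentially the same route as the paper's own proof: apply Lemma~\ref{lem:num_blocks} to obtain an optimum ordering $\sigma_1$ of $\mathbf{A}_1\oplus\mathbf{B}$ with few $(V(A_1),V(B))$-blocks, pad to an $\ell$-bucketing respecting parity, read off a $(k,\ell)$-bucket interface witnessed by $\sigma_1|_{A_1^*}$, use similarity to obtain a witness $(\sigma_2,T_2)$ for $\mathbf{A}_2$, interleave $\sigma_2$ on $V(A_2)$ with $\sigma_1|_B$ on $V(B)$ bucket by bucket, and apply Lemma~\ref{lem:width_by_parts} (in both directions) to compare widths; then swap the roles of $\mathbf{A}_1,\mathbf{A}_2$ for the reverse inequality.

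Regarding your ``Main obstacle'': this is a real point, and the paper simply glosses over it. The paper sets $\mu(i)=\width(A_1,\sigma_1|_{A_1},T_1|_{A_1},i)$ and $\mu^*(i)=\width(A_1^*,\sigma_1|_{A_1^*},T_1|_{A_1^*},i)$ and declares this a $(k,\ell)$-bucket interface, without checking that these values lie in $[0,k]$. As you observe, one only gets $\mu(i),\mu^*(i)\le \ctw(\mathbf{A}_1\oplus\mathbf{B})\le r$ (since both $A_1$ and $A_1^*$ embed as subgraphs of $\mathbf{A}_1\oplus\mathbf{B}$), so the definition as written only goes through when $r\le k$. Your suggested fix---enlarging the codomain of $\mu,\mu^*$ to $[0,\max(k,r)]$---is exactly right and affects only the count in Observation~\ref{obs:bcketsze} by a harmless constant factor; this is in fact how the bound $N=2^{2((k+1)\log\ell+\ell\log(k+2))}+2$ in Theorem~\ref{restated} is calibrated (there the cutwidth is $k+1$, and the $\log(k+2)$ accounts for a codomain of size $k+2$ rather than $q+1$). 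So you have not missed anything the paper supplies; you have simply been more careful about a point the paper leaves implicit.
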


\begin{proof}
	Let $\mathbf{A}_{i}=(A_{i},\bar{x}^i), \mathbf{B}=(B,\bar{y})$ and suppose that $\ctw(\mathbf{A}_{1}\oplus \mathbf{B})\leq r$.
	By Lemma~\ref{lem:num_blocks}, there is an optimum cutwidth ordering $\sigma_{1}$ of the vertices of $\mathbf{A}_{1}\oplus \mathbf{B}$ that has at most $\ell-1$ $(V(A_{1}),V(B))$-blocks. 
	In particular $\ctw_{\sigma_{1}}(\mathbf{A}_{1}\oplus \mathbf{B})=\ctw(\mathbf{A}_{1}\oplus \mathbf{B})\leq r$.
	By adding an empty block at the front, if necessary, we may assume that the number of blocks is at most $\ell$, while odd-indexed blocks are $V(A_1)$-blocks and even-indexed blocks are $V(B)$-blocks. 
	Then, there is an $\ell$-bucketing $T_{1}$ of $\sigma_{1}$ such that
	$T_{1}(v)$ is odd for $v \in A_{1}$ and even for $v \in B$.
	Therefore $\sigma_{1}|_{A_{1}^{*}}$ and $T_{1}|_{A_{1}^{*}}$ certify that the following $(k,\ell)$-bucket interface conforms with $\mathbf{A}_{1}$:
	\begin{itemize}
		\item $b(i) = T_{1}(x^{1}_{i})$ and $b'(i)=T_{1}|_{A_{1}^{*}}({x^{1}_{i}}')=T_{1}(y_{i})$ for $i\in[k]$,
		\item $\mu(i) = \width(A_{1}, \sigma_{1}|_{A_{1}}, T_{1}|_{A_{1}}, i)$ and $\mu^{*}(i) = \width(A_{1}^{*}, \sigma_{1}|_{A_{1}^{*}}, T_{1}|_{A_{1}^{*}}, i)$ for $i \in [\ell]$.
	\end{itemize}
	
	By $(k,\ell)$-similarity there is an ordering $\sigma_{2}$ of $A_{2}^{*}$ and its $\ell$-bucketing $T_{2}$ such that:
	\begin{itemize}
		\item each bucket $T_{2}^{-1}(i)$ is contained in $A_{2}$ for odd $i\in[\ell]$ and in $\{{x^{2}_{1}}',\dots,{x^{2}_{k}}'\}$ for even $i\in[\ell]$
		\item $b(i) = T_{2}(x^{2}_{i})$ and $b'(i)=T_{2}({x^{2}_{i}}')$ for $i\in[k]$,
		\item $\mu(i) \geq \width(A_{2}, \sigma_{2}|_{A_{2}}, T_{2}|_{A_{2}}, i)$ and $\mu^{*}(i) \geq \width(A_{2}^{*}, \sigma_{2}|_{A_{2}^{*}}, T_{2}|_{A_{2}^{*}}, i)$ for $i \in [\ell]$.
	\end{itemize}
	Given this, we define an assignment of vertices into buckets $\Pi\colon V(\mathbf{A}_{2} \oplus \mathbf{B}) \to [\ell]$ as follows.
	\begin{itemize}
		\item $\Pi(v) = T_{1}(v)$ for $v \in V(B)$ and 
		\item $\Pi(v) = T_{2}(v)$ for $v \in V(A_{2})$.
	\end{itemize}	
	Clearly,
	\begin{align}
	&\Pi|_B = T_{1}|_B \qquad \textrm{and} \label{eq:relakghraelgnawl}\\
	&\Pi|_{A_{2}} = T_{2}|_{A_{2}}.\label{eq:ejkrsghearkgh}
	\end{align}	
	We claim that $\Pi|_{A_{2}^{*}} = T_{2}|_{A_{2}^{*}}$ and $\Pi|_{B^{*}} = T_{1}|_{B^{*}}$ also hold. 
	Indeed, 
	\begin{align*}
	\Pi|_{A_{2}^{*}}(x^{2'}_{i}) &  = \Pi(y_{i}) & && && & && \mbox{(we consider } x^{2'}_{i} \mbox{ as } y_{i})\\
	                                    &=  T_{1}(y_{i}) & & & && & &&\mbox{(by definition)}\\
	                                   &  =  b'(i) & &&& && & & ((k,\ell)\mbox{-bucket interface)}\\
	                                  &   =  T_{2}(x^{2'}_{i}) && & && & && ((k,\ell)\mbox{-similarity)}
	\end{align*}
	
 \noindent	 and, similarly,
	  \begin{align*}
	  \Pi|_{B^{*}}(y_{i}') &  =  \Pi(x^{2}_{i}) & && && & &&\mbox{(we consider } y_{i}' \mbox{ as } x^{2}_{i})\\
	                            &  =  T_{2}(x^{2}_{i}) & && && & && \mbox{(by definition)}\\
	                            & =  b(i) & && && & && ((k,\ell)\mbox{-bucket interface)}\\
	                            & = T_{1}(x^{1}_{i}) & && && & && ((k,\ell)\mbox{-similarity)}\\
	                            & = T_{1}|_{B^{*}}(y_{i}') & && && & &&  \mbox{(by definition).}
	   \end{align*}
	   Thus, we obtain that
	   \begin{eqnarray}
	   \Pi|_{A_{2}^{*}} & = & T_{2}|_{A_{2}}\label{eq:eskrlgjhaekj}\\
	   \Pi|_{B^{*}} & = & T_{1}|_{B^{*}}.\label{eq:ergerglkjwag}
	   \end{eqnarray}

		Note also that vertices of $A_{2}$ are mapped to odd buckets and vertices of $B$ are mapped to even buckets.
We use $\Pi$ to define an ordering $\pi$ of the vertices of $\mathbf{A_{2}}\oplus \mathbf{B}$ as follows. Formally, we let $u<_{\pi} u$ if and only if one of the following conditions hold:
\begin{enumerate}
\item $\Pi(u) < \Pi(v)$,
\item $u <_{\sigma_{2}} v$ and $\Pi(u)=\Pi(v)$ is odd, or
\item $u <_{\sigma_{1}} v$ and $\Pi(u)=\Pi(v)$ is even.
\end{enumerate}

Note that this is a linear ordering as it first sorts the vertices according to the bucket they belong to and then according to the ordering induced in this bucket by the orderings $\sigma_{1}$
and $\sigma_{2}$. Note also that by definition $\Pi$ is an $\ell$-bucketing of $\pi$.
Recall that, from Eq.~\eqref{eq:eskrlgjhaekj}, $\Pi|_{A_{2}^{*}}=T_{2}|_{A_{2}}$. This, together with the observation that the vertices of $A_{2}$ are mapped to odd buckets of $\Pi$, implies that 
\begin{align}
& \pi|_{A_{2}^{*}}=\sigma_{2}|_{A_{2}^{*}}\qquad  \textrm{ and that}\label{eq:eqkdjgnsekjrga}\\
& \pi|_{A_{2}}=\sigma_{2}|_{A_{2}}.\label{eq:eqgaelrghera}
\end{align}
Moreover, recall that $\Pi|_{B^{*}} = T_{1}|_{B^{*}}$. This, together with the fact that the vertices of $B$ are mapped to even buckets of $\Pi$, implies that 
\begin{align}
& \pi|_{B^{*}} = \sigma_{1}|_{B^{*}} \qquad \textrm{and that}\label{eq:eqsegjkrserger}\\
& \pi|_{B} = \sigma_{1}|_{B}.\label{eq:eqeqwrjkawefl}
\end{align}
We now bound the width of each bucket. Let $i\in[\ell]$. Notice that if $i$ is even the by construction $\Pi^{-1}(i)$ contains only vertices from $B$.
Therefore, 
\begin{eqnarray}
	\width(\mathbf{A}_{2}\oplus\mathbf{B},\pi,\Pi,i) & = & \width(A_{2},\pi|_{A_{2}},\Pi|_{A_{2}},i) + \width(B^{*},\pi|_{B^{*}},\Pi|_{B^{*}},i)\nonumber \\
	& = & \width(A_{2},\sigma_{2}|_{A_{2}},T_{2}|_{A_{2}},i) + \width(B^{*},\sigma_{1}|_{B^{*}},T_{1}|_{B^{*}},i)\nonumber\\
	& \leq  & \mu(i) + \width(B^{*},\sigma_{1}|_{B^{*}},T_{1}|_{B^{*}},i) \nonumber\\
	& = & \width(A_{1},\sigma_{1}|_{A_{1}},T_{1}|_{A_{1}},i) + \width(B^{*},\sigma_{1}|_{B^{*}},T_{1}|_{B^{*}},i)\nonumber\\
	& = & \width(\mathbf{A}_{1}\oplus\mathbf{B},\sigma_{1},T_{1},i),\label{eq:wethweahwea}
\end{eqnarray}
where the first equality follows from Lemma~\ref{lem:width_by_parts}, the second equality holds by 
Eq.~\eqref{eq:ejkrsghearkgh},~\eqref{eq:eqgaelrghera},~\eqref{eq:eqsegjkrserger}, and~\eqref{eq:ergerglkjwag},
the third inequality follows from the $(k,\ell)$-bucket interface, and the fifth equality follows from Lemma~\ref{lem:width_by_parts}.
We similarly argue, using $\mu^{*}$ instead of $\mu$, that for odd $i\in[\ell]$, 
$\width(\mathbf{A}_{2}\oplus\mathbf{B},\pi,\Pi,i)=\width(\mathbf{A}_{1}\oplus\mathbf{B},\sigma_{1},T_{1},i).$
In particular,
\begin{eqnarray}
	\width(\mathbf{A}_{2}\oplus\mathbf{B},\pi,\Pi,i) & = & \width(A_{2}^{*},\pi|_{A_{2}^{*}},\Pi|_{A_{2}^{*}},i) + \width(B,\pi|_{B},\Pi|_{B},i)\nonumber \\
	& = & \width(A_{2}^{*},\sigma_{2}|_{A_{2}^{*}},T_{2}|_{A_{2}^{*}},i) + \width(B,\sigma_{1}|_{B},T_{1}|_{B},i)\nonumber\\
	& \leq  & \mu^{*}(i) + \width(B,\sigma_{1}|_{B},T_{1}|_{B},i) \nonumber\\
	& = & \width(A_{1}^{*}, \sigma_{1}|_{A_{1}^{*}}, T_{1}|_{A_{1}^{*}}, i) + \width(B,\sigma_{1}|_{B},T_{1}|_{B},i)\nonumber\\
	& = & \width(\mathbf{A}_{1}\oplus\mathbf{B},\sigma_{1},T_{1},i).\label{eq:eqdskjrgneskjrng}
\end{eqnarray}
Similarly, to Eq.~\ref{eq:wethweahwea}, we get that the first equality follows from Lemma~\ref{lem:width_by_parts}, the second equality holds by 
Eq.~\eqref{eq:eskrlgjhaekj},~\eqref{eq:eqkdjgnsekjrga},~\eqref{eq:relakghraelgnawl}, and~\eqref{eq:eqeqwrjkawefl},
the third inequality follows from the $(k,\ell)$-bucket interface, and the fifth equality follows from Lemma~\ref{lem:width_by_parts}.

Therefore, from Eq.~\eqref{eq:wethweahwea} and~\eqref{eq:eqdskjrgneskjrng} 
we obtain that
$$\ctw_\pi(\mathbf{A}_{2}\oplus\mathbf{B}) = \max_{i\in[\ell]}\  \width(\mathbf{A}_{2}\oplus\mathbf{B},\pi,\Pi,i) \leq \max_{i\in[\ell]}\  \width(\mathbf{A}_{1}\oplus\mathbf{B},\sigma_{1},T_{1},i) = \ctw_{\sigma_{1}}(\mathbf{A}_{1}\oplus\mathbf{B}).$$
Moreover, since $\ctw(\mathbf{A}_{2}\oplus\mathbf{B})\leq \ctw_\pi(\mathbf{A}_{2}\oplus\mathbf{B})$ and $\sigma_{1}$ is an optimum cutwidth ordering for $\mathbf{A}_{`}\oplus\mathbf{B}$,
it follows that $$\ctw(\mathbf{A}_{2}\oplus\mathbf{B})\leq\ctw(\mathbf{A}_{1}\oplus\mathbf{B})\leq r.$$
So in particular $\ctw(\mathbf{A}_{2}\oplus\mathbf{B})\leq r$. By applying the same reasoning, but with $\mathbf{A}_1$ and $\mathbf{A}_2$ reversed, we obtain also the converse inequality $\ctw(\mathbf{A}_{2}\oplus\mathbf{B})\leq\ctw(\mathbf{A}_{1}\oplus\mathbf{B})$.
This proves that indeed $\ctw(\mathbf{A}_{2}\oplus\mathbf{B})=\ctw(\mathbf{A}_{1}\oplus\mathbf{B})$.
\end{proof}

\section{Obstruction sizes and linked orderings}\label{sec:linked}

In this section we establish the main result on sizes of obstructions for cutwidth. 
We first introduce linked orderings and prove that there is always an optimum ordering that is linked.

\begin{definition}[linked ordering]
	An ordering $\sigma$ of $V(G)$ is \emph{linked} if  for any two vertices $u\leq_{\sigma} u'$, there exist $\min \{\delta(V^{\sigma}_{v}) \mid u\leq_{\sigma} v\leq_{\sigma} u' \}$ edge-disjoint paths between $V^{\sigma}_{u}$ and $V(G)\setminus V^{\sigma}_{u'}$ in $G$.
\end{definition}

\begin{lemma}[\!\!\cite{GeelenGW02,KanteK14}]\label{lem:linked}
For each graph $G$, there is a linked ordering $\sigma$ of $V(G)$ with $\ctw_{\sigma}(G)=\ctw(G)$.
\end{lemma}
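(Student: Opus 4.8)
The plan is to give a self-contained ``uncrossing'' argument; one could instead invoke the general theorem on submodular connectivity functions from~\cite{GeelenGW02} (as observed in~\cite{KanteK14}), but for linear orderings a direct proof is short. Throughout I use that $\delta$ is submodular: $\delta(X)+\delta(Y)\ge\delta(X\cap Y)+\delta(X\cup Y)$ for all $X,Y\subseteq V(G)$.

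First I would pass to an extremal ordering: among all orderings $\sigma$ of $V(G)$ with $\ctw_\sigma(G)=\ctw(G)$, pick one minimizing the potential $\Phi(\sigma):=\sum_{v\in V(G)}\delta(V^\sigma_v)$ (this exists, as there are finitely many orderings). Suppose towards a contradiction that $\sigma$ is not linked, witnessed by some $u\le_\sigma u'$. By the edge version of Menger's theorem, the maximum number of edge-disjoint paths between $V^\sigma_u$ and $V(G)\setminus V^\sigma_{u'}$ equals $\min\{\delta(S):V^\sigma_u\subseteq S\subseteq V^\sigma_{u'}\}$ (for any separating edge set, take $S$ to be the set of vertices reachable from $V^\sigma_u$ after its removal). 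Writing $m:=\min\{\delta(V^\sigma_v):u\le_\sigma v\le_\sigma u'\}$, the failure of linkedness means this minimum is attained by some $S$ with $\delta(S)<m$; fix such an $S$ of minimum $\delta$. Since $u$ and $u'$ both lie in the range of the minimum defining $m$, we have $\delta(V^\sigma_u),\delta(V^\sigma_{u'})\ge m>\delta(S)$, so $V^\sigma_u\subsetneq S\subsetneq V^\sigma_{u'}$; in particular $u<_\sigma u'$.

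Next I would rearrange $\sigma$ around $S$. Put $P=V^\sigma_u$, $Q=V(G)\setminus V^\sigma_{u'}$, $W_1=S\setminus P$, $W_2=V^\sigma_{u'}\setminus S$; these form a partition of $V(G)$. Let $\sigma'=\sigma|_{P}\circ\sigma|_{W_1}\circ\sigma|_{W_2}\circ\sigma|_{Q}$, where each block keeps its $\sigma$-order. Using $P\subseteq S\subseteq V^\sigma_{u'}$, a direct check shows that for $v\in P$ or $v\in Q$ one has $V^{\sigma'}_v=V^\sigma_v$; for $v\in W_1$ one has $V^{\sigma'}_v=S\cap V^\sigma_v$; and for $v\in W_2$ one has $V^{\sigma'}_v=S\cup V^\sigma_v$. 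Now for $v\in W_1\cup W_2$ both $S\cap V^\sigma_v$ and $S\cup V^\sigma_v$ again lie between $V^\sigma_u$ and $V^\sigma_{u'}$, hence have $\delta$-value at least $\delta(S)$ by minimality of $S$; feeding this into submodularity $\delta(S\cap V^\sigma_v)+\delta(S\cup V^\sigma_v)\le\delta(S)+\delta(V^\sigma_v)$ gives $\delta(S\cap V^\sigma_v)\le\delta(V^\sigma_v)$ and $\delta(S\cup V^\sigma_v)\le\delta(V^\sigma_v)$. Therefore $\delta(V^{\sigma'}_v)\le\delta(V^\sigma_v)$ for every $v\in V(G)$, so $\ctw_{\sigma'}(G)\le\ctw(G)$ and $\Phi(\sigma')\le\Phi(\sigma)$. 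Finally, if $w^{\ast}$ is the $\sigma$-last vertex of $W_1$, then $V^{\sigma'}_{w^{\ast}}=S$, so $\delta(V^{\sigma'}_{w^{\ast}})=\delta(S)<m\le\delta(V^\sigma_{w^{\ast}})$: a strict drop. Hence $\Phi(\sigma')<\Phi(\sigma)$, contradicting the choice of $\sigma$. So $\sigma$ is linked, proving the lemma.

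The conceptual core is the single observation that ``uncrossing'' $S$ against each window-prefix $V^\sigma_v$ cannot increase any cut — this is exactly submodularity combined with the minimality of $S$ — and this is the step I would state most carefully. The rest is bookkeeping: verifying the three cases for $V^{\sigma'}_v$ and tracking the vertex-by-vertex correspondence between the cuts of $\sigma$ and of $\sigma'$; I expect no genuine difficulty there, only the need to be meticulous about which prefixes of $\sigma$ contain which of $P,W_1,W_2,Q$.
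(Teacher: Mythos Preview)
Your proof is correct and essentially identical to the paper's: both choose an optimum ordering minimizing $\sum_v\delta(V^\sigma_v)$, take a minimum separating set $S$ (the paper's $A$) witnessing a failure of linkedness via Menger, reorder as $\sigma|_{V^\sigma_u}\circ\sigma|_{S\setminus V^\sigma_u}\circ\sigma|_{V^\sigma_{u'}\setminus S}\circ\sigma|_{V(G)\setminus V^\sigma_{u'}}$, and use submodularity together with the minimality of $S$ to show all cuts weakly decrease while the cut at the last vertex of $S$ strictly decreases. The only differences are notational.
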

\begin{proof}
Without loss of generality, we may assume that the graph is connected.
Let $\sigma$ be an optimum cutwidth ordering of $V=V(G)$. Subject to the optimality of $\sigma$, we choose $\sigma$ so that $\sum_{v\in V} \delta(V^{\sigma}_v)$ is minimized.
We prove that $\sigma$ defined in this manner is in fact linked.

Assume the contrary. Then by Menger's theorem, there exist vertices $u <_\sigma u'$ in $V$ and $i\in \mathbb{N}$ such that
$\delta(V^{\sigma}_{v})> i$ for every $u\leq_{\sigma} v\leq_{\sigma} u'$, but a minimum cut $(A,B)$ of $G$ with $V^{\sigma}_{u}\subseteq A$ and $V\setminus V^{\sigma}_{u'}\subseteq B$ has size $\delta(A) \leq i$.
We partition $A$ into sets $A_{1}$ and $A_{2}$, where $A_{1}=V^{\sigma}_{u}$ and $A_2=A\setminus A_1$, and we partition $B$ into sets $B_{1}$ and $B_{2}$, where
$B_{2}=V\setminus V^{\sigma}_{u'}$ and $B_1=B\setminus B_2$ (see Figure~\ref{fig:linked-proof}).
Notice that $A_{2}=A\setminus V^{\sigma}_{u}=\{v\mid u<_{\sigma} v \leq_{\sigma}u'\}\cap A$ and that 
$B_{1}=B\setminus (V\setminus V^{\sigma}_{u'})=\{v\mid u<_{\sigma} v \leq_{\sigma}u'\}\cap B$.
Let $\sigma'$ be the ordering of $V$ obtained by concatenating 
$\sigma|_{A_{1}}$, $\sigma|_{A_{2}}$, $\sigma|_{B_{1}}$, and $\sigma|_{B_{2}}$. 

We prove that $\delta(V^{\sigma'}_{v})\leq \delta(V^{\sigma}_{v})$, for every $v\in V$.
Observe first that for every vertex $v\in A_{1}\cup B_{2}$ it holds that $V^{\sigma'}_{v}=V^{\sigma}_{v}$ and thus, $\delta(V^{\sigma'}_{v})= \delta(V^{\sigma}_{v})$.
Let now $v\in A_{2}$. Then $V^{\sigma'}_{v}=V^{\sigma}_{v}\cap A$. By the submodularity of cuts it follows that
$\delta(V^{\sigma}_{v}\cup A)+\delta(V^{\sigma}_{v}\cap A)\leq \delta(A)+\delta(V^{\sigma}_{v})$.
Notice that 
$(V^{\sigma}_{v}\cup A,V\setminus(V^{\sigma}_{v}\cup A))$ is also a cut separating $A_{1}=V^{\sigma}_{u}$ and $B_{2}=V\setminus V^{\sigma}_{u'}$. From the minimality of $(A,B)$ it follows that
$\delta(A)\leq \delta(V^{\sigma}_{v}\cup A)$. Therefore, $\delta(V^{\sigma}_{v}\cap A)\leq \delta(V^{\sigma}_{v})$. As $V^{\sigma'}_{v}=V^{\sigma}_{v}\cap A$, 
we obtain that $\delta(V^{\sigma'}_{v})\leq \delta(V^{\sigma}_{v})$.

Symmetrically, let now $v\in B_{1}$. Then $V^{\sigma'}_{v}=V^{\sigma}_{v}\cup A$. By the submodularity of cuts we have
$\delta(V^{\sigma}_{v}\cup A)+\delta(V^{\sigma}_{v}\cap A)\leq \delta(A)+\delta(V^{\sigma}_{v})$.
Notice that $(V^{\sigma}_{v}\cap A,V\setminus(V^{\sigma}_{v}\cap A))$ is a cut separating $A_{1}$ and $B_{2}$. From the minimality of $(A,B)$ it follows that
$\delta(A)\leq \delta(V^{\sigma}_{v}\cap A)$. Therefore, $\delta(V^{\sigma}_{v}\cup A)\leq \delta(V^{\sigma}_{v})$. As $V^{\sigma'}_{v}=V^{\sigma}_{v}\cup A$, 
we obtain that $\delta(V^{\sigma'}_{v})\leq \delta(V^{\sigma}_{v})$.

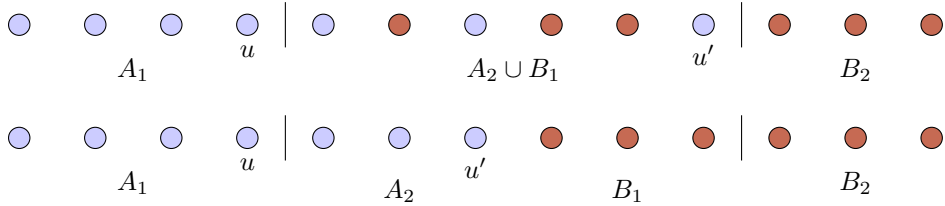
\begin{figure}[b]
	\centering
	
\begin{tikzpicture}[scale =1]
	\node[A] (v1) at (1,0) {};
	\node[A] (v2) at (2,0) {};
	\node[A] (v3) at (3,0) {};
	\node[A,label=below:$u$] (v4) at (4,0) {};
	\draw (4.5,-0.3)--(4.5,0.3);
	\node[A] (v5) at (5,0) {};
	\node[B] (v6) at (6,0) {};
	\node[A] (v7) at (7,0) {};
	\node[B] (v8) at (8,0) {};
	\node[B] (v9) at (9,0) {};
	\node[A,label=below:$u'$] (v10) at (10,0) {};
	\draw (10.5,-0.3)--(10.5,0.3);
	\node[B] (v11) at (11,0) {};
	\node[B] (v12) at (12,0) {};
	\node[B] (v13) at (13,0) {};

	\node (A1) at (2.5,-0.6) {$A_1$};
	\node (A2) at (7.5,-0.6) {$A_2 \cup B_1$};
	\node (B2) at (12,-0.6) {$B_2$};
	
	\begin{scope}[shift={(0,-1.5)}]
	\node[A] (v1) at (1,0) {};
	\node[A] (v2) at (2,0) {};
	\node[A] (v3) at (3,0) {};
	\node[A,label=below:$u$] (v4) at (4,0) {};
	\draw (4.5,-0.3)--(4.5,0.3);
	\node[A] (v5) at (5,0) {};
	\node[B] (v6) at (8,0) {};
	\node[A] (v7) at (6,0) {};
	\node[B] (v8) at (9,0) {};
	\node[B] (v9) at (10,0) {};
	\node[A,label=below:$u'$] (v7) at (7,0) {};
	\draw (10.5,-0.3)--(10.5,0.3);
	\node[B] (v11) at (11,0) {};
	\node[B] (v12) at (12,0) {};
	\node[B] (v13) at (13,0) {};

	\node (A1) at (2.5,-0.6) {$A_1$};
	\node (A2) at (6,-0.7) {$A_2$};
	\node (B1) at (9,-0.7) {$B_1$};
	\node (B2) at (12,-0.6) {$B_2$};
	\end{scope}
\end{tikzpicture}

	\caption{An ordering of vertices with the minimum cut $(A,B)$ between $A_1$ and $B_2$ of size $i$ highlighted in blue and red. Below, the modified ordering, with cutwidth bounded using submodularity.}
\label{fig:linked-proof}
\end{figure}

Thus, $\delta(V^{\sigma'}_{v})\leq \delta(V^{\sigma}_{v})\leq\ctw(G)$ for every $v\in V$, and hence $\ctw_{\sigma'}(G)=\ctw(G)$.
Finally, note that $\delta(V^{\sigma'}_{v}) = \delta(A) \leq i < \delta(V^{\sigma}_{v})$ for the last vertex $v$ in $A$. Thus $\sum_v \delta(V^{\sigma'}_v)<\sum_v \delta(V^{\sigma}_v)$, contradicting the choice of $\sigma$.
Therefore, $\sigma$ is a linked ordering of $V$ with $\ctw_{\sigma}(G)=\ctw(G)$.
\end{proof}

The rest of Section~\ref{sec:linked} is devoted 
to the proof of Theorem~\ref{thm:main}.
Before we proceed with this proof, we need a series of auxiliary lemmas.

 For every $s,r\in \mathbb{N}^+$, we set  $A_{s,r}=[s,s+r-1]$. We prove the following.
 
\begin{lemma}\label{lem:word_unpump}
	Let $N$ be a positive integer. 
	For every $s,r\in \mathbb{N}^+$ and every word $w$ over $A_{s,r}$ of length $N^{r}$
	there is a symbol $k\in A_{s,r}$ and a subword $u$ of $w$ such that (a) $u$ contains only numbers not smaller than $k$, and (b) $u$ contains the number $k$ at least $N$ 
	times.
\end{lemma}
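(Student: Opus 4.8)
~The plan is to prove this by induction on $r$, pushing through a pigeonhole-style argument on the smallest symbol. Let me set up notation: for a word $w$ over the alphabet $A_{s,r} = [s, s+r-1]$, we want to find a symbol $k$ and a contiguous subword $u$ such that $u$ uses only symbols $\geq k$ and contains $k$ at least $N$ times.

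\medskip

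\textbf{Base case.} If $r = 1$, then $A_{s,1} = \{s\}$ and $w$ has length $N^1 = N$, so $w$ itself uses only the symbol $s$, which appears exactly $N$ times. Take $k = s$ and $u = w$.

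\textbf{Inductive step.} Suppose the statement holds for $r-1$ (over any starting index), and let $w$ be a word over $A_{s,r}$ of length $N^r$. Consider the occurrences of the smallest symbol $s$ in $w$. Two cases:
\begin{itemize}
\item[] If $s$ occurs at least $N$ times in $w$, we are done: take $k = s$ and $u = w$, since every symbol of $w$ is $\geq s$ trivially.
\item[] Otherwise, $s$ occurs at most $N-1$ times in $w$. Deleting all occurrences of $s$ splits $w$ into at most $N$ maximal blocks (contiguous runs) that contain no $s$, hence are words over $A_{s+1, r-1} = [s+1, s+r-1]$. Their total length is $\geq N^r - (N-1) > N^r - N = N(N^{r-1} - 1)$, so by pigeonhole at least one block $w'$ has length $\geq \lceil (N^r - (N-1))/N \rceil \geq N^{r-1}$. (One checks $N^r - (N-1) > N \cdot (N^{r-1}-1)$, so the ceiling is at least $N^{r-1}$.) Apply the induction hypothesis to $w'$ over $A_{s+1,r-1}$: this yields a symbol $k \in A_{s+1,r-1} \subseteq A_{s,r}$ and a contiguous subword $u$ of $w'$ — hence of $w$ — using only symbols $\geq k$ and containing $k$ at least $N$ times.
\end{itemize}

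\medskip

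\textbf{Where the care is needed.} The only slightly delicate point is the length bookkeeping in the pigeonhole step: after removing fewer than $N$ copies of $s$ we have at most $N$ blocks of a word of length more than $N^r - N$, and we must verify that one block still has length at least $N^{r-1}$, i.e. that $\lceil (N^r - N + 1)/N \rceil \geq N^{r-1}$, which holds since $N^r - N + 1 > N(N^{r-1} - 1)$. Everything else is routine. I would present this as a clean induction with the two cases above, and that completes the proof of Lemma~\ref{lem:word_unpump}.
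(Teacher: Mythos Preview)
Your proof is correct and follows essentially the same inductive approach as the paper: induction on $r$, case split on whether the smallest symbol $s$ appears at least $N$ times, and in the second case a pigeonhole argument to find a long $s$-free block on which to apply the inductive hypothesis. The only difference is that you spell out the pigeonhole bookkeeping explicitly (verifying $\lceil (N^r - N + 1)/N\rceil \geq N^{r-1}$), whereas the paper simply asserts the existence of a subword of length at least $N^{r-1}$ over $A_{s+1,r-1}$.
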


\begin{proof} 
	We prove the lemma by induction on $r$. 
	Notice that for $r=1$, $A_{s,r}=\{s\}$ and thus the only word $w$ of length $N$ is $s^{N}$. Thus, the lemma holds with $k=s$ and $u=w$.
	We proceed to the inductive step for $r>1$.
	
	Let now $s\in \mathbb{N}$ and let $w$ be a word over $A_{s,r}$ of length $N^{r}$. If $s$ occurs at least $N$ times, then again, the lemma holds with $k=s$ and $u=w$.
	Thus, we may assume that $s$ occurs at most $N-1$ times. Then, since $w$ has length at least $N^{r}$, there
	exists a subword $w'$ of $w$ of length at least $N^{r-1}$ over $A_{s,r}\setminus\{s\} = A_{s+1,r-1}$.
	From the inductive hypothesis, there exists $k \in A_{s+1,r-1}\subseteq A_{s,r}$ and a 
	subword $u$ of $w'$ such that $k$ occurs at least $N$ times in $u$ and $u$ contains only numbers at least $k$. Since $w'$ is a subword of $w$, $u$ is also a
	subword of $w$. This completes the inductive step and the proof of the lemma.
\end{proof}

We use Lemma~\ref{lem:word_unpump} only for $s=1$, giving the following corollary.

\begin{corollary}\label{lem:wordlngth}
	Let $r,N$ be positive integers and let $w$ be a word of length $N^{r}$ over the alphabet $[r]$. Then there is a number $k\in [r]$ and a 
	subword $u$ of $w$ such that (a) $u$ contains only numbers not smaller than $k$, and (b) $u$ contains the number $k$ at least $N$ times.
\end{corollary}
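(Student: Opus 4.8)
The plan is to obtain Corollary~\ref{lem:wordlngth} as the immediate special case $s=1$ of Lemma~\ref{lem:word_unpump}, which is already established. First I would record the notational identification: by the definitions $[r]=\{1,\ldots,r\}$ and $A_{s,r}=\intv{s}{s+r-1}$, we have $A_{1,r}=\intv{1}{r}=[r]$. Consequently, a word of length $N^{r}$ over the alphabet $[r]$ is literally the same object as a word of length $N^{r}$ over $A_{1,r}$, so Lemma~\ref{lem:word_unpump} applies to it verbatim.

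Next I would invoke Lemma~\ref{lem:word_unpump} with parameters $s=1$ and the given $r,N$, applied to the word $w$ in the hypothesis of the corollary. The lemma produces a symbol $k\in A_{1,r}=[r]$ and a subword $u$ of $w$ such that (a) $u$ contains only numbers not smaller than $k$ and (b) $u$ contains $k$ at least $N$ times. These are exactly conditions (a) and (b) in the statement of the corollary, so nothing further is needed.

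I do not expect any real obstacle here: the single point worth spelling out is the equality $A_{1,r}=[r]$, after which the corollary is a direct instantiation of the lemma. I would therefore keep the written proof to essentially one sentence — ``Apply Lemma~\ref{lem:word_unpump} with $s=1$, using that $A_{1,r}=[r]$.''
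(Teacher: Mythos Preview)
Your proposal is correct and matches the paper's approach exactly: the paper introduces the corollary with the sentence ``We use Lemma~\ref{lem:word_unpump} only for $s=1$, giving the following corollary,'' and gives no further proof. Your observation that $A_{1,r}=[r]$ is the only detail to note, and your one-line proof is precisely what is intended.
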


We also need one additional statement about boundaried graphs and bucket interfaces.

\begin{lemma}\label{lem:subset_bucket}
Let $k,\ell\in\mathbb{N}$.
Suppose $\mathbf{A}=(A,\bar{x})$ and $\mathbf{B}=(B,\bar{y})$ are two $k$-boundaried graphs, 
and suppose further that there is an immersion model $(\phi,\psi)$ of $A$ in $B$ such that $\phi(x_i)=y_i$, for all $i=1,2,\ldots,k$.
Then for every $(k,\ell)$-bucket interface $(b,b',\mu,\mu^*)$, if $\mathbf{B}$ conforms to $(b,b',\mu,\mu^*)$ then also $\mathbf{A}$ conforms to $(b,b',\mu,\mu^*)$. 
\end{lemma}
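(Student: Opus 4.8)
The plan is to show that whenever $\mathbf{B}$ conforms to a bucket interface via some ordering $\sigma$ and $\ell$-bucketing $T$ of $B^*$, we can transport this witness back to $\mathbf{A}$ along the immersion model. First I would recall what conforming means: there is an ordering $\sigma$ of $V(B^*)$ and an $\ell$-bucketing $T$ with $T(v)$ odd for $v\in V(B)$, even for $v\in\{y_1',\dots,y_k'\}$, with $T(y_i)=b(i)$, $T(y_i')=b'(i)$, and the two width conditions $\width(B,\sigma|_B,T|_B,j)\le\mu(j)$ and $\width(B^*,\sigma,T,j)\le\mu^*(j)$ for all $j$. The immersion model $(\phi,\psi)$ of $A$ in $B$ with $\phi(x_i)=y_i$ gives an injection $\phi\colon V(A)\to V(B)$ and edge-disjoint paths realizing each edge of $A$. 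The natural idea is to define the ordering $\sigma_A$ of $V(A^*)$ by pulling back along $\phi$: put $\sigma_A$ to order $V(A)$ exactly as $\sigma$ orders $\phi(V(A))\subseteq V(B)$, and order each $x_i'$ as $\sigma$ orders $y_i'$; define the bucketing $T_A$ by $T_A(v)=T(\phi(v))$ for $v\in V(A)$ and $T_A(x_i')=T(y_i')=b'(i)$. Since $\phi$ is order-preserving by construction, $T_A$ is a valid $\ell$-bucketing of $\sigma_A$; parity is inherited since $\phi(V(A))\subseteq V(B)$ lands in odd buckets; and $T_A(x_i)=T(\phi(x_i))=T(y_i)=b(i)$ as required.

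The heart of the argument is verifying the two width inequalities for $\mathbf{A}$. The key observation is that the immersion model sends edge-disjoint paths to edge-disjoint paths: for any cut $(L,R)$ of $V(A)$ appearing in $\cuts(A,\sigma_A|_A,T_A|_A,j)$, consider the induced cut $(\phi(L)\cup(\text{earlier }B\text{-vertices}),\dots)$ — more precisely, push the cut forward to $V(B)$ via $\phi$ and extend it to a cut in $\cuts(B,\sigma|_B,T|_B,j)$ by using $T$ to decide where the non-branch vertices of $B$ go. Each edge $uv\in E_A(L,R)$ is mapped by $\psi$ to a $\phi(u)$–$\phi(v)$ path in $B$, and since $\phi(u)$ and $\phi(v)$ lie on opposite sides of the pushed-forward cut, this path contains at least one edge crossing that cut; these "witness" crossing edges are distinct across different edges of $A$ because the paths $\psi(E(A))$ are edge-disjoint. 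Hence $|E_A(L,R)|\le|E_B(\text{pushed-forward cut})|\le\width(B,\sigma|_B,T|_B,j)\le\mu(j)$. Taking the maximum over cuts in $\cuts(A,\sigma_A|_A,T_A|_A,j)$ gives $\width(A,\sigma_A|_A,T_A|_A,j)\le\mu(j)$. The same argument with $A^*$ and $B^*$ in place of $A$ and $B$ handles the $\mu^*$ condition — here one uses that $\psi$ extends to an immersion model of $A^*$ in $B^*$ by mapping the pendant edge $x_ix_i'$ to the pendant edge $y_iy_i'$ (recall $\phi(x_i)=y_i$, and $x_i',y_i'$ are fresh pendant vertices), so the edge-disjointness is preserved on the extensions.

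The one delicate point I expect to be the main obstacle is making precise how a cut in $\cuts(A,\sigma_A|_A,T_A|_A,j)$ extends to a cut in $\cuts(B,\sigma|_B,T|_B,j)$ and checking it is really a member of that family: the non-branch vertices of $B$ (those not in $\phi(V(A))$) and, within bucket $j$, the branch vertices of $B$ interleaved with the prefix $L$ of bucket $j\cap\phi(V(A))$ must be assigned to a side consistently with $\sigma|_B$ being a refinement. This works because $\sigma|_B$ restricted to bucket $j$ is a linear order refining $\sigma_A|_A$ restricted to bucket $j$, so a prefix of the $j$-th bucket of $A$ (pushed to $B$) can be completed to a prefix of the $j$-th bucket of $B$ by including all $B$-vertices that precede it in $\sigma$; a symmetric completion on the other side yields a genuine $\sigma$-cut of $T^{-1}(j)$, hence an element of $\cuts(B,\sigma|_B,T|_B,j)$. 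Once this correspondence is set up cleanly, the edge-disjoint-path counting is routine. I would then conclude that $\sigma_A$ and $T_A$ witness that $\mathbf{A}$ conforms to $(b,b',\mu,\mu^*)$, completing the proof.
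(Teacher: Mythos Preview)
Your proposal is correct and follows essentially the same approach as the paper: extend the immersion to $A^*\hookrightarrow B^*$ via $x_i'\mapsto y_i'$ and $x_ix_i'\mapsto y_iy_i'$, pull back the witnessing ordering and bucketing along $\phi^*$, and bound each width by pushing a $\sigma_A$-cut forward to a $\sigma$-cut and counting crossing edges via the edge-disjoint paths $\psi(uv)$. The ``delicate point'' you flag about extending a prefix-cut of bucket $j$ in $A$ to one in $B$ is exactly what the paper sweeps into the single sentence ``by the construction of $(\sigma',T')$ it follows that there is a cut $(L,R)\in\cuts(B,\sigma|_B,T|_B,j)$ with $\phi(L')\subseteq L$ and $\phi(R')\subseteq R$,'' and your resolution of it is correct.
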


\begin{proof}
First, we extend the immersion model $(\phi,\psi)$ to an immersion model $(\phi^*,\psi^*)$ of $A^*$ in $B^*$ by putting $\phi^*(x_i')=y_i'$ and $\psi^*(x_ix_i')=y_iy_i'$ for all $i\in [k]$.
Suppose that ordering $\sigma$ of $V(B^*)$ and its $\ell$-bucketing $T$ certify that $\mathbf{B}$ conforms to $(b,b',\mu,\mu^*)$.
We define ordering $\sigma'$ of $V(A^*)$ and its $\ell$-bucketing $T'$ as follows:
\begin{itemize}
\item For $u,v\in V(A^*)$, we put $u<_{\sigma'}v$ if and only if $\phi^*(u)<_{\sigma'}\phi^*(v)$.
\item For $u\in V(A^*)$, we put $T'(u)=T(\phi^*(u))$.
\end{itemize}
It is easy to see that $T'$ is an $\ell$-bucketing of $\sigma'$.
We now verify that $\sigma'$ and $T'$ certify that $\mathbf{A}$ conforms to $(b,b',\mu,\mu^*)$.
The first two conditions of conforming follow directly from the definition of $\sigma'$ and $T'$, so we are left with the third and the fourth condition.

For the third condition, take any $j\in [\ell]$. It suffices to show that for any cut $(L',R')\in \cuts(A,\sigma'|_A,T'|_A,j)$, we have that $|E_A(L',R')|\leq \mu(j)$.
By the construction of $(\sigma',T')$ it follows that there is a cut $(L,R)\in \cuts(B,\sigma|_B,T|_B,j)$ such that $\phi(L')\subseteq L$ and $\phi(R')\subseteq R$.
Since $(\sigma,T)$ certified that $\mathbf{B}$ conforms to $(b,b',\mu,\mu^*)$, we have that $|E_B(L,R)|\leq \mu(j)$.
Take any $uv\in E_A(L',R')$, and observe that $\psi(uv)$ is a path in $B$ leading from $\phi(u)\in L$ to $\phi(v)\in R$.
Consequently, one of the edges of this path must belong to $E_B(L,R)$.
Since paths $\psi(uv)$ are pairwise edge-disjoint for different edges $uv\in E_A(L',R')$, we infer that 
$$|E_A(L',R')|\leq |E_B(L,R)|\leq \mu(j).$$
This establishes the third condition. The fourth condition follows by the same argument applied to graphs $A^*$ and $B^*$, instead of $A$ and $B$.
\end{proof}

The following theorem is the technical counterpart of Theorem~\ref{thm:main}. Its proof is based 
on Theorem~\ref{thm:myhill_nerode}, Lemma~\ref{lem:linked}, Observation~\ref{obs:bcketsze} and the idea of ``unpumping'' repeating types, presented in the introduction.
The linkedness is used to make sure that within the unpumped segment of the ordering, one can find the maximum possible number of edge-disjoint paths between the parts of the graph on the left side and on the right side
of the segment. This ensures that the graph obtained from unpumping can be immersed in the original one.

\begin{theorem}\label{restated}
Let $k$ be a positive integer. If $G\in \obs_{\leq_{\rm si}}(\C_{k})$, then $|V(G)|\leq N^{k+1}$, where $N=2^{2((k+1)\log\ell +\ell\log(k+2))}+2$ and $\ell=(2k+3) \cdot (2k+6)$.
\end{theorem}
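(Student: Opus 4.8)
The plan is to assume $|V(G)|>N^{k+1}$ and contradict the minimality of $G$ by ``unpumping'' a long, thin stretch of an optimum ordering. Since $G\in\obs_{\leq_{\rm si}}(\C_k)$ it is connected and $\ctw(G)=k+1$, so I would first fix, via Lemma~\ref{lem:linked}, a \emph{linked} ordering $\sigma$ of $V(G)$ with $\ctw_\sigma(G)=k+1$. The sequence of $\sigma$-cut sizes of the non-final prefixes is a word over the alphabet $[k+1]$ of length $|V(G)|-1$. Instead of quoting Corollary~\ref{lem:wordlngth} verbatim I would apply the recursion underlying it directly to $\sigma$: if the minimum cut value $j_1$ is attained at fewer than $N$ positions, those positions split $\sigma$ into at most $N$ contiguous blocks, the longest of which has length more than $(|V(G)|-1)/N-1$ and has all cuts of size $\ge j_1+1$; recurse inside it. Since the minimum cut value strictly increases at each step and never exceeds $k+1$, after at most $k+1$ steps and because $|V(G)|>N^{k+1}$ one reaches a contiguous block of $\sigma$ in which every cut has size $\ge j$, for some $j\le k+1$, and the value $j$ is attained at positions $v_1<_\sigma\dots<_\sigma v_m$ with $m\ge N$ (at the deepest level the minimum value is $k+1$, which is also the maximum, so a long block automatically has many occurrences of it).

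Next I would exploit linkedness: there are $j$ edge-disjoint paths $P_1,\dots,P_j$ from $V^\sigma_{v_1}$ to $V(G)\setminus V^\sigma_{v_m}$, and since each $v_p$-cut ($1\le p\le m$) has size exactly $j$, each $P_t$ crosses it in exactly one edge $e^p_t$; let $x^p_t$ be the endpoint of $e^p_t$ in $V^\sigma_{v_p}$. Then $\mathbf A_p:=(G[V^\sigma_{v_p}],(x^p_1,\dots,x^p_j))$ is a $j$-boundaried graph with $G=\mathbf A_p\oplus\mathbf B_p$, where $\mathbf B_p$ carries the opposite endpoints $y^p_t$. For $p<q$, the portion of $P_t$ from $x^p_t$ to $x^q_t$ lies in $V^\sigma_{v_q}$ and these portions are edge-disjoint; using this together with Lemma~\ref{lem:subset_bucket} (applied to the inclusion $G[V^\sigma_{v_p}]\hookrightarrow G[V^\sigma_{v_q}]$, after a routine relocation of the boundary along those paths) one shows that the sets $\mathcal I_p$ of $(j,\ell_j)$-bucket interfaces to which $\mathbf A_p$ conforms satisfy $\mathcal I_1\supseteq\mathcal I_2\supseteq\dots\supseteq\mathcal I_m$, where $\ell_j:=(2j+1)(2(k+1)+4)$. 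By Observation~\ref{obs:bcketsze} the number of $(j,\ell_j)$-bucket interfaces is at most $2^{2(j\log\ell_j+\ell_j\log(j+1))}\le 2^{2((k+1)\log\ell+\ell\log(k+2))}=N-2$, because $j\le k+1$ and $\ell_j\le\ell=(2k+3)(2k+6)$. A non-increasing chain of subsets of a set with at most $N-2$ elements takes at most $N-1$ distinct values, so among $\mathcal I_1,\dots,\mathcal I_m$ with $m\ge N$ there are $a<b$ with $\mathcal I_a=\mathcal I_b$; i.e.\ $\mathbf A_a$ and $\mathbf A_b$ are $(j,\ell_j)$-similar.

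Finally I would carry out the unpumping with $H:=\mathbf A_a\oplus\mathbf B_b$. Its vertex set is $V(G)$ with the nonempty ``middle'' $M:=V^\sigma_{v_b}\setminus V^\sigma_{v_a}$ removed, so $|V(H)|<|V(G)|$. To see $H\leq_{\rm si}G$, send $V(H)$ into $V(G)$ by the identity, each edge inside $\mathbf A_a$ or inside $\mathbf B_b$ to itself, and each join edge $x^a_ty^b_t$ to the portion of $P_t$ from $x^a_t$ to $y^b_t$; these portions are pairwise edge-disjoint and edge-disjoint from the edges inside $\mathbf A_a,\mathbf B_b$, and all their internal vertices lie in $M$, hence are not branch vertices — so this is a strong immersion model and $H\lneqq_{\rm si}G$. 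In particular $\ctw(H)\le\ctw(G)=k+1$. Now apply Theorem~\ref{thm:myhill_nerode} with $\mathbf A_1:=\mathbf A_a$, $\mathbf A_2:=\mathbf A_b$, $\mathbf B:=\mathbf B_b$ and $r:=k+1$ (the theorem's $\ell$ is $(2j+1)(2r+4)=\ell_j$, and $\ctw(\mathbf A_1\oplus\mathbf B)=\ctw(H)\le r$): it gives $\ctw(H)=\ctw(\mathbf A_a\oplus\mathbf B_b)=\ctw(\mathbf A_b\oplus\mathbf B_b)=\ctw(G)=k+1>k$, contradicting condition~{\bf O2}, which forces $\ctw(H)\le k$. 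Hence $|V(G)|\le N^{k+1}$.

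The main obstacle I expect is the monotonicity $\mathcal I_1\supseteq\dots\supseteq\mathcal I_m$: this is what makes a chain of length only $N=(\#\text{interfaces})+2$ contain a repeated value, and it is precisely where linkedness enters, through the $j$ edge-disjoint paths that simultaneously witness the ``subgraph'' immersions $G[V^\sigma_{v_p}]\hookrightarrow G[V^\sigma_{v_q}]$ feeding Lemma~\ref{lem:subset_bucket} (with the boundary relocated along the path portions, which also needs a little care when several cut edges share an endpoint) and the strong immersion $H\leq_{\rm si}G$ whose subdivision vertices are confined to the middle. A secondary point is the extraction of a contiguous stretch of $\sigma$ on which the cut size stays $\ge j$ while $j$ is attained at least $N$ times; I handle it by the recursion on the minimum cut value above, which is a block-wise refinement of Corollary~\ref{lem:wordlngth}.
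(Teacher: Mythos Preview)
Your overall architecture is the paper's own: linked optimum ordering, a long stretch where some cut value $q$ recurs many times with all cuts $\ge q$ in between, a descending chain of interface sets forcing two equal ones, and then an unpumping that yields a proper strong immersion with the same cutwidth. The recursion you sketch in place of Corollary~\ref{lem:wordlngth} is fine, and your application of Theorem~\ref{thm:myhill_nerode} and the strong-immersion verification for $H$ are correct.

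The one genuine gap is the step you yourself flag as the main obstacle: the monotonicity $\mathcal I_1\supseteq\cdots\supseteq\mathcal I_m$ via Lemma~\ref{lem:subset_bucket}. That lemma needs an immersion of $A_p=G[V^\sigma_{v_p}]$ into $A_q=G[V^\sigma_{v_q}]$ with $\phi(x^p_t)=x^q_t$. The inclusion sends $x^p_t$ to $x^p_t$, not to $x^q_t$, and your ``routine relocation along the paths'' cannot repair this in general. A concrete obstruction: if two cut edges at level $p$ share their left endpoint, say $x^p_1=x^p_2$, while $x^q_1\neq x^q_2$, then no function can satisfy $\phi(x^p_1)=x^q_1$ and $\phi(x^p_2)=x^q_2$. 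Conversely, such an $\mathbf A_p$ can only conform to interfaces with $b(1)=b(2)$, whereas $\mathbf A_q$ may well conform to one with $b(1)\neq b(2)$; so $\mathcal I_p\supseteq\mathcal I_q$ fails. Even when the $x^p_t$ are distinct, moving $x^p_t$ to $x^q_t$ would force every edge of $A_p$ incident to $x^p_t$ to be routed through the single path portion $P_t$, destroying edge-disjointness.

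The paper's fix is exactly to avoid this by attaching \emph{fresh} degree-one boundary vertices $z^i_j$: define $\mathbf G_j$ as $G[V^\sigma_{v_{i_j}}]$ plus new vertices $z^1_j,\dots,z^q_j$ and edges $x^i_j z^i_j$. Then the boundary vertices are always distinct, and the required immersion $\mathbf G_{j_1}\to\mathbf G_{j_2}$ is the identity on $V^\sigma_{v_{i_{j_1}}}$, sends $z^i_{j_1}\mapsto z^i_{j_2}$, and maps each pendant edge $x^i_{j_1}z^i_{j_1}$ to the portion of $P_i$ from $x^i_{j_1}$ to $x^i_{j_2}$ followed by $x^i_{j_2}z^i_{j_2}$. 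This makes Lemma~\ref{lem:subset_bucket} apply verbatim. The price is that $\mathbf G_{j+1}\oplus\mathbf G'$ is now $G$ with the $q$ cut edges subdivided (so its cutwidth equals $\ctw(G)$), and after swapping $\mathbf G_{j+1}$ for $\mathbf G_j$ you must \emph{dissolve} the $z$-vertices to obtain $H$; both adjustments are immediate. With this change, everything else in your argument goes through unchanged.
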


\begin{proof}
	Take any $G\in \obs_{\leq_{si}}(\C_{k})$ and assume, towards a contradiction,  that $|V(G)|> N^{k+1}$. Let
	$\sigma=\langle v_{1},v_{2},\dots, v_{|V(G)|}\rangle$ be a linked optimum cutwidth ordering of $G$, which exists by Lemma~\ref{lem:linked}. 
	We define $c_{i}=\delta(V^{\sigma}_{v_{i}})$, that is, $c_{i}$ is the size of
	the cut between the vertices of $G$ up to $v_{i}$ and the rest of the graph. Notice that since $G\in\obs_{\leq_{si}}(\C_{k})$, we have that $\ctw(G)=k+1$ and $G$ is connected.
	This implies that $c_{i}\in [k+1]$, for every $i\in [|V(G)|-1]$. 
	
	Observe that $c_{1}c_{2}\dots c_{|V(G)|-1}$ is a word of length at least $N^{k+1}$ over the
	alphabet $[k+1]$. From Corollary~\ref{lem:wordlngth}, it follows that there exist $1\leq s\leq t< |V(G)|$ and $q\in [k+1]$ such that for every $s\leq i\leq t$ we have $c_{i}\geq q$,
	and there also exist $N$ distinct indices $s\leq i_{1}<i_{2}<\dots<i_{N}\leq t$ such that $c_{i_j}=q$, for every $j\in [N]$. Without loss of generality
	we may assume that $i_{1}=s$ and $i_{N}=t$.

	For each $j\in [N]$, we can define a $q$-boundaried graph $\mathbf{G}_{j}=(G_{j},(z_{j}^{1},z_{j}^{2},\dots,z_{j}^{q}))$ in the following way.
	First, by linkedness, we find edge-disjoint paths $P_{1}, \dots, P_{q}$ between $V^\sigma_{v_{i_{1}}}$ and $V\setminus V^\sigma_{v_{i_{N}}}$.
	Notice that for each $j\in [N]$ the cut $E_G(V^{\sigma}_{v_{i_j}},V(G)\setminus V^{\sigma}_{v_{i_j}})$ contains exactly one edge of each path $P_{i}$. Denote this edge by $e_{j}^{i}$, for $i\in [q]$.
	For $i\in [q]$, let $x_j^i$ be the endpoint of $e_{j}^{i}$ that belongs to $V^{\sigma}_{v_{i_j}}$, and let $y_j^i$ be the endpoint that does not belong to $V^{\sigma}_{v_{i_j}}$.
	We construct $G_j$ by taking $G[V^{\sigma}_{v_{i_j}}]$, adding fresh boundary vertices $(z_{j}^{1},z_{j}^{2},\dots,z_{j}^{q})$, and adding one fresh edge $x_j^iz_j^i$ for each $i\in [q]$.
	
	Now consider any pair of indices $1\leq j_1<j_2\leq N$. Observe that there exists an immersion model $(\phi,\psi)$ of $\mathbf{G}_{j_1}$ in $\mathbf{G}_{j_2}$ such that
	$\phi(z_{j_1}^i)=z_{j_2}^i$ for each $i\in [q]$. Indeed, we can put $\phi(u)=u$ for each $u\in V(G_{j_1})$ and $\phi(z_{j_1}^i)=z_{j_2}^i$ for each $i\in [q]$.
	Then $\psi$ can be defined by taking $\psi(e)=e$ for each $e\in E(G_{j_1})$ and mapping each edge $x_{j_1}^iz_{j_1}^i$ to an appropriate infix of the path $P_i$, extended by the edge
	$x_{j_2}^iz_{j_2}^i$. Consequently, $\mathbf{G}_{j_1}$ and $\mathbf{G}_{j_2}$ satisfy the prerequisites of Lemma~\ref{lem:subset_bucket}.
	We infer that if by $\zeta(j)$ we denote the set of $(q,\ell$)-bucket interfaces to which $\mathbf{G}_j$ conforms, then
	$$\zeta(1)\supseteq \zeta(2)\supseteq \ldots\supseteq \zeta(N-1)\supseteq \zeta(N).$$
	Observation~\ref{obs:bcketsze} implies that $N$ is larger by more than $1$ than the total number of $(q,\ell)$-bucket interfaces.
	It follows that there exists an index $j$, $1\leq j<N$, such that
	$$\zeta(j)=\zeta(j+1).$$
	In other words, the $q$-boundaried graphs $\mathbf{G}_j$ and $\mathbf{G}_{j+1}$ are $(q,\ell)$-similar.
	
	Define a $q$-boundaried graph $\mathbf{G'}=(G',(y_{j+1}^1,\ldots,y_{j+1}^q))$ by taking $G'=G[V(G)\setminus V^\sigma_{i_{j+1}}]$.
	It can be now seen that $\mathbf{G}_{j+1}\oplus\mathbf{G'}$ is exactly the graph $G$ with every edge of the cut $E_G(V^{\sigma}_{v_{i_j}},V(G)\setminus V^{\sigma}_{v_{i_j}})$ subdivided once.
	Since subdividing edges does not change the cutwidth of the graph, we have that
	\begin{equation}\label{eq:subdivide}
	\ctw(\mathbf{G}_{j+1}\oplus\mathbf{G'})=\ctw(G)>k.
	\end{equation}
	On the other hand, $q$-boundaried graphs $\mathbf{G}_j$ and $\mathbf{G}_{j+1}$ are $(q,\ell)$-similar. 
	Since $\ell\geq (2q+3) \cdot (2q+6)$, by Theorem~\ref{thm:myhill_nerode} we conclude that
	\begin{equation}\label{eq:myhill}
	\ctw(\mathbf{G}_{j}\oplus \mathbf{G}')=\ctw(\mathbf{G}_{j+1}\oplus \mathbf{G}').
	\end{equation}
	Examine the graph $\mathbf{G}_{j}\oplus \mathbf{G}'$. In the join operation, we added an edge $z_j^iy_{j+1}^i$ for each $i\in [q]$, which means each vertex $z_j^i$
	has exactly two incident edges in $\mathbf{G}_{j}\oplus \mathbf{G}'$: one connecting it to $x_j^i$ and one connecting it to $y_{j+1}^i$.
	Let $H$ be the graph obtained from $\mathbf{G}_{j}\oplus \mathbf{G}'$ by dissolving every vertex $z_j^i$, i.e., removing it and replacing edges $x_{j}^iz_j^i$ and $z_j^iy_{j+1}^i$ with a fresh
	edge $x_j^iy_{j+1}^i$. Subdividing edges does not change the cutwidth of a graph, so we obtain that:
	\begin{equation}\label{eq:dissolve}
	\ctw(H)=\ctw(\mathbf{G}_{j}\oplus \mathbf{G}')
	\end{equation}
	Finally, it is easy to see that $G$ admits $H$ as a strong immersion: 
	a strong immersion model of $H$ in $G$ can be constructed by mapping the vertices and edges of $G_j$ and $G'$ identically, and then mapping each of the remaining edges $x_j^iy_{j+1}^i$ to a corresponding infix of
	the path $P_i$. Also, since $i_j<i_{j+1}$, the graph $H$ has strictly less vertices than $G$. However, from Eq.~\eqref{eq:subdivide},~\eqref{eq:myhill}, and~\eqref{eq:dissolve} we conclude that
	$\ctw(H)=\ctw(G)>k$. This contradicts the assumption that $G\in\obs_{\leq_{si}}(\C_{k})$.
	\end{proof}

\begin{proof}[Proof of Theorem~\ref{thm:main}]
Theorem~\ref{restated} provides an upper bound on the number of vertices of a graph in $\obs_{\leq_{si}}(\C_{k})$. 
Observe that since such a graph has cutwidth $k+1$, each of its vertices has degree at most $2(k+1)$.
It follows that any graph from $\obs_{\leq_{si}}(\C_{k})$ has $2^{\Oh(k^3\log k)}$ vertices and edges.
Finally, by Observation~\ref{osesimme} we have ${\bf obs}_{\leq_{\rm i}}(\C_{q})\subseteq  {\bf obs}_{\leq_{si}}(\C_{q})$, so
the same bound holds also for immersions instead of strong immersions. This concludes the proof of Theorem~\ref{thm:main}.	 
	\end{proof}

\section{An algorithm for computing cutwidth}\label{sec:algo}

In this section we present an exact FPT algorithm for computing the cutwidth of the graph.
First, we need to give a dynamic programming algorithm that given an approximate ordering $\sigma$ of width $r$, finds, if possible,
an ordering of width at most $k$, where $k\leq r$ is given.

Our algorithm takes advantage of the given ordering $\sigma$ and essentially computes, for each subgraph of $G$ induced by a prefix of $\sigma$, the $(r,\ell)$-bucket interfaces it conforms to.
More precisely, in Lemma~\ref{lem:dpalgcor} we show that if $G$ has
an optimum ordering of width $k$, then there is an optimum ordering were \emph{each} of these induced subgraphs
occupies at most $\ell = O(rk)$ buckets, allowing to restrict our search to $(r,\ell)$-bucket profiles (a variant of bucket interfaces to be defined later, refined so as to consider border vertices more precisely). 
The proof slightly strengthens that of Lemma~\ref{lem:num_blocks}.

\begin{lemma}\label{lem:dpalgcor}
Let $G$ be a graph with an ordering $\sigma$ of width $r$. 
Then there exists also an ordering $\tau$ of optimum width, i.e., with $\ctw_\tau(G)=\ctw(G)$,
that has the following property: for every prefix $X$ of $\sigma$, the number of $X$-blocks in $\tau$ is at most $2r\cdot \ctw(G)+\ctw(G)+4r+2$.
\end{lemma}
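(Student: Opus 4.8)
The statement strengthens Lemma~\ref{lem:num_blocks} in two ways: we must control the number of blocks for \emph{every} prefix $X$ of $\sigma$ simultaneously, not just for the two sides of a single cut, and the bound involves the width $r$ of the given ordering. The plan is to mimic the swapping argument of Lemma~\ref{lem:num_blocks} but applied to a more refined partition of $V(G)$. For each $i\in[|V(G)|]$, let $X_i$ be the $i$-th prefix of $\sigma$; these prefixes form a chain $\emptyset=X_0\subsetneq X_1\subsetneq\cdots\subsetneq X_n=V(G)$. Since $\sigma$ has width $r$, each cut $(X_i,V(G)\setminus X_i)$ has size at most $r$, so the ``boundary'' of $G$ with respect to $\sigma$ is small at every position. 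The key observation is: an edge $uv$ of $G$ with $u<_\sigma v$ is ``crossing at position $i$'' only for $i$ in the interval $[\mathrm{pos}_\sigma(u),\mathrm{pos}_\sigma(v))$, and at most $r$ such intervals are active at any point. We want to choose an optimum ordering $\tau$ (for the cutwidth of $G$, using $\tau$'s own order, not $\sigma$'s) that does not fragment any prefix $X_i$ too much.

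First I would set up the right minimization. Among all optimum-width orderings $\tau$ of $G$, choose $\tau$ minimizing $\sum_{i} (\text{number of } X_i\text{-blocks in }\tau)$, or equivalently minimizing the total number of ``descents'' in $\tau$ relative to the prefix structure of $\sigma$ — concretely, for consecutive vertices $a,b$ in $\tau$, count a ``break'' whenever $a\in X_i\not\ni b$ or $b\in X_i\not\ni a$ for some $i$; summing over all $i$ this is exactly $|\,\mathrm{pos}_\sigma(a)-\mathrm{pos}_\sigma(b)\,|$. So minimizing $\sum_i(\#X_i\text{-blocks})$ is, up to additive constants, minimizing $\sum_{ab \text{ consecutive in }\tau}|\mathrm{pos}_\sigma(a)-\mathrm{pos}_\sigma(b)|$. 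With this minimizer fixed, I would argue that one cannot swap two adjacent ``blocks'' (maximal runs of $\tau$ that are $\sigma$-monotone or lie within a common small set of prefixes) without increasing either the cutwidth of $G$ or this potential. The notion of ``free block'' from Lemma~\ref{lem:num_blocks} needs to be adapted: a block should be called free if it contains no endpoint of any edge currently crossing the relevant prefix boundary. Since at any position there are at most $r$ active prefix boundaries and each contributes at most... here is where the counting must be done carefully; a block can only be non-free because it touches one of the at most $2r$ (or $O(rk)$) relevant boundary edges, and the rest of the argument — that a maximal run of free blocks has length $O(\ctw(G))$ because the $\mu$-values form a strictly-decreasing-then-strictly-increasing sequence bounded by $\ctw(G)$ — carries over almost verbatim from the proof of Claim~\ref{clm:clm1}.

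The main obstacle, and the place where the $2r\cdot\ctw(G)+\ctw(G)+4r+2$ bound really comes from, is the bookkeeping that ties together all prefixes at once. In Lemma~\ref{lem:num_blocks} there is a single cut of size $\ell$, giving $2\ell$ non-free blocks. Here, for a \emph{fixed} prefix $X=X_i$, the blocks non-free with respect to $X$ are those containing an endpoint of an edge of $E_G(X_i, V\setminus X_i)$, which number at most $2\cdot\delta(X_i)\le 2r$; between consecutive such blocks, the swap argument (now making sure the swap neither increases $\ctw_\tau(G)$ nor the global potential) forces a run of at most $2\ctw(G)+3$ free-for-$X$ blocks. But a block free for $X_i$ might not be $\sigma$-monotone as an $X_i$-block boundary — so I would additionally need to observe that inside a run of $X$-free blocks, $\tau$ restricted there can be taken $\sigma$-sorted on $X$ and on $V\setminus X$, which is exactly what the block-swap does. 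Assembling: $\le 2r$ non-free $X$-blocks, separated by runs of $\le 2\ctw(G)+3$ free $X$-blocks, but one must be careful that a ``free block'' in my refined partition may still be a single $X$-block, so the number of $X$-blocks is at most $(2r+1)(2\ctw(G)+3)$-ish; optimizing the constants and adding the $+1$ padding blocks at the ends gives the claimed $2r\ctw(G)+\ctw(G)+4r+2$. The delicate point is that the swap used for prefix $X_i$ must be \emph{simultaneously} non-harmful for the cutwidth and for the potential $\sum_j(\#X_j\text{-blocks})$; I expect this to follow because swapping two adjacent free-for-$X_i$ blocks $b,b'$ in $\tau$ changes $\#X_j$-blocks by a controlled amount for every $j$ (it can only decrease $\#X_i$-blocks by the alternation argument, and for other $j$ it changes the count by at most a constant that cancels in the telescoping), but verifying this monotonicity of the potential under the chosen swap is the real technical heart of the proof.
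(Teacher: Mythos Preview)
Your plan is viable in spirit but takes a substantially harder route than needed, and the part you yourself flag as ``the real technical heart'' --- that the block swap for one prefix $X_i$ does not increase the global potential $\sum_j(\#X_j\text{-blocks})$ --- is exactly the step you leave unproved. In fact a swap of two adjacent $(X_i,V\setminus X_i)$-blocks can \emph{increase} $\#X_j$-blocks for some $j>i$ (an $X_j$-block spanning the boundary between the two swapped blocks may get split). One can still check that the total potential drops, essentially because the positions satisfy $\min(p,c)>i\geq\max(a',q)$ in the obvious notation, but this bookkeeping is delicate and you have not done it.

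The paper avoids all of this with a much simpler observation you are missing: the prefixes are \emph{nested}, $X_1\subsetneq X_2\subsetneq\cdots\subsetneq X_{n-1}$. One starts from any optimum ordering $\tau_0$ and processes the prefixes one at a time, in increasing order. At step $i+1$, apply the block-swapping procedure of Lemma~\ref{lem:num_blocks} to $\tau_i$ with respect to the cut $(X_{i+1},V\setminus X_{i+1})$, producing $\tau_{i+1}$ of the same (optimum) width with at most $\lambda$ many $X_{i+1}$-blocks. The crucial point is that for every $j\le i$ we have $X_j\subseteq X_{i+1}$, so each $X_j$-block of $\tau_i$ lies entirely inside a single $X_{i+1}$-block; since the swapping only permutes $X_{i+1}$-blocks and $(V\setminus X_{i+1})$-blocks as wholes, every $X_j$-block survives intact in $\tau_{i+1}$, and hence $\#X_j$-blocks does not increase. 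By induction, the final ordering $\tau=\tau_{n-1}$ has at most $\lambda$ many $X_j$-blocks for every $j$. No global potential, no simultaneous monotonicity argument --- the nesting does all the work.

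So: your global-minimizer approach may be rescuable, but it is fighting a difficulty that simply disappears once you process prefixes iteratively and exploit $X_j\subseteq X_{i+1}$.
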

\begin{proof}
Lemma~\ref{lem:num_blocks} asserts that for each cut $(A,B)$ of $G$ of size at most $r$, there exists an optimum-width ordering of $V(G)$ where the number of $(A,B)$-blocks is at most 
$$(2r+1) \cdot (2\ctw(G)+3)+2r=4r\cdot \ctw(G)+2\ctw(G)+8r+3.$$
As $A$-blocks and $B$-blocks appear alternately, at most half rounded up of the $(A,B)$-blocks can be $A$-blocks. 
Hence, the number of $A$-blocks in such an optimum-width ordering is at most $2r\cdot \ctw(G)+\ctw(G)+4r+2$; we denote this quantity by $\lambda$.

The proof of Lemma~\ref{lem:num_blocks} in fact shows that for any ordering $\sigma$ of $V(G)$ and any cut $(A,B)$ of $G$ of size at most $r$, either $\sigma$ already has at most $2\lambda-1$ $(A,B)$-blocks, or an ordering $\sigma'$ can be obtained from $\sigma$ by swapping its $(A,B)$-blocks so that $\sigma'$ has strictly less $(A,B)$-blocks.
Therefore, by reordering $(A,B)$-blocks of $\sigma$, we eventually get a new ordering which has at most $2\lambda-1$ $(A,B)$-blocks, and hence at most $\lambda$ $A$-blocks.

For $i=1,2,\ldots,|V(G)|-1$, let $(A_i,B_i)$ be the cut of $G$, where $A_i$ is the prefix of $\sigma$ of length $i$, while $B_i$ is the suffix of $\sigma$ of length $|V(G)|-i$.
Let $\tau_0$ be any optimum-width ordering of $G$. We now inductively construct orderings $\tau_1,\tau_2,\ldots,\tau_{|V(G)|-1}$, as follows: once $\tau_i$ is constructed, we apply the above reordering procedure to $\tau_i$ and cut $(A_{i+1},B_{i+1})$. 
This yields a new ordering $\tau_{i+1}$ of optimum width such that the number of $A_{i+1}$-blocks in $\tau_{i+1}$ is at most $\lambda$.
Furthermore, $\tau_{i+1}$ is obtained from $\tau_i$ by reordering $A_{i+1}$- and $B_{i+1}$-blocks in $\tau_i$.
Hence, whenever $X$ is a subset of $A_{i+1}$, then
any $X$-block in $\tau_i$ remains consecutive in $\tau_{i+1}$, as it is contained in one $A_{i+1}$-block in $\tau_i$ that is moved as a whole in the construction of $\tau_{i+1}$.
Consequently, if for all $j\leq i$ we had that the number of $A_j$-blocks in $\tau_i$ is at most $\lambda$, then this property is also satisfied in $\tau_{i+1}$.
It is now clear that a straightforward induction yields the following invariant: for each $j\leq i$, then number of $A_j$-blocks in $\tau_i$ is at most $\lambda$.
Therefore $\tau=\tau_{|V(G)|-1}$ gives an ordering with the claimed properties.
\end{proof}

\paragraph{Bucket profiles.}
We now define a refinement of the widths of the buckets of a bucket interface as well as a refinement of the notion
of bucket interfaces. They are  used in the dynamic programming algorithm  of Lemma~\ref{lem:dynamicCtw}.
\begin{definition}
	Let $(G,\bar{x})$ be a $k$-boundaried graph and let $S = \{x_1,\dots,x_k,x_1',\dots,x_k'\}\subseteq V(G^*)$. 
	Let now $\sigma$ be an ordering of $V(G^*)$ and $T$ be an $\ell$-bucketing of $\sigma$.
	For every bucket $T^{-1}(i)$, $i \in [\ell]$, let $T^{-1}(i)\cap S=\{v_{1},v_{2},\dots,v_{p}\}$ for some $v_{1}<_{\sigma} v_{2}<_{\sigma} \dots <_{\sigma} v_{p}$; we then define 
$$T^{-1}_{j}(i)  = 
	\begin{cases} 
	 \{v\in T^{-1}(i)\colon v<_{\sigma} v_{1} \} & \mbox{for } j=0,\\
	 \{v\in T^{-1}(i)\colon v_{j}<_{\sigma} v<_{\sigma} v_{j+1} \} & \mbox{for } j\in [p-1],\\
	 \{v\in T^{-1}(i)\colon v_{p}<_{\sigma} v \} & \mbox{for } j=p.
	 \end{cases}$$
	Let also $\cuts(G,\sigma,T,i,j)$ be the family of $\sigma$-cuts containing on one side all vertices 
	appearing before $v_{j-1}$ (or, if $j=0$, all vertices of buckets appearing before 
	bucket $i$) and a prefix (in $\sigma$) of $T^{-1}_{j}(i)$. For an ordering $\sigma$ of the vertices of a graph $G$, 
	define the \emph{width of $j$-th segment} $T^{-1}_{j}(i)$ of the bucket $i$, $i\in[\ell]$, $j\in [0,p]$, 
	as the maximum width of any cut in the family $\cuts(G,\sigma,T,i,j)$.
	Formally,
	\begin{eqnarray*}
	\cuts(G,\sigma,T,i,j) & = & \left\{ \left(T^{-1}(\{1,\dots,i-1\})\cup L, T^{-1}(\{i+1,\dots,\ell\})\cup R\right) \colon \right. \\
	& &~~\left. (L,R)\mbox{ is a }\sigma\mbox{-cut of }T^{-1}(i)\mbox{ with }v_j\in L\mbox{ and }v_{j+1}\in R\right\},\\
	\width(G,\sigma,T,i,j) & = & \max \left\{\, \left|E_G(L,R)\right|\ \colon\ (L,R) \in \cuts(G,\sigma,T,i,j)\, \right\}.
	\end{eqnarray*}
\end{definition}	%

\noindent We also need to refine the notion of a $(k,\ell)$-bucket interface. 
\begin{definition}\label{def:profile}
	For $k,\ell\in\mathbb{N}$, a \emph{($k$,$\ell$)-bucket profile} consists of functions:
	\begin{itemize}
		\item $b,b': [k] \to [\ell]$ identifying the buckets which contain $x_{i}$ and $x_{i}'$, respectively,
		\item $p,p':[k]\to [k]$ highlighting the ordering between the vertices $x_{i}$ and $x_{i}'$ inside a bucket, respectively,
		\item $\nu: [\ell]\times [0,k] \to [0,k]$ corresponding to the widths of segments of buckets defined by the vertices $x_{i}$, respectively.
	\end{itemize}

	A $k$-boundaried graph $\mathbf{G}$ \emph{conforms} with a $(k,\ell)$-bucket profile,
	if there exists an ordering $\sigma$ of the vertices of $G^{*}$ and an $\ell$-bucketing $T$ such that:

	\begin{itemize}
		\item $T(v)$ is odd for $v\in V(G)$ and even for $v\in \{x_{1}',\dots,x_k'\}$,
		\item $T(x_{i}) = b(i)$ and $T(x_{i}') = b'(i)$, for each $i\in [k]$,
		\item $p(i)<p(j)$, if $b(i)=b(j)$ and $x_{i}<_{\sigma} x_{j}$, and $p'(i)<p'(j)$ if $b'(i)=b'(j)$ and $x_{i}'<_{\sigma} x_{j}'$,
		\item $\width(G, \sigma|_{G}, T|_{G}, j,s) = \nu(j,s)$, for each $j\in [\ell]$ and $s\in [0,k]$.
	\end{itemize}
\end{definition}

From the fact that the boundary vertices of a $k$-boundaried graph $\mathbf{G}$ split the buckets defined by $T$ into at most $2k$ segments in total it follows that:

\begin{observation}\label{obs:prflsze}
For any pair $(k,\ell)$ of positive integers, there is a set of at most $$2^{2k(\log\ell +\log k)+(\ell+2k)\log(k+1)}$$ $(k,\ell)$-bucket profiles that
a $k$-boundaried graph $\mathbf{G}$ can possibly conform with, and this set can be constructed in time polynomial in its size.
\end{observation}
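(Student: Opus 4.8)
The proof is a direct counting argument: I would enumerate, component by component, the number of ways each of the functions in a $(k,\ell)$-bucket profile can be chosen, multiply the bounds together, and check that the product is at most $2^{2k(\log\ell+\log k)+(\ell+2k)\log(k+1)}$. The functions to count are: $b,b'\colon[k]\to[\ell]$, $p,p'\colon[k]\to[k]$, and $\nu\colon[\ell]\times[0,k]\to[0,k]$. The naive count for $\nu$ would be $(k+1)^{\ell(k+1)}$, which is too large, so the key point — and the only place where real care is needed — is to exploit the structural fact stated just before the observation: the boundary vertices $x_1,\dots,x_k,x_1',\dots,x_k'$ split the buckets into at most $2k$ nonempty segments \emph{in total} (across all of $[\ell]$), not $k+1$ segments per bucket. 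Hence only $O(\ell+k)$ of the values $\nu(j,s)$ are actually meaningful, and this is what brings the exponent down.

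Concretely, first I would bound the choices for $b,b',p,p'$: each of $b,b'$ is one of $\ell^k$ functions and each of $p,p'$ is one of $k^k$ functions, so together these contribute at most $\ell^{2k}\cdot k^{2k}=2^{2k\log\ell+2k\log k}=2^{2k(\log\ell+\log k)}$. This matches the first summand in the exponent. Next, for $\nu$: I would observe that for a fixed bucket $j$ containing $p_j$ of the $2k$ boundary vertices, the segments $T^{-1}_0(j),\dots,T^{-1}_{p_j}(j)$ number $p_j+1$, so the total number of segments over all buckets is $\sum_{j\in[\ell]}(p_j+1)=\ell+\sum_j p_j=\ell+2k$ (each of the $2k$ boundary vertices is counted in exactly one bucket). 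For each of these $\ell+2k$ segments the value of $\nu$ lies in $[0,k]$, giving $(k+1)^{\ell+2k}=2^{(\ell+2k)\log(k+1)}$ possibilities; for the remaining (empty) segments $\nu$ can be fixed to a default value, so they contribute nothing. Multiplying, the total is at most $2^{2k(\log\ell+\log k)+(\ell+2k)\log(k+1)}$, as claimed.

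A small subtlety I would be careful about is that the partition of the $2k$ boundary vertices among the $\ell$ buckets is itself determined by $b$ and $b'$, which I have already counted; so when I write "for each choice of how the $2k$ boundary vertices are distributed" I am not double-counting — rather, I am observing that \emph{whatever} that distribution is, the number of relevant segments is always exactly $\ell+2k$, so the bound $2^{(\ell+2k)\log(k+1)}$ on the $\nu$-part is uniform. For the ``constructible in polynomial time'' clause I would simply note that one can enumerate all tuples $(b,b',p,p',\nu)$ by brute force in time polynomial in the (already bounded) number of such tuples, outputting each candidate profile; no cleverness is needed. The only genuine obstacle is getting the segment-counting identity $\sum_{j}(p_j+1)=\ell+2k$ right and making sure the empty segments are handled by a convention rather than contributing to the count — everything else is routine arithmetic with logarithms.
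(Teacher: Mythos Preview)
Your proposal is correct and matches the paper's approach. The paper states this only as an observation with a one-line justification (``the boundary vertices \ldots\ split the buckets \ldots\ into at most $2k$ segments in total''), and your write-up simply fills in the arithmetic that this hint leaves implicit: $\ell^{2k}$ for $b,b'$, $k^{2k}$ for $p,p'$, and $(k+1)^{\ell+2k}$ for the values of $\nu$ on the $\sum_j(p_j+1)=\ell+2k$ meaningful segments.
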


The $(k,\ell)$-bucket profiles that Observation~\ref{obs:prflsze} refers to will be called {\em{valid}}.
By making use of these two notions we ensure that we will be able to update the widths of each bucket every time 
a new vertex is processed by the dynamic programming algorithm. We are now ready to prove Lemma~\ref{lem:dynamicCtw}.

\begin{lemma}\label{lem:dynamicCtw} Let $r\in\NN^+$.
	Given a graph $G$ and an ordering $\sigma$ of its vertices with $\ctw_{\sigma}(G)\leq r$, 
	an ordering $\tau$ of the vertices of $G$ with $\ctw_{\tau}(G)=\ctw(G)$ can be computed 
	in time $2^{\Oh(r^2 \log r)} \cdot |V(G)|$.
\end{lemma}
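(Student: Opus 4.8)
The plan is to run a left-to-right dynamic programming over the prefixes of the input ordering $\sigma=\langle v_1,\dots,v_n\rangle$, building an optimum-width ordering $\tau$ of $G$ bucket by bucket. Set $k$ to range over $\{0,1,\dots,r\}$ (we will look for an ordering of width exactly $k$, taking the smallest feasible $k$), and set $\ell = 2rk+k+4r+2+1 = \Oh(r^2)$ to be the bound from Lemma~\ref{lem:dpalgcor} on the number of $X$-blocks (plus a constant slack so that we may pad with empty buckets and assume $A$-blocks sit in odd buckets, $B$-blocks in even buckets). The state of the DP after processing the prefix $X_i=\{v_1,\dots,v_i\}$ is the set of $(k,\ell)$-bucket profiles that the $k$-boundaried graph $\mathbf G_i$ conforms with, where $\mathbf G_i$ is $G[X_i]$ with its at most $2k$ boundary ``stubs'' — the endpoints in $X_i$ and in $V(G)\setminus X_i$ of the at most $r$ (hence at most $k\le r$ after the construction; we really take all cut edges, so at most $r$) edges of the cut $(X_i,V(G)\setminus X_i)$. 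Actually the cut $(X_i,\overline{X_i})$ has size at most $r$, so $\mathbf G_i$ is naturally an $r$-boundaried graph; we track $(r,\ell)$-bucket profiles of it. By Observation~\ref{obs:prflsze} the number of valid $(r,\ell)$-bucket profiles is $2^{\Oh(r\log \ell + \ell\log r)} = 2^{\Oh(r^2\log r)}$, so each DP cell is a subset of a set of that size, and there are $n$ cells.

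The transition from $X_{i-1}$ to $X_i=X_{i-1}\cup\{v_i\}$ must account for: the boundary changing (edges incident to $v_i$ that previously went from $X_{i-1}$ to $\overline{X_{i-1}}$ now become internal, and edges from $v_i$ to $\overline{X_i}$ become new boundary stubs); and the extension graph changing accordingly. The key observation making the transition local is exactly Lemma~\ref{lem:width_by_parts}: when $\ell$-bucketings respect the $(\mathbf A,\mathbf B)$-split, the width of each bucket decomposes as a sum of contributions from the two sides. So from a profile of $\mathbf G_{i-1}$ we can enumerate, in time depending only on $r$ and $\ell$, all ways to (i) assign $v_i$ to an odd bucket $b$, (ii) insert $v_i$ into the $\sigma$-order inside that bucket relative to the existing boundary segments, and (iii) update the segment widths $\nu(j,s)$ using the edges between $v_i$ and the old boundary stubs/their positions; this yields the set of profiles of $\mathbf G_i$. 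Crucially, the information recorded in a profile — bucket and in-bucket position of every boundary stub, and the width of every segment cut by the boundary stubs — is, by Lemma~\ref{lem:width_by_parts} applied repeatedly, exactly what is needed to recompute these quantities after one vertex is added and after the final join; this is the formal content of the ``refined Myhill--Nerode'' flavour behind Definition~\ref{def:profile}, and the reason profiles, rather than the coarser interfaces of Definition~\ref{def:interface}, are needed (a single new vertex may sit between two boundary stubs in the same bucket). The total running time is $n$ times $2^{\Oh(r^2\log r)}$ for enumerating profiles and, for each, a $\poly(r,\ell)=\poly(r)$ amount of bookkeeping per transition, i.e.\ $2^{\Oh(r^2\log r)}\cdot n$ as claimed.

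Correctness has two directions. For soundness, every profile we ever place in a cell is witnessed by an actual ordering/bucketing of $\mathbf G_i^*$; by Lemma~\ref{lem:width_by_parts} and Eq.~\eqref{eq:cwdthsgm}, when we reach $i=n$ (where the boundary is empty and $\mathbf G_n = G$) any profile in the final cell with all segment widths $\le k$ yields an ordering $\tau$ of $G$ with $\ctw_\tau(G)\le k$, and we output it by stitching together the per-bucket orders recorded by the DP. For completeness, we invoke Lemma~\ref{lem:dpalgcor}: if $\ctw(G)=k\le r$ there is an optimum ordering $\tau$ in which every prefix $X_i$ of $\sigma$ occupies at most $2rk+k+4r+2 < \ell$ $\tau$-blocks; padding to $\ell$ buckets with parities as above, this $\tau$ together with its induced bucketing defines, for each $i$, a $(r,\ell)$-bucket profile of $\mathbf G_i$, and a straightforward induction on $i$ shows each such profile is discovered by the DP (the inductive step is precisely one of the transitions enumerated above, applied to the profile of $\mathbf G_{i-1}$ coming from the same $\tau$). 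Hence for the minimal feasible $k\le r$ the algorithm finds and outputs an optimum-width ordering; if no $k\le r$ works, then since $\ctw(G)\le\ctw_\sigma(G)\le r$ this cannot happen, so the algorithm always succeeds.

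**Main obstacle.**
The technically delicate point is the bookkeeping in the transition step: one must argue that knowing only the $(r,\ell)$-bucket \emph{profile} of $\mathbf G_{i-1}$ — not the whole graph $G[X_{i-1}]$ — suffices to compute all profiles of $\mathbf G_i$, including the updated segment widths after $v_i$'s edges to the old boundary are internalized and after $v_i$ possibly splits an existing segment into two. This is where the profile's extra data (in-bucket positions $p,p'$ of the stubs and the finer per-segment widths $\nu(j,s)$, as opposed to just per-bucket widths) is essential, and where a careful but routine application of Lemma~\ref{lem:width_by_parts} (treating $\{v_i\}$-together-with-its-new-stubs as one boundaried side) closes the argument. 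Everything else — the state count, the $n$ cells, reconstructing $\tau$ from the winning chain of profiles — is immediate from Observation~\ref{obs:prflsze} and Lemma~\ref{lem:dpalgcor}.
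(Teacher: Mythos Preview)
Your proposal is correct and follows essentially the same approach as the paper: a left-to-right dynamic program over prefixes of $\sigma$ whose states are bucket profiles (the paper phrases the same computation as reachability in an auxiliary digraph on pairs $(w,P)$), with transitions that insert one vertex and update the per-segment widths, completeness via Lemma~\ref{lem:dpalgcor}, and the state bound via Observation~\ref{obs:prflsze}. One small slip: your $\ell = 2rk+k+4r+3$ counts only the $X$-blocks plus one, but the bucketing alternates odd ($X$) and even (extension) buckets, so you need roughly twice the $X$-block bound (the paper takes $\ell = 4rk+2k+8r+4$); this does not affect your $\ell=\Oh(r^2)$ asymptotics or the claimed running time.
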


\begin{proof}
The algorithm attempts to compute an ordering of width $k$ for consecutive $k=0,1,2,\ldots$. 
The first value of $k$ for which the algorithms succeeds is equal to the value of the cutwidth, and then the constructed ordering may be
returned. Since there is an ordering of width $r$, we will always eventually succeed for some $k\leq r$, which implies that we will make at most $r+1$ iterations.
Hence, from now on we may assume that we know the target width $k\leq r$ for which we try to construct an ordering.

Given a graph $G$ and an ordering $\sigma$ of its vertices with $\ctw_{\sigma}(G)\leq r$ we
denote by $G_{w}$ the graph induced by the vertices of the prefix of $\sigma$ of length $w$.
Then we naturally define the boundaried graph $\mathbf{G}_{w}$, where we introduce a boundary vertex $x_i$ 
for each edge $e_i$ of the cut $E_G(V(G_{w}),V(G)\setminus V(G_{w})$. Note that this cut has at most $r$ edges.

By Lemma~\ref{lem:dpalgcor}, we know that there is an optimum-width ordering $\tau$ such that every prefix $V(G_w)$ of $\sigma$ has at most $\ell$ blocks in $\tau$.
Our dynamic programming algorithm will simply inductively reconstruct all $(k,\ell)$-bucket profiles that may correspond to $V(G_w)$-blocks in $\tau$, for each consecutive $w$ in the ordering $\sigma$, eventually reconstructing $\tau$, if $\ctw_\tau(G)\leq k$.

We now construct an auxiliary directed graph $D$ that will model states and transitions of our dynamic programming algorithm.
Let $\ell=4rk+2k+8r+4$.
First, for every $w\in [0,|V(G)|]$ and every valid $(k,\ell)$-bucket profile $P$, we add
a vertex $(w,P)$ to $D$. 
Thus, by Observation~\ref{obs:prflsze}, the digraph $D$ has at most $$2^{2k(\log\ell+\log k)+(\ell+2k)\log(k+1)}\cdot (|V(G)|+1)=2^{\Oh(r^{2}\log r)}\cdot |V(G)|$$ vertices.
We add an edge  $((w,P),(w+1,P'))$, whenever the $(k,\ell)$-bucket profile 
$P$ can be {\em{expanded}} to the $(k,\ell)$-bucket profile $P'$ in the sense that we explain now.

We describe which bucket profiles $P'$ expand $P$ by guessing where the new vertex would land in the bucket profile $P$, assuming that $\mathbf{G}_w$ conforms to $P$. 
After the guess is made, the updated profile $P$ becomes the expanded profile $P'$.
Different guesses lead to different profiles $P'$ which extend $P$; this corresponds to different ways in which the construction of the optimum ordering can continue.
As describing the details of this expansion relation is a routine task, we prefer to keep the description rather informal, and leave working out all the formal details to the reader.

Let $v_{w+1}$ be the $(w+1)$-st vertex in the ordering $\sigma$, that is, $v_{w+1}\in V(G_{w+1})\setminus V(G_{w})$.
We construct (by guessing) a $(k,\ell)$-bucket profile $P'$ from the $(k,\ell)$-bucket profile $P$ in the following
way.
First, we guess an even bucket of $P$ to place each one of the vertices in $V(G_{w+1}^{*})\setminus V(G_{w}^{*})$: 
the new vertices of the extension that correspond to new edges of the cut $E_G(V(G_{w+1}),V(G)\setminus V(G_{w+1}))$ that are incident to $v_{w+1}$. 
Notice that each bucket contains, at any moment, at most $r$ vertices. Therefore, we have at most $r+1$ possible choices
about where each vertex will land in each bucket (including the placing in the order, as indicated by the function $p'(\cdot)$.  
Notice also that there are at most $r+1$ vertices in $V(G_{w+1}^{*})\setminus V(G_{w}^{*})$. Therefore we have at most
$(\ell (r+1))^{r+1}$ options for this guess.

Next, we choose the place $v_{w+1}$ is going to be put in.
If $v_{w+1}$ is an endpoint of an edge from the cut $E_G(V(G_w),V(G)\setminus V(G_w))$, then this place is already indicated by functions $b'(\cdot)$ and $p'(\cdot)$ in the bucket profile $P$;
if there are multiple edges in the cut $E_G(V(G_w),V(G)\setminus V(G_w))$ that have $v_{w+1}$ as an endpoint, then all of them must be placed next to each other in the same even bucket (otherwise $P$ has no extension).
Otherwise, if $v_{w+1}$ is not an endpoint of an edge from $E_G(V(G_w),V(G)\setminus V(G_w))$, we guess the placing of $v_{w+1}$ by guessing
an even bucket (one of at most $\ell+1$ options) together with a segment between two consecutive extension vertices in this bucket (one of at most $r+1$ options).

The placing of $v_{w+1}$ may lead to one of three different scenarios; we again guess which one applies.
First, $v_{w+1}$ can establish a new odd bucket and split the even bucket into which it was put into two new even buckets, one
on the left and one on the right of the new odd bucket containing $v_{w+1}$; the other extension vertices placed in this bucket are split accordingly.
Second, $v_{w+1}$ can be present at the leftmost or rightmost end of the even bucket it is placed in, so it gets merged into the neighboring odd bucket.
Finally, if the even bucket in which $v_{w+1}$ is placed did not contain any other extension vertices of $G_w^*$, then $v_{w+1}$ can be declared to be the last vertex placed in this bucket,
in which case we merge it together with both neighboring odd buckets.
In these scenarios, whenever the extended profile turns out to have more than $\ell$ buckets, we discard this option.

Having guessed how the placing of $v_{w+1}$ will affect the configuration of buckets, we proceed with updating the sizes of cuts, as indicated by the function $\nu(\cdot)$.
For this, we first examine all the edges of the cut $E_G(V(G_w),V(G)\setminus V(G_w))$ that have $v_{w+1}$ as an endpoint. These edges did not contribute to the values of $\nu(\cdot)$ in the bucket
profile $P$, but should contribute in $P'$. Note that given the placement of $v_{w+1}$, for each such edge we exactly see over which segments this edge ``flies over'', and therefore we can update the values
of $\nu(\cdot)$ for these segments by incrementing them by one. Finally, when $v_{w+1}$ got merged to a neighboring odd bucket (or to two of them), we may also need to take into account one more cut in
the value of $\nu(\cdot)$ for the last/first segment of this bucket: the one between $v_{w+1}$ and the vertices placed in this bucket. 
It is easy to see that from the value of $\nu(\cdot)$ for the segment
in which $v_{w+1}$ is placed, and the exact placement of the endpoints of all the boundary edges, we can deduce the exact size of this cut. Hence, the relevant value of $\nu(\cdot)$ can be efficiently updated by
taking the maximum of the old value and the deduced size of the cut. We update $\nu$ in a similar fashion when $v_{w+1}$ merges with both neighboring odd buckets.
If at any point any of the values of $\nu(\cdot)$ exceeds $k$, we discard this guess.

This concludes the definition of the extension. For every $(k,\ell)$-bucket profile $P$ and every $(k,\ell)$-bucket profile $P'$ that extends it, we add to $D$ an arc from $(w,P)$ to $(w+1,P')$.
It is easy to see from the description above that, given $P$ and $P'$, it can be verified in time polynomial in $r$ whether such an arc should be added.

Finally, in the graph $D$ we determine using, say, depth-first search, whether there is a directed path from node $(0,P_\emptyset)$ to node $(|V(G)|,P_{\rm full})$, where $P_\emptyset$ is an empty bucket profile
and $P_{\rm full}$ is a bucket profile containing just one odd bucket.
It is clear from the construction that if we find such a path, then by applying operations recorded along such a path we obtain an ordering of the vertices of $G$ of width at most $k$.
On the other hand, provided $k=\ctw(G)$, by Lemma~\ref{lem:dpalgcor} we know that there is always an optimum-width ordering $\tau$ such that every prefix of $\sigma$ has at most $\ell$ blocks in $\tau$.
Then the $(k,\ell)$-bucket profiles naturally defined by the prefixes of $\sigma$ in $\tau$ define a path from $(0,P_\emptyset)$ to $(|V(G)|,P_{\rm full})$ in $D$.

The graph $D$ has $2^{\Oh(r^{2}\log r)}\cdot|V(G)|$ vertices and arcs, and the depth-first search runs in time linear in its size.
It is also trivial to reconstruct the optimum-width ordering of the vertices of $G$ from the obtained path in linear time.
This yields the promised running time bounds.
\end{proof}

%

Having the algorithm of Lemma~\ref{lem:dynamicCtw}, a standard application of the iterative compression technique immediately yields a $2^{\Oh(k^2 \log k)} \cdot n^2$ time 
algorithm for computing cutwidth, as sketched in Section~\ref{sec:intro}.
Simply add the vertices of $G$ one by one, and apply the algorithm of Lemma~\ref{lem:dynamicCtw} at each step.
However, we can make the dependence on $n$ linear by adapting the approach of Bodlaender~\cite{Bodlaender96}; more precisely, we make bigger steps.
Such a big step consists of finding a graph $H$ that can be immersed in the input graph $G$, which is smaller by a constant fraction, but whose cutwidth is not much smaller.
This is formalised in Lemma~\ref{lem:reduce}. For its proof we 
we need the following definition and a known result about obstacles to small cutwidth.

\begin{definition}
A {\em perfect binary tree} is a rooted binary tree in which all interior nodes have two children and all leaves have the same distance from its root.
The {\em height} of a perfect binary tree is the distance between its root and one of its leaves.
\end{definition}

\begin{lemma}[\!\!\cite{Takahashi94-Mi,KORACH199397,GOVINDAN2001189}]\label{lem:ekjgherk}
If $T$ is a perfect binary tree of height $2k$, then $\ctw(T)\geq k$.
\end{lemma}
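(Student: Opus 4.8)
The plan is to prove the statement by induction on $k$, in the form: every perfect binary tree of height $2k$ has cutwidth at least $k$. The cases $k\in\{0,1\}$ are immediate, since a perfect binary tree of height $0$ is a single vertex, while one of positive height is connected and has an edge. For the inductive step I would fix $k\ge 2$ and a perfect binary tree $T$ of height $2k$ with root $r$ and depth-$1$ vertices $u_{1},u_{2}$; deleting $\{r,u_{1},u_{2}\}$ leaves four pairwise vertex-disjoint perfect binary trees $T_{1},T_{2},T_{3},T_{4}$ of height $2k-2$, namely those rooted at the four depth-$2$ vertices, and crucially each $T-V(T_{i})$ is still connected (the three remaining subtrees are each joined to the path $u_{1}ru_{2}$).

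Now fix an arbitrary ordering $\sigma=\langle v_{1},\dots,v_{n}\rangle$ of $V(T)$ and suppose for contradiction that every $\sigma$-cut has size at most $k-1$. First I would note that $\sigma|_{V(T_{i})}$ is an ordering of $T_{i}$, so its width is at least $\ctw(T_{i})\ge k-1$, the last inequality by the induction hypothesis; thus there are two $T_{i}$-vertices consecutive in $\sigma|_{V(T_{i})}$ such that at least $k-1$ edges of $T_{i}$ have one endpoint before and one after them. Since no vertex of $T_{i}$ lies between these two in $\sigma$, these same $k-1$ edges cross the $\sigma$-cut at the corresponding position $q_{i}$, and $T_{i}$ has a vertex strictly on each side of $q_{i}$. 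By assumption the $\sigma$-cut at $q_{i}$ has size at most $k-1$, so in fact exactly these $k-1$ edges of $T_{i}$ cross it and no edge of $T-V(T_{i})$ does; as $T-V(T_{i})$ is connected, all of $V(T)\setminus V(T_{i})$ lies on one side of $q_{i}$. Combined with the fact that $T_{i}$ has a vertex on each side, this forces $V(T_{i})$ to be an initial segment of $\sigma$ (so $v_{1}\in V(T_{i})$) or a final segment of $\sigma$ (so $v_{n}\in V(T_{i})$). To finish, I invoke disjointness: at most one of $V(T_{1}),\dots,V(T_{4})$ can contain $v_{1}$ and at most one can contain $v_{n}$, yet each of the four must contain $v_{1}$ or $v_{n}$ --- a contradiction. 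Hence $\ctw_{\sigma}(T)\ge k$ for every $\sigma$, i.e.\ $\ctw(T)\ge k$.

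I expect two slightly delicate points. The first is the passage from ``$\sigma|_{V(T_{i})}$ has a wide cut'' to ``some genuine $\sigma$-cut is crossed by at least $k-1$ edges of $T_{i}$'': one must place the cut of $\sigma|_{V(T_{i})}$ right after a $T_{i}$-vertex and observe that the edges of $T_{i}$ it separates are exactly those crossing the global gap $q_{i}$, no matter which non-$T_{i}$ vertices happen to sit in between. The second is realizing that the pigeonhole step needs at least three pairwise-disjoint height-$(2k-2)$ subtrees, which is precisely why the induction peels off the root together with \emph{both} of its children rather than the root alone. A shorter but less self-contained alternative would be to use $\pw(G)\le\ctw(G)$ (turn a width-$w$ ordering into a path decomposition with bags of size at most $w+1$) together with the known value $\lceil h/2\rceil$ for the pathwidth of the complete binary tree of height $h$; I would prefer the direct argument above, as it avoids relying on pathwidth bounds.
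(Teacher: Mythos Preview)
The paper does not prove this lemma; it is quoted from the literature, so there is no in-paper argument to compare against. Your inductive proof is correct and self-contained.

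One phrasing to tighten: from ``all of $V(T)\setminus V(T_i)$ lies on one side of $q_i$'' together with ``$T_i$ has a vertex on each side'' you do \emph{not} get that $V(T_i)$ is an initial or final segment of $\sigma$; indeed $T_i$ has vertices on both sides of $q_i$, so $V(T_i)$ is certainly not a contiguous prefix or suffix. What you actually obtain is that one side of the $\sigma$-cut at $q_i$ consists solely of $T_i$-vertices, hence $v_1\in V(T_i)$ or $v_n\in V(T_i)$. Since that weaker conclusion is exactly what your pigeonhole step uses, the argument stands; just replace the ``initial/final segment'' sentence with this.
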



%
\begin{lemma}\label{lem:reduce}
	There is an algorithm that given a positive integer $k$ and a graph $G$, works in time $\Oh(k^2 \cdot |V(G)|)$ and either concludes that $\ctw(G) > k$, 
	or finds a graph $H$ immersed in $G$ such that $|E(H)| \leq |E(G)| \cdot (1-1/(2k+1)^{4(k+1)+3})$ and $\ctw(G) \leq 2\ctw(H)$.
	Furthermore, in the latter case, given an ordering $\sigma$ of the vertices of $H$, an ordering $\tau$ of the vertices of $G$ with $\ctw_{\tau}(G)\leq 2\ctw_{\sigma}(H)$ 
	can be computed in $\Oh(|V(G)|)$ time.
\end{lemma}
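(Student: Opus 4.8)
The plan is to imitate Bodlaender's linear-time treewidth algorithm~\cite{Bodlaender96}, replacing contraction of a large matching by an immersion-compatible ``compression'' of many disjoint bounded-size pieces. First I would dispose of easy rejections: if $G$ has a vertex $v$ of degree more than $2k$, report $\ctw(G)>k$. This is correct because for any ordering $\sigma$, the two $\sigma$-cuts flanking a non-first vertex $v$ together contain every edge incident to $v$, so some cut has size at least $\lceil\deg(v)/2\rceil$; for the first vertex its cut equals its degree. Hence from now on we may assume $\Delta(G)\le 2k$, so $|E(G)|\le k\cdot|V(G)|$, and (processing components separately) that $G$ is connected. We may also assume $|V(G)|$ exceeds a threshold $\Theta=(2k+1)^{O(k)}$; smaller instances are not passed to this routine by the main algorithm (they are handled by the $O(n^2)$ iterative-compression algorithm).

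The heart of the argument is a structural dichotomy, with $h:=2(k+1)$. I claim that a connected graph of maximum degree at most $2k$ with more than $\Theta$ vertices satisfies one of the following, detectably in $O(k^2\cdot|V(G)|)$ time. Case (i): $G$ contains a subgraph isomorphic to a subdivision of the perfect binary tree of height $h$. Then the algorithm reports $\ctw(G)>k$: since cutwidth is invariant under edge subdivision, Lemma~\ref{lem:ekjgherk} applied with parameter $k+1$ gives $\ctw(G)\ge \ctw(\text{perfect binary tree of height }2(k+1))\ge k+1>k$. Case (ii): $G$ contains a family of $\Omega\big(|E(G)|/(2k+1)^{4(k+1)+3}\big)$ pairwise vertex-disjoint \emph{reducible pieces}, where a reducible piece is a vertex set $S$ with $|S|\le (2k+1)^{O(k)}$ and edge-boundary $|E_G(S,V(G)\setminus S)|\le O(k)$ such that the $|E_G(S,V(G)\setminus S)|$-boundaried graph $\mathbf{G}[S]$ (with boundary vertices created for the cut edges) is $(q,\ell)$-similar, for $q$ the boundary size and $\ell=(2q+3)(2q+6)$, to a boundaried graph with strictly fewer edges that is reachable from $\mathbf{G}[S]$ by edge lifts and deletions, i.e.\ is an immersion of it.

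In case (ii) we obtain $H$ by simultaneously performing, in each reducible piece, the sequence of immersion operations compressing $\mathbf{G}[S]$ to its smaller equivalent. Disjointness of the pieces gives $H\le_{\rm i}G$, and since each compression deletes at least one edge, $|E(H)|\le |E(G)|\cdot(1-1/(2k+1)^{4(k+1)+3})$ after fixing the hidden constants. That such a compression does not change the cutwidth at all (in particular $\ctw(G)\le 2\ctw(H)$) is exactly Theorem~\ref{thm:myhill_nerode}, applied with $\mathbf{A}_1=\mathbf{G}[S]$, $\mathbf{A}_2$ the compressed piece, and $\mathbf{B}$ the rest of $G$; the boundary size being $O(k)$ is what keeps $\ell=O(k\cdot q)$, and hence the number of relevant bucket interfaces (Observation~\ref{obs:bcketsze}) single-exponential in $k$ — this in turn is what lets the threshold $\Theta$ forcing a repeat by pigeonhole be only $(2k+1)^{O(k)}$, matching the exponent $4(k+1)+3$. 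The ordering-lifting is then local: given $\tau$ of width $\ctw_\tau(H)$ for $H$, for each compressed piece run the construction in the proof of Theorem~\ref{thm:myhill_nerode} inside that piece (re-arrange the piece's vertices in $\tau$ into the contiguous block witnessed by the common bucket interface, then re-expand to the original $\mathbf{G}[S]$), obtaining an ordering of $G$ of width at most $\ctw_\tau(H)\le 2\ctw_\tau(H)$ in $O(|V(G)|)$ total time.

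The main obstacle is proving the dichotomy with these parameters: that a large, bounded-degree graph avoiding a subdivided perfect binary tree of height $2(k+1)$ must contain linearly many disjoint reducible pieces of bounded size \emph{and} bounded boundary. The natural route is to show that absence of the tree forces a ``path-like'' decomposition of $G$ into $\Omega(|V(G)|)$ consecutive chunks of size $(2k+1)^{O(k)}$ with small boundary (using boundedness of the degree together with the tree obstruction to rule out wide separators), and then, within any sufficiently long run of chunks, to invoke the unpumping idea of Theorem~\ref{restated}: two chunk boundaries in the run must realize the same bucket interface, and the segment between them can be contracted by an immersion — this is where Lemma~\ref{lem:linked} is needed, to route through the segment the maximum number of edge-disjoint paths between its two sides so that the contraction is realizable as an immersion. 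Harvesting one reducible piece per long run yields the required count. Matching all the size estimates to $4(k+1)+3$, and performing the chunking in $O(k^2n)$ time rather than merely $f(k)\cdot n$, is the technically heaviest part; everything downstream — the piece-by-piece compression, the cutwidth invariance via Theorem~\ref{thm:myhill_nerode}, and the ordering lift — is routine once the pieces are in hand.
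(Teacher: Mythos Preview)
Your approach is fundamentally different from the paper's, and it has genuine gaps.

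The paper never touches bucket interfaces, Theorem~\ref{thm:myhill_nerode}, or unpumping in this lemma. After the degree check, it first \emph{dissolves} every degree-$2$ vertex with two distinct neighbours to obtain $G'$ (this preserves cutwidth exactly, and any ordering of $G'$ lifts to one of $G$ of the same width). Then it runs a simple case split on the number $|V_1|$ of degree-$1$ vertices of $G'$. If $|V_1|$ is large, it deletes one pendant neighbour of each vertex in $N(V_1)$; re-inserting a deleted leaf next to its neighbour raises the width by at most $1$. If $|V_1|$ is small, it greedily picks centres $v_1,\dots,v_\ell$ at pairwise distance $>4(k+1)$, so the balls $B_{2(k+1)}(v_i)$ are disjoint and most of them avoid $V_1$. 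In any such ball, every vertex has degree $\geq 3$ and edges are simple (degree-$2$ vertices were dissolved), so either the ball contains a cycle or the BFS tree from its centre is a perfect binary tree of height $2(k+1)$, triggering rejection via Lemma~\ref{lem:ekjgherk}. Otherwise the algorithm deletes one edge $e_C$ from each found cycle $C$. The crucial and completely elementary point is that the rest of $C$ is a path in $H$ between the endpoints of $e_C$; hence any $\sigma$-cut in $G'$ that is crossed by $e_C$ is already crossed by some edge of that path in $H$, and since the cycles are edge-disjoint this at most doubles every cut. That is the whole argument for $\ctw_\tau(G)\le 2\ctw_\sigma(H)$ --- no similarity, no linked orderings.

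Your proposal, by contrast, tries to locate many ``reducible pieces'' and compress each via the bucket-interface machinery. Two concrete problems. First, the pigeonhole count is off by orders of magnitude: with boundary size $q=O(k)$ and $\ell=(2q+3)(2q+6)=O(k^2)$, Observation~\ref{obs:bcketsze} gives $2^{O(k^2\log k)}$ bucket interfaces, and the chain argument of Theorem~\ref{restated} then needs a run of length $2^{O(k^3\log k)}$ before a repeat is forced --- not $(2k+1)^{O(k)}$. So even if your dichotomy held, the reduction factor you derive could not be $1/(2k+1)^{4(k+1)+3}$. Second, and more seriously, Theorem~\ref{thm:myhill_nerode} only yields $\ctw(\mathbf{A}_2\oplus\mathbf{B})=\ctw(\mathbf{A}_1\oplus\mathbf{B})$ under the hypothesis $\ctw(\mathbf{A}_1\oplus\mathbf{B})\le r$; when $\ctw(G)>k$ you get no control on $\ctw(H)$ at all, so the \emph{unconditional} conclusion $\ctw(G)\le 2\ctw(H)$ of the lemma is not established by your route. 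The paper's cycle-edge-removal gives this unconditionally in one line. Finally, the dichotomy itself (``no binary-tree subdivision $\Rightarrow$ path-like decomposition into small-boundary chunks'') is asserted but not proved; even granting that excluding a tree minor bounds pathwidth, extracting chunks with edge-boundary $O(k)$ and carrying a linked structure inside them, all in $O(k^2 n)$ time, is a substantial additional argument that you have not supplied.
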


\begin{proof}
		Without loss of generality we assume that $G$ is connected, because we can apply the algorithm on the connected components of $G$ separately and then take the disjoint union of the results.

		Observe first that we may assume that every vertex in $G$ is incident to at most $2k$ edges, as otherwise, we could immediately conclude that $\ctw(G)>k$.
		This also implies that every vertex in $G$ has at most $2k$ neighbors; by $N(v)$ we denote the set of neighbors of a vertex $v$, 
		and $N(X)=(\bigcup_{v\in X} N(v))\setminus X$ for a vertex subset $X$.
		Let $G'$ be the graph obtained from $G$ by exhaustively dissolving any vertices of degree $2$ whose neighbors are different.
		That is, having such a vertex $v$, we delete it from the graph and replace the two edges incident to it with a fresh edge between its neighbors, and we proceed doing this as long as there are
		such vertices in the graph.
		Clearly, the eventually obtained graph $G'$ can be immersed in $G$, we have $|E(G')|\leq |E(G)|$, the degree of every vertex in $G'$ is the same to its degree in $G$, and $\ctw(G') \leq \ctw(G)$. 
		However, observe that any ordering of the vertices of $G'$ can be turned into an ordering of the vertices of $G$ with the same width by placing each dissolved vertex in any place 
		between its two original neighbors. Thus, $\ctw(G')=\ctw(G)$. 
		
		Moreover, $G'$ can be constructed in linear time by inspecting, in any order, all the vertices that have degree $2$ in the original graph $G$.
		It is also easy to see that, given an ordering of vertices of $G'$, one can reconstruct in linear time an ordering of $G$ of at most the same width.  
		
		Altogether, it is now enough to either conclude that $\ctw(G')>k$ or 
		find a graph $H$ immersed in $G'$ such that $$|E(H)| \leq |E(G')| \cdot (1-1/(2k+1)^{4(k+1)+2})$$ and $\ctw(G') \leq 2\ctw(H')$.
		Therefore, from now on we may assume that if the graph $G'$ contains a vertex that is incident to two edges then this vertex is incident to an edge of multiplicity 2.
		Let $V_{1}$ be the set of vertices of degree 1 in $G'$.
		We consider two cases depending on the size of $V_{1}$.\\

		\noindent {\bf Case 1.} $|V_{1}|  \geq |E(G')|/(2k+1)^{4(k+1)+2}$. 
		Notice first that $V_{1}\subseteq N(N(V_{1}))$, and recall that every vertex in $G'$ is incident to at most $2k$
		edges and therefore has at most $2k$ neighbors.
		It follows then that $|V_{1}| \leq 2k \cdot |N(V_{1})|$ and hence $|N(V_{1})| \geq |E(G')|/(2k+1)^{4(k+1)+3}$.
		Let $H$ be the graph obtained from $G'$ by removing, for each vertex in $N(V_{1})$, one of its neighbors in $V_1$.
		Then $|E(H)| \leq |E(G')| \cdot (1 - 1/(2k+1)^{4(k+1)+3})$ and $H$ is immersed in $G'$ (as it is an induced subgraph). Hence, $H$ is also immersed in $G$. 
		Furthermore, let $\sigma$ be any ordering of the vertices of $H$. Then, we can obtain an ordering of the vertices of $G'$ by placing each deleted vertex next to its original  		neighbors. 
		Notice that this placement increases the width of $\sigma$ by at most 1 in total, and thus by a multiplicative factor of at most 2.
		As we already showed how to obtain an ordering of $V(G)$ from a given ordering of $V(G')$, the lemma follows for the case where 
		$|V_{1}|  \geq |E(G')|/(2k+1)^{4(k+1)+2}$.\\
			
		\noindent {\bf Case 2.} $|V_{1}| < |E(G')|/(2k+1)^{4(k+1)+2}$.
		For every $v\in V(G')$ and every positive integer $s$, we define 
		$B_{s}(v)$ to be the ball of radius $s$ around $v$, that is, the set of vertices at distance at most $s$ from $v$ in $G'$.
		Recall that every vertex of $G'$ has at most $2k$ neighbors and observe then that $|B_{s}(v)| \leq (2k+1)^{s}$.
		We construct a set of vertices $v_{1}, v_{2}, \dots, v_{\ell} \in V(G')$ whose pairwise distance is greater than $4(k+1)$ in the following greedy way.
		Having chosen $v_1, \dots, v_i$, if $B_{4(k+1)}(v_1) \cup \dots \cup B_{4(k+1)}(v_{i})\neq V(G')$ then let $v_{i+1}$ be any vertex outside of 
		$B_{4(k+1)}(v_1) \cup \dots \cup B_{4(k+1)}(v_{i})$. If such a vertex does not exist, we stop by putting $\ell=i$ and consider the set $v_{1},v_{2},\dots,v_{\ell}$. 
		Observe here that we can calculate $B_{4(k+1)}(v_{i})$ by breadth-first search in $\Oh((2k+1)^{4(k+1)+1})$ time, by stopping the search at depth $4(k+1)$.
		However, note we do not need to revisit a previously visited vertex, unless we reach it with fewer steps. That is, starting with $i=0$, we mark which vertices we have already visited (the set $B_{4(k+1)}(v_1) \cup \dots \cup  B_{4(k+1)}(v_i)$) and remember minimum distances from $\{v_1,\dots,v_i\}$ to each previously visited vertex. Considering vertices in any order, we let $v_{i+1}$ be the first not yet visited. We then mark the new ball of radius $4(k+1)$ around it, but only exploring a previously visited vertex when the minimum distance to it strictly decreases by adding $v_{i+1}$. This way, we explore each vertex at most $4(k+1)$ times, as this is an upper bound on the minimum distance of any vertex when first visited. Hence the sequence $v_1,\dots,v_\ell$ can be computed in $\Oh(k^2 |V(G)|)$ time.
		We now estimate the length $\ell$ of the sequence. 
		
		Recall that for every $i\in[\ell]$, $|B_{4(k+1)}(v_i)|\leq (2k+1)^{4(k+1)}$ and that $V(G)=\bigcup_{i\in [\ell]} B_{4(k+1)}(v_{i})$.
		From the above and the fact that $|E(G')|\leq 2k\cdot |V(G')|$ (as every vertex of $G'$ is incident to at most $2k$ edges of $G'$),
		 it follows that $$\ell\geq |V(G')|/(2k+1)^{4(k+1)} \geq |E(G')|/(2k+1)^{4(k+1)+1}.$$

		By construction, the distance between $v_{i}$ and $v_{j}$ is greater than $4(k+1)$, for distinct $i,j \in [\ell]$.
		Therefore, the balls $B_{2(k+1)}(v_{1}), \dots, B_{2(k+1)}(v_{\ell})$ are vertex-disjoint.
		Moreover, since we have that $|V_{1}| < |E(G')|/(2k+1)^{4(k+1)+2}$, at most $|E(G')|/(2k+1)^{4(k+1)+2}$ of those balls contain a vertex of degree 1.
		Therefore, the remaining $\ell - |E(G')|/(2k+1)^{4(k+1)+2}$ balls are disjoint with $V_{1}$.
		Let $I\subseteq [\ell]$ be the set of indices for which the balls $B_{2(k+1)}(v_{i})$, $i\in I$, are disjoint from $V_{1}$.
		 Observe that
		$$|I|\geq \ell - |E(G')|/(2k+1)^{4(k+1)+2} \geq |E(G')|/(2k+1)^{4(k+1)+2}.$$
			
		\begin{claim} In time $\Oh(|E(G')|)$ we can either conclude that $\ctw(G')>k$, or for each $i\in I$ find a cycle in $G'$ passing only through the vertices of the ball $B_{2(k+1)}(v_i)$.
		\end{claim}
		\begin{proof}
		Suppose for some $i\in I$, $B_{2(k+1)}(v_i)$ does not contain a cycle.
		We will prove that every vertex in $G'[B_{2(k+1)}(v_i)]$ has degree at least 3 in $G'$, and that every edge appears with multiplicity~1.
		Notice first that every edge of the graph $G'[B_{2(k+1)}(v_{i})]$ has multiplicity 1, as otherwise an edge with multiplicity at least 2 would form a cycle,
		a contradiction.	Notice also that $B_{2(k+1)}(v_{i})$ does not have any vertex that has degree 2 in $G$.
		Indeed, recall that by the construction of the graph $G'$ any vertex of degree 2 is incident only to one edge of multiplicity 2, which is again a
		contradiction. Moreover, by the choice of $i\in [I]$, we obtain that $B_{2(k+1)}(v_{i})\cap V_{1}=\emptyset$ and therefore, $G'[B_{2(k+1)}(v_{i})]$ 
		does not have any vertex that has degree 1 in $G$.
		We conclude that every vertex in $G'[B_{2(k+1)}(v_i)]$ has degree at least 3 in $G$, and every edge appears with multiplicity~1.
		Recall that the subgraph of $G'$ induced by $B_{2(k+1)}(v_i)$ contains the full breadth-first search tree of vertices at distance at most $2(k+1)$ from $v_{i}$.
		If $G'[B_{2(k+1)}(v_i)]$ did not contain any cycle, then it would be equal to this breadth-first search tree, and in this tree all vertices except possibly the last layer would have degrees at least 3. 
		Hence, $G'$ would contain as a subgraph a perfect binary tree of height $2(k+1)$.
		From Lemma~\ref{lem:ekjgherk}, this tree has cutwidth at least $k+1$.
		The algorithm can thus check (by breadth-first search) for a cycle in the subgraph induced by $B_{2(k+1)}(v_i)$. 
		If it does not find any such cycle it immediately concludes that $\ctw(G)=\ctw(G') > k$.
		
		If for every $i\in I$, the breadth-first search in $G'[B_{2(k+1)}(v_{i})]$ finds a cycle, then the algorithm obtained, in total time $\Oh(|E(G')|)$, a set of at least 
		$|I|\geq E(G')/(2(k+1))^{4(k+1)+2}$ vertex-disjoint (and hence edge-disjoint) cycles.
		\cqed\end{proof}
		
		Let us assume that the algorithm has now found a set ${\cal C}$ of at least $E(G')/(2k+1)^{4(k+1)+2}$ edge-disjoint cycles and 
		let $H$ be the subgraph obtained from $G'$ by removing one, arbitrarily chosen, edge $e_C$ from each cycle $C\in {\cal C}$.
		Then $H$ can be immersed in $G'$ and $|E(H)| \leq |E(G')| \cdot (1-1/(2k+1)^{4(k+1)+2})$. 
		To complete the proof of the lemma we will prove that if $\sigma$ is any ordering of the vertices of $H$ then $\sigma$ is also an ordering of the vertices of $G'$ such that
		$\ctw_{\sigma}(G')\leq 2\ctw_{\sigma}(H)$.
		Notice that by reintroducing an edge $e_C$ of $G'$ to $H$ we increase the width of the $\sigma$-cuts separating its endpoints by exactly 1.
		Observe also that since $e_C$ belongs to the cycle $C$, the rest of the cycle forms a path $P_C$ in $H$ that connects the endpoints of $e_C$. 
		Therefore, each of the $\sigma$-cuts separating the endpoints of $e_C$ has to contain at least one edge of $P_C$. 
		Since for different edges $e_C$, for $C\in \mathcal{C}$, the corresponding paths $P_C$ are pairwise edge-disjoint and they are present in $H$, 
		it follows that the size of each $\sigma$-cut in $G'$ is at most twice the size of this $\sigma$-cut in $H$.
		Therefore $\ctw_{\sigma}(G')\leq 2 \ctw_{\sigma}(H)$.
		Thus, $H$ can be returned, concluding the algorithm.
\end{proof}

We are now ready to put all the pieces together.

\begin{proof}[Proof of Theorem~\ref{thm:algo}]
Given an $n$-vertex graph $G$ and an integer $k$, one can in time $2^{\Oh(k^2\log k)}\cdot n$ either conclude that $\ctw(G)>k$,
or output an ordering of $G$ of width at most $k$.
The proof follows the same recursive Reduction\&Compression scheme as the algorithm of Bodlaender~\cite{Bodlaender96}. 
By applying Lemma~\ref{lem:reduce}, we obtain a significantly smaller immersion $H$, and we recurse on $H$. This recursive call either concludes that $\ctw(H)>k$, which implies $\ctw(G)>k$,
or it produces an ordering of $H$ of optimum width $\ctw(H)\leq k$. This ordering can be lifted,
using Lemma~\ref{lem:reduce} again, to an ordering of $G$ of width $\leq 2k$. Given this ordering, we apply the dynamic programming procedure of Lemma~\ref{lem:dynamicCtw} to
construct an optimum ordering of $G$ in time $2^{\Oh(k^2\log k)}\cdot |V(G)|$.

Since at each recursion step the number of edges of the graph drops by a multiplicative factor of at least $1/(2k+1)^{4(k+1)+3}$, we see that the graph $G_i$ at level $i$ of the recursion will have at most
$(1-1/(2k+1)^{4(k+1)+3})^i\cdot |E(G)|$ edges. Hence, the total work used by the algorithm is bounded by the sum of a geometric series:
\begin{eqnarray}
\sum_{i=0}^{\infty}\, 2^{\Oh(k^2 \log k)} \cdot |E(G_{i})| &  \leq & 2^{\Oh(k^2 \log k)} \cdot |E(G)| \cdot \sum_{i=0}^{\infty}\, (1-1/(2k+1)^{4k+7})^{i} \nonumber\\
& = &  2^{\Oh(k^2 \log k)} \cdot |E(G)| \cdot (2k+1)^{4k+7} \\
& = & 2^{\Oh(k^2 \log k)} \cdot |E(G)|.\nonumber
\end{eqnarray}
\end{proof}

\pagebreak

\section{Obstructions to edge-removal distance to cutwidth }\label{sec:remddist}

Throughout this section, by $\Oh_k(w)$ we mean a quantity bounded by $c_k\cdot w+d_k$, for some constant $c_k,d_k$ depending on $k$ only.

Given a graph $G$ and a $k\in\NN$, we define 
the parameter $\dcw_k(G)$
as the minimum number of edges that can be deleted from 
$G$ so that the resulting graph has cutwidth at most $k$ (so $\dcw_k(G)$ fits in the wider category of ``graph  modification parameters'').
In other words:
$$\dcw_k(G)=\min\{ |F| \colon F\subseteq E(G) \mbox{~and~} \ctw(G\setminus F)\leq k\}$$
Let ${\cal C}_{w,k}=\{G\mid \dcw_k(G)\leq w\}$. Notice that ${\cal C}_{k}={\cal C}_{0,k}$.

In this section, we provide bounds to the sizes of the obstruction sets  
of the class of graphs $G$ with $\dcw_k(G)\leq w$, for each $k,w\in\NN$.
Our results are the following.

\begin{theorem}
\label{mainlin}
For every $w,k\in\NN$,
every graph in  ${\bf obs}_{\leq_{\rm si}}({\cal C}_{w,k})$
has $\Oh_k(w)$ vertices.
\end{theorem}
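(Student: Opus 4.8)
The plan is to mimic the proof of Theorem~\ref{thm:main}, regarding a graph in $\obs_{\leq_{\rm si}}(\C_{w,k})$ as a graph of cutwidth at most $k$ plus at most $w+1$ ``extra'' edges and then ruling out long redundant stretches lying far from those edges. First I would record the basic facts about such an obstruction $G$: we have $\dcw_k(G)=w+1$, since $G\notin\C_{w,k}$ gives $\dcw_k(G)\ge w+1$, while for any edge $e$ the graph $G\setminus e$ is a proper strong immersion of $G$, hence lies in $\C_{w,k}$, and a deletion set of size $\le w$ witnessing $\dcw_k(G\setminus e)\le w$ together with $e$ witnesses $\dcw_k(G)\le w+1$. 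I would then fix $F^\star\subseteq E(G)$ with $|F^\star|=w+1$ and $\ctw(G\setminus F^\star)\le k$, and let $Z$ be the set of endpoints of edges of $F^\star$, so $|Z|\le 2(w+1)$; I also record that $\dcw_k$ is monotone under $\leq_{\rm si}$ (as $\C_{w,k}$ is $\leq_{\rm si}$-closed), which will be used at the end.

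Next, using Lemma~\ref{lem:linked}, fix a linked ordering $\sigma$ of $G\setminus F^\star$ of width at most $k$. The vertices of $Z$ split $\sigma$ into at most $2w+3$ maximal \emph{segments} avoiding $Z$. The key observation is that within a segment $S$, for every prefix $V^\sigma_v$ with $v\in S$ the set of edges of $F^\star$ crossing the cut $(V^\sigma_v,V(G)\setminus V^\sigma_v)$ is the same, as it depends only on which endpoints of $F^\star$-edges precede $S$; hence $F^\star$ contributes a fixed additive constant to all cuts inside $S$. Consequently, if every segment had at most $t$ vertices, then $|V(G)|\le (2w+3)\,t+2(w+1)=\Oh_k(w)$, so it suffices to bound the length of every segment by some $t=t(k)$ depending on $k$ alone.

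Suppose then that a segment $S$ is long. Looking only at the cut sizes $c_i=\delta_{G\setminus F^\star}(V^\sigma_{v_i})\in[0,k]$ at positions inside $S$, Corollary~\ref{lem:wordlngth} applied to a word over $[k+1]$ of length at least $N^{k+1}$ yields, exactly as in the proof of Theorem~\ref{restated}, a number $q\le k+1$ and a sub-stretch of $S$ in which $c\ge q$ and $q$ is attained at least $N$ times; linkedness of $\sigma$ supplies $q$ edge-disjoint paths threading this sub-stretch, and one builds $q$-boundaried graphs at the relevant positions whose families of conforming $(q,\ell)$-bucket interfaces are nested (Lemma~\ref{lem:subset_bucket}), with $\ell=\Oh(k^2)$ as in Theorem~\ref{thm:myhill_nerode} for $r=k$. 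Choosing $N$ larger than the number of $(q,\ell)$-bucket interfaces, two consecutive of these boundaried graphs become $(q,\ell)$-similar, and unpumping between them produces $H$ with $H\leq_{\rm si}G$ and $|V(H)|<|V(G)|$. Crucially, the contracted piece lies entirely inside $S$, hence is disjoint from $Z$, untouched by $F^\star$, and of cutwidth at most $k$ (being a subgraph of $G\setminus F^\star$); in particular $F^\star\subseteq E(H)$ and $H\setminus F^\star$ is the unpumped copy of $G\setminus F^\star$. Arranging the unpumping as a replacement of a ``middle'' boundaried piece (of cutwidth $\Oh(k)$ and disjoint from $Z$), we may write $G=\mathbf{A}_1\oplus\mathbf{B}$ and $H=\mathbf{A}_2\oplus\mathbf{B}$, up to subdividing the cut edges, where $\mathbf{A}_1,\mathbf{A}_2$ are the two $(q,\ell)$-similar pieces and both have cutwidth $\Oh(k)$.

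It then remains to reach the contradiction $\dcw_k(H)>w$, i.e.\ $H\notin\C_{w,k}$. Monotonicity gives $\dcw_k(H)\le\dcw_k(G)=w+1$, so assume for contradiction that $\ctw(H\setminus F')\le k$ for some $F'\subseteq E(H)$ with $|F'|\le w$. If $F'$ avoids $\mathbf{A}_2$, then applying Theorem~\ref{thm:myhill_nerode} with $r=k$ to $\mathbf{B}\setminus F'$ gives $\ctw(G\setminus F')=\ctw(\mathbf{A}_1\oplus(\mathbf{B}\setminus F'))=\ctw(\mathbf{A}_2\oplus(\mathbf{B}\setminus F'))=\ctw(H\setminus F')\le k$, so $\dcw_k(G)\le w$, a contradiction. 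The genuine difficulty — and the main obstacle of the whole argument — is when $F'$ deletes edges inside $\mathbf{A}_2$: one must still transfer such a witness back to $G$ without increasing its size. The intended remedy is to refine $(q,\ell)$-similarity to a \emph{deletion-aware} equivalence that records, for each internal deletion budget $b$ capped at a constant depending on $k$ only, the family of bucket interfaces reachable after at most $b$ internal deletions; the cap is legitimate because $\mathbf{A}_2$ has cutwidth $\Oh(k)$ and attaches to $\mathbf{B}$ through only $\Oh(k)$ edges, so more than a $k$-bounded number of internal deletions is never needed. The range of this refined type still has size $2^{\Oh_k(1)}$, so choosing $N$ above this larger bound keeps $t=t(k)$ independent of $w$, and Theorem~\ref{thm:myhill_nerode} then lifts any width-$k$ witness for $H$ — including its part inside the swapped piece — back to one for $G$. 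In short, the hard step is a Myhill--Nerode theorem for the parameter $\dcw_k$ restricted to the case of a bounded-cutwidth swapped piece; granting it, $t(k)=2^{\Oh_k(1)}$ and $|V(G)|\le(2w+3)\,t(k)+2(w+1)=\Oh_k(w)$, as claimed.
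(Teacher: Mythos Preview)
Your high-level strategy is the same as the paper's: locate a large $(2k,k)$-cutwidth-edge-protrusion, apply the word-unpumping lemma along a linked ordering inside it, and use a finite-index ``deletion-aware'' equivalence on boundaried suffix pieces to argue that the unpumped graph has the same value of $\dcw_k$, contradicting minimality. Your segment $S$ is exactly such a protrusion (indeed $G[S]$ has cutwidth $\le k$ and $\delta_G(S)\le 2k$), and this is precisely Lemma~\ref{mroklp3}.

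There are, however, two genuine imprecisions in your execution that the paper handles differently. First, your boundaried pieces $\mathbf{A}_1,\mathbf{A}_2$ are described both as the prefixes of Theorem~\ref{restated} and as ``middle pieces of cutwidth $\Oh(k)$''; these are incompatible. Full prefixes in your ordering have boundary $q$ plus the $F^\star$-edges flying over $S$ (up to $w+1$ of them), so neither their boundary size nor their cutwidth is bounded in $k$ alone, and your cap on internal deletions collapses. The paper avoids this by first \emph{moving the protrusion to the end} via an $X$-linked ordering (Lemma~\ref{linkedhelp}), so that the swapped pieces are genuine suffixes lying entirely inside the protrusion, with boundary $\le 3k$ and cutwidth $\le k$; only then does the deletion cap argument (``more than $q$ internal deletions can be replaced by cutting the $q$ boundary edges'') go through.

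Second, you leave the deletion-aware Myhill--Nerode step as a black box using bucket interfaces. The paper instead rewrites $\dcw_k$ as $\aic_{\obs(\C_k)}$ (Observation~\ref{obpot}) and uses a \emph{folio-based} equivalence $\sim_{q,h}$: it records, for each deletion set $R\subseteq E(G^*)$ with $|R|\le q$, which partial $q$-boundaried graphs from a finite repository $\mathcal{M}_{q,h}$ immerse in $\mathbf{G}$ avoiding $R$. Lemma~\ref{euitq} then gives directly that $\sim_{q,h}$-equivalent, $\mathcal{H}$-free pieces yield equal $\aic_{\mathcal{H}}$ after gluing to any $\mathbf{F}$. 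Your bucket-interface variant is plausible and would likely work once the pieces are set up as above, but the folio formulation is what makes the replacement lemma immediate and is the substantive content you are ``granting''.
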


\begin{theorem}
\label{maisnlilowern}
For every $k,w\in\NN$ where $k\geq 7$,
the set ${\bf obs}_{\leq_{\rm i}}({\cal C}_{w,k})$
contains at least $\binom{3^{k-7} + w + 1}{w+1}$
non-isomorphic graphs. 
\end{theorem}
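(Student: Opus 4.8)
The plan is to manufacture a large family of obstructions for $\C_{w,k}$ out of the known obstructions for plain cutwidth by taking disjoint unions. Write $N=|\obs_{\leq_{\rm i}}(\C_{k})|$ and recall from Govindan and Ramachandramurthi~\cite{GOVINDAN2001189} that $N\geq 3^{k-7}+1$; fix an enumeration $H_{1},\dots,H_{N}$ of the members of $\obs_{\leq_{\rm i}}(\C_{k})$. Each $H_{i}$ is connected, has cutwidth $k+1>k$ (this is \textbf{O1}), and deleting any single edge of $H_{i}$ yields a graph of cutwidth at most $k$ (apply \textbf{O2} to the proper subgraph $H_{i}-e$, which is a proper immersion). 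For a size-$(w+1)$ multiset $M$ over $\{H_{1},\dots,H_{N}\}$, let $G_{M}$ be the disjoint union of the graphs listed in $M$. I would show that each $G_{M}$ belongs to $\obs_{\leq_{\rm i}}(\C_{w,k})$ and that distinct multisets yield non-isomorphic $G_{M}$ (two disjoint unions of connected graphs are isomorphic precisely when their multisets of components coincide). As there are $\binom{N+w}{w+1}\geq\binom{3^{k-7}+w+1}{w+1}$ such multisets, this gives the claimed count.

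First I would record that $\dcw_{k}$ is monotone under immersions, which in particular makes $\C_{w,k}$ closed under $\leq_{\rm i}$ so that $\obs_{\leq_{\rm i}}(\C_{w,k})$ is meaningful. Indeed, if $G'\leq_{\rm i}G$ with model $(\phi,\psi)$ and $F\subseteq E(G)$ realises $\dcw_{k}(G)$, let $F'\subseteq E(G')$ collect the edges $e$ with $\psi(e)$ meeting $F$; edge-disjointness of the paths gives $|F'|\leq|F|$, and the model $(\phi,\psi)$ with $\psi$ restricted to $E(G')\setminus F'$ is an immersion model of $G'\setminus F'$ in $G\setminus F$, so $\ctw(G'\setminus F')\leq k$ and $\dcw_{k}(G')\leq\dcw_{k}(G)$. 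This already yields \textbf{O1} for $G_{M}$ quickly: since the cutwidth of a disjoint union is the maximum over components, an edge set of size $\leq w$ must leave one of the $w+1$ components of $G_{M}$ entirely untouched, and that component (a copy of some $H_{i}$) still has cutwidth $k+1>k$, so $\dcw_{k}(G_{M})\geq w+1$; conversely deleting one edge per component gives cutwidth $\leq k$, whence $\dcw_{k}(G_{M})=w+1>w$ and $G_{M}\notin\C_{w,k}$.

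For \textbf{O2}, take $G'\leq_{\rm i}G_{M}$ with $G'\not\cong G_{M}$, fix a model $(\phi,\psi)$, and let $D_{1},\dots,D_{m}$ be the components of $G'$. Since each $D_{t}$ is connected, the images of its edges are paths that chain into a connected subgraph, so all of $\phi(D_{t})$ together with the paths $\psi(e)$ for $e\in E(D_{t})$ lie inside a single component of $G_{M}$. Grouping, for each of the $w+1$ components $H$ of $G_{M}$ let $R_{H}$ be the disjoint union of those $D_{t}$ mapped into $H$; restricting the model gives $R_{H}\leq_{\rm i}H$. For each $H$: if $R_{H}\cong H$, then $R_{H}$ is a copy of some $H_{i}$ and one edge deletion suffices to bring it to cutwidth $\leq k$; otherwise $R_{H}\lneqq_{\rm i}H$, so \textbf{O2} for $H$ gives $R_{H}\in\C_{k}$, i.e.\ it needs no deletion. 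The $R_{H}$ partition the components of $G'$, so if \emph{every} $H$ were of the first kind we would get $G'\cong\bigsqcup_{H}R_{H}\cong\bigsqcup_{H}H=G_{M}$, contradicting $G'\not\cong G_{M}$. Hence at least one $H$ is of the second kind, and summing the per-component deletion costs yields $\dcw_{k}(G')\leq(w+1)-1=w$, that is $G'\in\C_{w,k}$, establishing \textbf{O2}.

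I expect the last paragraph to be the delicate part. The crux is to make the ``grouping'' rigorous: one must argue that a connected component of $G'$ cannot straddle two components of $G_{M}$ under an immersion model, and that several components of $G'$ collapsing into the same $H$ is harmless (both taken care of by forming the $R_{H}$ and invoking the minimality of $H$). The bookkeeping point ``at least one component block is saved'' is precisely what upgrades the bound $\dcw_{k}(G')\leq w+1$, which already follows from immersion-monotonicity, to the sharp $\dcw_{k}(G')\leq w$ needed for \textbf{O2}; everything else is routine.
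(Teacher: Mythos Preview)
Your proposal is correct and follows essentially the same route as the paper: build obstructions for $\C_{w,k}$ as disjoint unions of $w+1$ members of $\obs_{\leq_{\rm i}}(\C_{k})$, invoke the $3^{k-7}+1$ lower bound of Govindan--Ramachandramurthi, and count multisets. The paper packages the argument slightly differently, passing through the abstraction $\aic_{\mathcal{H}}$ with $\mathcal{H}=\obs_{\leq_{\rm i}}(\C_{k})$ and using additivity of $\aic_{\mathcal{H}}$ over disjoint unions, whereas you work directly with $\dcw_{k}$; your \textbf{O2} step (grouping the components of $G'$ by the component of $G_{M}$ they land in) is in fact more carefully spelled out than the paper's one-line decomposition claim.
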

From  Observation~\ref{osesimme}, both bounds of
 Theorems~\ref{mainlin} and~\ref{maisnlilowern} holds  for both immersions and strong immersions as well. 

Given a collection ${\cal H}$ of graphs, we define 
the parameter $\aic_{\cal H}(G)$
as the minimum number of edges whose removal from $G$
creates an ${\cal H}$-immersion free graph, that is, a graph that does not admit any graph from ${\cal H}$ as an immersion. 
In both subsections that follow, we need the following observation.

\begin{observation}
\label{obpot}
For every graph $G$ and every  $w\in \NN$, it holds that $\dcw_w(G)=\aic_{{\rm obs}({\cal C}_{w})}(G)$.
\end{observation}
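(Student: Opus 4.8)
The plan is to reduce the claimed identity to the defining property of an obstruction set for an immersion-closed class: membership in the class is equivalent to avoiding every obstruction as an immersion. Concretely, ${\rm obs}({\cal C}_{w})$ here denotes the immersion obstruction set $\obs_{\leq_{\rm i}}(\C_w)$, and the whole statement will follow once we know that for every graph $H$, $\ctw(H)\leq w$ holds if and only if $H$ admits no graph from $\obs_{\leq_{\rm i}}(\C_w)$ as an immersion.

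First I would establish this characterization. The ``only if'' direction is immediate from the fact, recorded in Section~\ref{sec:prelims}, that $\C_w$ is closed under immersions: if $F\le_{\rm i}H$ for some $F\in\obs_{\leq_{\rm i}}(\C_w)$, then $F\notin\C_w$ by condition {\bf O1}, and hence $H\notin\C_w$ by closure under immersions. For the ``if'' direction, suppose $H\notin\C_w$ and consider the family of all graphs $F$ with $F\le_{\rm i}H$ and $F\notin\C_w$. This family is nonempty (it contains $H$) and, since every such $F$ has at most $|V(H)|$ vertices and at most $|E(H)|$ edges, it is finite up to isomorphism; thus it contains a $\le_{\rm i}$-minimal element $F_0$. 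By minimality, every graph $F'$ with $F'\lneqq_{\rm i}F_0$ satisfies $F'\in\C_w$ (using also that $F'\le_{\rm i}F_0\le_{\rm i}H$), so $F_0$ satisfies both {\bf O1} and {\bf O2}, i.e.\ $F_0\in\obs_{\leq_{\rm i}}(\C_w)$, and $F_0\le_{\rm i}H$ witnesses that $H$ contains an obstruction. This proves the characterization.

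With the characterization in hand, the observation follows by unwinding definitions. For any $F\subseteq E(G)$, the characterization gives that $\ctw(G\setminus F)\leq w$ if and only if $G\setminus F$ admits no graph of ${\rm obs}({\cal C}_{w})$ as an immersion, i.e.\ $G\setminus F$ is ${\rm obs}({\cal C}_{w})$-immersion-free. Taking the minimum of $|F|$ over all $F\subseteq E(G)$ satisfying the left-hand condition yields $\dcw_w(G)$ by definition, while taking the minimum over all $F$ satisfying the (equivalent) right-hand condition yields $\aic_{{\rm obs}({\cal C}_{w})}(G)$ by definition. Hence $\dcw_w(G)=\aic_{{\rm obs}({\cal C}_{w})}(G)$.

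The only mildly delicate point is the ``if'' direction of the characterization, namely extracting a $\le_{\rm i}$-minimal ``bad'' immersion below $H$; but this is routine since the poset of immersions of a fixed graph is finite, and in particular no appeal to the well-quasi-ordering theorem of Robertson and Seymour is needed — only the already-stated fact that $\C_w$ is immersion-closed.
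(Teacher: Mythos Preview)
Your argument is correct. The paper gives no proof of this observation, treating it as immediate from the standard characterization of an immersion-closed class by its obstruction set; your proposal spells out precisely that characterization (including the routine extraction of a $\leq_{\rm i}$-minimal bad immersion below $H$) and then unwinds the two definitions, which is exactly the intended justification.
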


We remark that, within the same set of authors, we have recently studied kernelization algorithms for edge removal problems to immersion-closed classes.
The following result has been obtained in~\cite{GiannopoulouPTRW2016}: 
whenever a finite collection of graphs ${\cal H}$ contains at least one planar subcubic graph, and all graphs from ${\cal H}$ are connected, then the problem
of computing $\aic_{\cal H}(G)$, parameterized by the target value, admits a linear kernel.
These prerequisites are satisfied for ${\cal H}={\rm obs}({\cal C}_{w})$, and hence the problem of computing $\dcw_w(G)$, parameterized by the target value $k$, admits a linear kernel.

The connections between kernelization procedures and upper bounds on minimal obstruction sizes have already been noticed in the literature; see e.g.~\cite{FominLMS12}. 
Intuitively, whenever the kernelization rules apply only minor or immersion operations, the kernelization algorithm can be turned 
into a proof of an upper bound on the sizes of minimal obstacles for the corresponding order.
Unfortunately, this is not the case for the results of~\cite{GiannopoulouPTRW2016}: the main problem is the lack of linked decompositions for parameter tree-cut width, which plays the central role.
Here, the situation is different, as we know that there are always linked orderings of optimum width.
We therefore showcase how to use the linkedness to obtain a linear upper bound on the sizes of obstructions for ${\cal C}_{w,k}$.
The arguments are somewhat similar as in~\cite{GiannopoulouPTRW2016}: we use the idea of protrusions, adapted to the setting of edge cuts, and we try to replace protrusions with smaller ones having the same behavior.
The main point is that linkedness ensures us that the replacement results in an immersion of the original graph.

\subsection{Upper bound on obstruction size}


A \emph{partial $q$-boundaried graph} is a pair $\mathbf{G}=(G,\bar{x})$ where $G$ is a graph and $\bar{x}=(x_{1},\dots,x_q)$ 
is a $q$-tuple that consists either of vertices of $G$ or from empty slots (that are indices that do not correspond to vertices of $G$). If $x_{i}$ is an empty slot, we denote it by $x_{i}=\diamond$. 
The extension of such $\mathbf{G}$ is defined just as for $q$-boundaried graphs, but we put
$x'_{i}=\diamond'$ iff $x_{i}=\diamond$.
Intuitively a partial $q$-boundaried graph extends the notion of a boundaried graph by allowing the vertices of the boundary to carry 
indices from a set whose cardinality might be bigger than the boundary.

Let $H$ be a graph and let  $(X_{1},X_{2})$ be its cut where $q=\delta(X_{1})$.
Let $E_H(X_1,X_2)=\{e_{1},\ldots,e_{q}\}$ where $e_{i}=\{x_{i}^{1},x_{i}^{2}\}$, $i\in[q]$, and such that 
$x_i^j\in X_{j}$ for $(i,j)\in[q]\times[2]$. 
For $j\in[2]$, we 
say that the pair $(X_1,X_2)$ {\em{generates}} 
the $q$-boundaried graph ${\bf A}_j=({A}_j,\overline{x}_{j})$ if $A_i=G[X_i]$
and $\overline{x}_{i}=(x_{1}^{j},\ldots,x_{q}^{j})$. 

We denote by ${\cal B}_{q,h}$ the collection containing  every $q$-boundaried graph that can be generated from some cut   $(X_{1},X_{2})$ of some graph $H$ where $|V(H)|+|E(H)|\leq h$ and $q=\delta(X_{1})$.
Moreover, we denote by ${\cal M}_{q,h}$ the set of all partial $q$-boundaried graphs $\mathbf{F}'=(F',\bar{x}')$
 such that, for some $\mathbf{F} =(F,\bar{x})$ of $\mathcal{B}_{q,h}$, $F'$ is a subgraph of $F$ and a vertex $x_i$ in $\bar{x}'$ is an empty slot iff  
 $x_{i}\in V(F)\setminus V(F')$.

In other words, ${\cal M}_{q,h}$ contains all partial $q$-boundaried graphs that can be 
generated by a graph whose number of edges and vertices does not exceed $h$.  
We insist that  if ${\bf H}=(H,\overline{x})\in{\cal M}_{q,h}$, then the 
vertices of $H$ are taken from some fixed repository of $h$ vertices and 
that an element $x_{i}$ of $\overline{x}$ is either an empty slot (i.e., $x_{i}=\diamond$) or 
the $i$-th vertex of some predetermined ordering $(x_{1},\ldots,x_{q})$ of $q$ vertices 
from this repository. This permits us to assume that $|{\cal M}_{q,h}|$ is bounded by some function that depends only on $q$ and $h$.

Let  ${\bf G}=(G,\overline{x})$ be a $q$-boundaried graph
and ${\bf H}=(H,\overline{y})$ be  a partial $q$-boundaried graph. Let also   
$G^*$ and $H^*$ be the extensions of $\mathbf{G}$ and $\mathbf{H}$, respectively.  
We also assume  that, for all $i\in[q]$, either $y_i=x_i$ or $y_i=\diamond$. For an edge subset $R\subseteq E(G^*)$,  
we say that 
${\bf H}$ is an {\em $R$-avoiding strong immersion in ${\bf G}$}  if there is an $H^*$-immersion model $(\phi,\psi)$ of $G^*\setminus R$ where, for every $i\in[q]$ such that $y_{i}\neq\diamond$, 
it holds that $\phi(y_{i})=x_{i}$ 
and 
$\phi(y_{i}^{\prime})=x_{i}'\neq\diamond$. We now define the {\em $R$-avoiding $(q,h)$-folio} of ${\bf G}$ as the set of all partial $q$-boundaried graphs in ${\cal M}_{q,h}$ 
that are $R$-avoiding strong immersions in ${\bf G}$ and we denote it by ${\bf folio}_{q,h,R}({\bf G})$. 
We finally define 
$${\cal F}_{q,h}({\bf G})=\{{\bf folio}_{q,h,R}({\bf G})\mid R\subseteq E(G^*)\ \text{and}\ |R|\leq q\}.$$
Given two  $q$-boundaried graphs  ${\bf G}_{1}$ and ${\bf G}_{2}$ 
we write ${\bf G}_{1}\sim_{q,h}{\bf G}_{2}$ in order to denote that   
${\cal F}_{q,h}({\bf G}_{2})={\cal F}_{q,h}({\bf G}_{2})$.
 As ${\cal F}_{q,h}$
maps each $q$-boundaried graph to a collection of subsets of ${\cal M}_{q,h}$ we have the following.

\begin{lemma}
\label{k8io90op}
There is some function $f_1:\NN^2\rightarrow \NN$
such that for every two  non-negative integers $q$ and $h$, the 
number of equivalence classes of $\sim_{q,h}$ is at most 
$f_1(q,h)$.
\end{lemma}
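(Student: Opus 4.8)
The plan is to reduce the statement to a finiteness count: the equivalence class of a $q$-boundaried graph $\mathbf{G}$ under $\sim_{q,h}$ depends only on the object ${\cal F}_{q,h}(\mathbf{G})$, and this object ranges over a set whose cardinality is a function of $q$ and $h$ alone.

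First I would invoke the observation made just before the lemma statement: the family ${\cal M}_{q,h}$ of partial $q$-boundaried graphs is finite. Indeed, each of its members has at most $h$ vertices and at most $h$ edges taken from a fixed repository of $h$ vertices, and its boundary tuple has entries chosen from a predetermined ordered set of $q$ vertices of that repository together with the empty-slot symbol $\diamond$. Hence $|{\cal M}_{q,h}| \le g(q,h)$ for some explicit function $g$ depending only on $q$ and $h$.

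Next, for every $q$-boundaried graph $\mathbf{G}$ and every $R \subseteq E(G^*)$ with $|R| \le q$, the set ${\bf folio}_{q,h,R}(\mathbf{G})$ is by definition a subset of ${\cal M}_{q,h}$, so there are at most $2^{g(q,h)}$ possibilities for it. Therefore ${\cal F}_{q,h}(\mathbf{G}) = \{{\bf folio}_{q,h,R}(\mathbf{G}) : R \subseteq E(G^*),\ |R| \le q\}$ is a family of subsets of ${\cal M}_{q,h}$, that is, an element of the set $2^{2^{{\cal M}_{q,h}}}$ of all such families, and hence it takes at most $2^{2^{g(q,h)}}$ distinct values. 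Since $\mathbf{G}_1 \sim_{q,h} \mathbf{G}_2$ holds precisely when ${\cal F}_{q,h}(\mathbf{G}_1) = {\cal F}_{q,h}(\mathbf{G}_2)$, the number of equivalence classes of $\sim_{q,h}$ is at most the number of values attained by the map $\mathbf{G} \mapsto {\cal F}_{q,h}(\mathbf{G})$, hence at most $2^{2^{g(q,h)}}$; setting $f_1(q,h) := 2^{2^{g(q,h)}}$ finishes the argument.

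There is no real obstacle here: the proof is purely a counting argument, and the only step carrying any content is the finiteness of ${\cal M}_{q,h}$, which has already been arranged by fixing the vertex repository. In particular I would make no attempt to improve the (deliberately generous) double-exponential bound, since the lemma asks only for the existence of $f_1$.
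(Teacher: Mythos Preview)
Your proposal is correct and matches the paper's approach exactly: the paper does not even give a separate proof, simply noting before the lemma that ${\cal F}_{q,h}$ maps each $q$-boundaried graph to a collection of subsets of the finite set ${\cal M}_{q,h}$, which is precisely the counting argument you spell out.
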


The next lemma is a consequence of the definition of the function ${\cal F}_{q,h}$.

\begin{lemma}\label{euitq}
Let ${\cal H}$ be some set of connected graphs, each of  at most $h$ vertices, and let ${\bf G}_i=(G_i,\overline{x}_{i}), i\in\{1,2\}$ be two $q$-boundaried graphs such that ${\bf G}_{1}\sim_{q,h}{\bf G}_{2}$
and such both $G_{1},G_{2}$ are  ${\cal H}$-immersion free.
Then, for every $q$-boundaried graph ${\bf F}=(F,\overline{y})$, it holds that $\aic_{\cal H}({\bf F}\oplus {\bf G}_{1})=\aic_{\cal H}({\bf F}\oplus {\bf G}_{2})$.
\end{lemma}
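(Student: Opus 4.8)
The plan is to prove $\aic_{\cal H}({\bf F}\oplus{\bf G}_1)=\aic_{\cal H}({\bf F}\oplus{\bf G}_2)$ by establishing $\aic_{\cal H}({\bf F}\oplus{\bf G}_1)\le\aic_{\cal H}({\bf F}\oplus{\bf G}_2)$; since $\sim_{q,h}$ is symmetric and the hypotheses on $G_1,G_2$ are symmetric, the reverse inequality follows verbatim after swapping the two sides. To this end I would fix a minimum-size $R\subseteq E({\bf F}\oplus{\bf G}_2)$ with $({\bf F}\oplus{\bf G}_2)\setminus R$ being ${\cal H}$-immersion free, put $t=|R|$, and split $R=R^F\sqcup R^J\sqcup R^{G_2}$ into edges inside $F$, join edges, and edges inside $G_2$. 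The first step is a normalization: a minimum such $R$ satisfies $|R^J\cup R^{G_2}|\le q$. Indeed, if it were larger than $q$ then replacing $R$ by $R^F$ together with all $q$ join edges would give a strictly smaller set whose removal leaves the disjoint union $(F\setminus R^F)\sqcup G_2$; that graph is ${\cal H}$-immersion free because $G_2$ is, because $F\setminus R^F$ is a subgraph of the ${\cal H}$-immersion-free graph $({\bf F}\oplus{\bf G}_2)\setminus R$, and because every graph in ${\cal H}$ is connected and hence immerses inside a single component. This contradicts minimality, and along the way it shows that $F\setminus R^F$ is ${\cal H}$-immersion free, which I will reuse.

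Next I would transport the ``${\bf G}_2$-side'' of $R$ to ${\bf G}_1$ through the folios. Using the standard identification (as in the definitions of $\oplus$ and of the extension $(\cdot)^*$) of the join edge at the $i$-th boundary vertex with the corresponding pendant edge, the set $R^J\cup R^{G_2}$ becomes a set $R_2^*\subseteq E(G_2^*)$ with $|R_2^*|\le q$. Then ${\bf folio}_{q,h,R_2^*}({\bf G}_2)\in{\cal F}_{q,h}({\bf G}_2)={\cal F}_{q,h}({\bf G}_1)$, so there is an $R_1^*\subseteq E(G_1^*)$ with $|R_1^*|\le q$ and ${\bf folio}_{q,h,R_1^*}({\bf G}_1)={\bf folio}_{q,h,R_2^*}({\bf G}_2)$. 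Pulling $R_1^*$ back to $E({\bf F}\oplus{\bf G}_1)$ gives a set $R''$ on the ``${\bf G}_1$-side'' with $|R''|=|R_1^*|\le|R_2^*|=|R^J\cup R^{G_2}|$. Set $\widetilde R:=R^F\cup R''$, so $|\widetilde R|\le|R^F|+|R^J\cup R^{G_2}|=t$; it then suffices to show that $({\bf F}\oplus{\bf G}_1)\setminus\widetilde R$ is ${\cal H}$-immersion free, since this yields $\aic_{\cal H}({\bf F}\oplus{\bf G}_1)\le|\widetilde R|\le t$.

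So suppose, for contradiction, that some $H\in{\cal H}$ admits an immersion model $M$ in $({\bf F}\oplus{\bf G}_1)\setminus\widetilde R$. As $H$ is connected and the only edges between $V(F)$ and $V(G_1)$ are the $q$ join edges, either $M$ lies entirely inside $F$, or entirely inside $G_1$, or it uses at least one join edge. The first case contradicts that $F\setminus R^F$ is ${\cal H}$-immersion free, and the second contradicts that $G_1$ is ${\cal H}$-immersion free. In the remaining case I would cut $M$ along the cut $(V(F),V(G_1))$: the maximal subpaths of the paths of $M$ that lie inside $G_1$ and are internally disjoint from the branch vertices, together with the pendant edges at the join edges used by $M$, realize a strong immersion in $G_1^*$ of a partial $q$-boundaried graph ${\bf D}$ whose $i$-th boundary slot is filled exactly when $M$ uses the $i$-th join edge. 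This ${\bf D}$ has size bounded in terms of $h$ and $q$ (it is obtained from $H$ by splitting at most $q$ edges and by subdividing the interiors of paths at branch vertices), hence ${\bf D}\in{\cal M}_{q,h}$; and the realizing segments avoid $\widetilde R\supseteq R''$, hence avoid $R_1^*$, so ${\bf D}\in{\bf folio}_{q,h,R_1^*}({\bf G}_1)={\bf folio}_{q,h,R_2^*}({\bf G}_2)$, i.e.\ ${\bf D}$ is also an $R_2^*$-avoiding strong immersion in ${\bf G}_2$. Re-gluing then finishes: keep the part of $M$ inside $F$ and replace its part inside $G_1$ by the $R_2^*$-avoiding strong immersion of ${\bf D}$ in ${\bf G}_2$, which has the same boundary behaviour, reconnecting through the corresponding join edges of ${\bf F}\oplus{\bf G}_2$. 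Both pieces avoid their deleted edges, and the join edges used correspond to pendant edges outside $R_2^*$, hence to join edges outside $R^J$, so the result is an immersion model of $H$ in $({\bf F}\oplus{\bf G}_2)\setminus R$ — contradicting the choice of $R$.

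The step I expect to be the main obstacle is this ``cut-and-reglue'', where I must make the pattern ${\bf D}$ precise as an element of ${\cal M}_{q,h}$ (this amounts to the fact that the trace of a connected graph of bounded size across an edge cut of size $q$ is again a boundaried graph of bounded size), and must check that the boundary indices, the pendant/join identifications of Section~\ref{sec:remddist}, and the edge-disjointness of the pieces all line up so that ${\bf D}$ is genuinely $R_i^*$-avoiding on both sides and so that the reassembly is a legitimate $H$-immersion avoiding $R$. Conceptually this is just the statement that an immersion model of a connected graph factors through any edge cut into a pair of boundaried-graph patterns, and that ${\cal F}_{q,h}$ records exactly the one living on the ${\bf G}_i$-side; once this factorization is in place, the chain $\aic_{\cal H}({\bf F}\oplus{\bf G}_1)\le|\widetilde R|\le t=\aic_{\cal H}({\bf F}\oplus{\bf G}_2)$ (and its symmetric counterpart) closes the argument.
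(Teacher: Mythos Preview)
Your overall strategy matches the paper's sketch (the paper omits the proof and only outlines it): normalize an optimal deletion set so that at most $q$ edges lie on the ${\bf G}_i$-side, transport that part through the equivalence $\sim_{q,h}$, and verify correctness via a cut--and--reglue of an $H$-immersion across the $q$ join edges. So in spirit you are doing exactly what the authors intend.

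There is, however, a real gap. You assert $|R_1^*|\le |R_2^*|$, but the definition of $\sim_{q,h}$ only gives you equality of the \emph{set} ${\cal F}_{q,h}({\bf G}_1)={\cal F}_{q,h}({\bf G}_2)$ of folios attainable with some $R$ of size at most $q$; it does \emph{not} record the minimum size $|R|$ realizing a given folio. From ${\bf folio}_{q,h,R_2^*}({\bf G}_2)\in{\cal F}_{q,h}({\bf G}_1)$ you only get an $R_1^*$ with $|R_1^*|\le q$ and the same folio, not $|R_1^*|\le |R_2^*|$. Your bound $|\widetilde R|\le t$ therefore does not follow. In fact, after your (correct) observation that ``${\cal H}$-freeness of $({\bf F}\oplus{\bf G}_i)\setminus(R^F\cup R_i^{**})$ depends only on the folio ${\bf folio}_{q,h,R_i^*}({\bf G}_i)$'', one sees that
\[
\aic_{\cal H}({\bf F}\oplus{\bf G}_i)\;=\;\min_{\Phi\in{\cal F}_{q,h}({\bf G}_i)}\Big(m(\Phi)+r_i(\Phi)\Big),
\]
where $m(\Phi)$ depends only on ${\bf F}$ and on the folio $\Phi$, while $r_i(\Phi)=\min\{|R|:\ {\bf folio}_{q,h,R}({\bf G}_i)=\Phi\}$. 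Equality of the two minima requires $r_1\equiv r_2$ on the common set of folios, which the present definition of $\sim_{q,h}$ does not guarantee. The clean fix (and almost certainly what the cited reference does) is to refine the signature to
\[
{\cal F}_{q,h}({\bf G})\;=\;\big\{(\,|R|,\ {\bf folio}_{q,h,R}({\bf G})\,)\ :\ R\subseteq E(G^*),\ |R|\le q\big\},
\]
so that equality of signatures yields, for every $R_2^*$, an $R_1^*$ with the \emph{same size} and the same folio; the number of equivalence classes is still bounded by a function of $q$ and $h$, so Lemma~\ref{k8io90op} is unaffected. With this refinement your argument closes verbatim.

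A smaller point: your claim ${\bf D}\in{\cal M}_{q,h}$ needs care. The set ${\cal M}_{q,h}$ is defined via ${\cal B}_{q,h}$, which requires the ambient graph to satisfy $|V|+|E|\le h$ and the generating cut to have size exactly $q$. So you must argue that the trace of $H$ across the cut arises from a cut of a graph $\hat H$ with $|V(\hat H)|+|E(\hat H)|\le h$ (padding the cut with empty boundary slots to reach size $q$), not merely that ${\bf D}$ has size ``bounded in terms of $h$ and $q$''. This lines up with the statement in Lemma~\ref{wmopri9o} (``at most $h$ vertices and edges''), and you should use that formulation rather than ``at most $h$ vertices'' when invoking the lemma.
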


The proof is omitted as it is very similar to the one in~\cite{Chatzidimitriou15} where a similar encoding 
was defined in order to treat the topological minor relation. 
To see the main idea, recall 
that  ${\cal F}_{q,h}({\bf G}_{i})$ registers all different ``partial occurrences'' of 
graphs of $\leq h$ vertices (and therefore also of graphs of ${\cal H}$)
in ${\bf G}_i'$, for all possible ways to obtain ${\bf G}_i'$ from ${\bf G}_i$ after removing at most $q$ edges. This encoding is indeed sufficient to capture the behavior 
of all possible edge sets whose removal from ${\bf F}\oplus{\bf G}_{i}$ creates an ${\cal H}$-free graph. Indeed, as both $G_{1}$ and $G_{2}$ are  ${\cal H}$-immersion free, any such 
set should have at most $q$ edges inside ${\bf G}_{i}$ as, if not, the $q$-boundary edges 
between ${\bf F}$ and ${\bf G}_i$ would also make the same job. A similar discussion is also present in~\cite{GiannopoulouPTRW2016}.

Given a graph $G$ and $X\subseteq V(G)$, we write ${\bf cw}_{\sigma}(G,X)=\delta_G(X)+{\bf cw}_{\sigma_{X}}(G[X])$. We require the following extension of the definition of linked orderings.

\begin{definition}[extended linked ordering]
Let $G$ be a $n$-vertex graph and $X\subseteq V(G)$.
An ordering $\sigma=\langle v_{1},\ldots,v_{n}\rangle$ of $G$ is $X$-\emph{linked} if  $X=\{v_{n-|X|+1},\ldots,v_{n}\}$
and 
for every $i,j\in[n-|X|,n]$ where $i<j$
there exist $\min \{\delta(\{{v_{1},\ldots,v_h}\}) \mid  i\leq h\leq j \}$ edge-disjoint paths between $\{v_{1},\ldots,v_{i}\}$ and $\{v_{j},\ldots,v_{n}\}$ in $G$.
\end{definition}

The proof of the following result is very similar to the one of Lemma~\ref{lem:linked}. We just move $X$ to the end of the ordering, in the order given by $\sigma$,
and apply exhaustively the same refinement step based on submodularity, but only to the subordering induced by $X$.

\begin{lemma}
\label{linkedhelp}
For every graph $G$ and every subset $X$ 
of $V(G)$, if there exists an ordering $\sigma$ of $G$
such that ${\bf cw}_{\sigma}(G,X)\leq r$, then 
there exists  an $X$-linked ordering $\sigma'$ of $G$
such that ${\bf cw}_{\sigma'}(G,X)\leq r$.
\end{lemma}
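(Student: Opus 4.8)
The plan is to mimic the proof of Lemma~\ref{lem:linked}, but confined to the part of the ordering occupied by $X$. First I would take an ordering $\sigma$ of $G$ witnessing ${\bf cw}_{\sigma}(G,X)\leq r$ and reorder its vertices so that all of $V(G)\setminus X$ comes first (keeping their relative $\sigma$-order) and all of $X$ comes last (again keeping their relative $\sigma$-order). Moving the vertices of $V(G)\setminus X$ to the front does not change any $\sigma$-cut among the vertices of $X$, since each prefix ending inside $X$ still consists of all of $V(G)\setminus X$ together with a $\sigma$-prefix of $X$; hence $\delta_G(\{v_1,\dots,v_h\})$ is unchanged for $h\geq n-|X|$, and in particular $\delta_G(X)$ and ${\bf cw}_{\sigma_X}(G[X])$ are unchanged, so ${\bf cw}_{\sigma}(G,X)\leq r$ still holds. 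Among all such orderings (with $X$ a suffix) that have ${\bf cw}_{\sigma}(G,X)\leq r$, I would pick one minimizing $\sum_{h=n-|X|}^{n}\delta_G(\{v_1,\dots,v_h\})$, and prove it is $X$-linked.

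Suppose not: by Menger's theorem there are indices $i<j$ in $[n-|X|,n]$ and an integer $t$ with $\delta_G(\{v_1,\dots,v_h\})>t$ for all $i\leq h\leq j$, yet a minimum cut $(A,B)$ with $\{v_1,\dots,v_i\}\subseteq A$, $\{v_j,\dots,v_n\}\subseteq B$ has $\delta(A)\leq t$. Exactly as in Lemma~\ref{lem:linked}, split $A=A_1\cup A_2$ with $A_1=\{v_1,\dots,v_i\}$, and $B=B_1\cup B_2$ with $B_2=\{v_j,\dots,v_n\}$; note that $A_2$ and $B_1$ are both subsets of $\{v_{i+1},\dots,v_{j-1}\}$, which lies entirely inside $X$ (because $i\geq n-|X|$, so everything from index $i+1$ on is in $X$). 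Define $\sigma'$ by concatenating $\sigma|_{A_1}$, $\sigma|_{A_2}$, $\sigma|_{B_1}$, $\sigma|_{B_2}$. Since only the suffix of $\sigma$ inside $X$ is rearranged, $X$ remains a suffix of $\sigma'$. The submodularity argument of Lemma~\ref{lem:linked} then gives $\delta_G(\{v_1,\dots,v_h\}^{\sigma'})\leq \delta_G(\{v_1,\dots,v_h\}^{\sigma})$ for every $h$: for $h\leq i$ or corresponding to $v\in B_2$ the prefix is unchanged; for the prefixes whose last vertex is in $A_2$ we use that their new value is $\delta(V^\sigma_v\cap A)\leq\delta(V^\sigma_v)$ by minimality of $(A,B)$, and symmetrically for $B_1$ we get $\delta(V^\sigma_v\cup A)\leq\delta(V^\sigma_v)$.

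It follows that every prefix cut does not grow, so in particular $\delta_G(X)$ is unchanged (the cut at index $n-|X|$ is one of the unchanged ones, as $n-|X|\leq i$), and the widths of all cuts internal to $X$ do not exceed those of $\sigma$, whence ${\bf cw}_{\sigma'}(G,X)\leq {\bf cw}_{\sigma}(G,X)\leq r$. On the other hand, for the last vertex $v$ of $A$ we have $\delta_G(\{v_1,\dots\}^{\sigma'})=\delta(A)\leq t<\delta_G(\cdot)^{\sigma}$, so $\sum_{h}\delta_G(\{v_1,\dots,v_h\}^{\sigma'})$ is strictly smaller, contradicting the choice of $\sigma$. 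Hence $\sigma$ (after the initial reordering) is already $X$-linked, proving the lemma. I expect the only mildly delicate point to be bookkeeping the index ranges to confirm that the rearrangement stays inside the $X$-suffix and that the cut at the $X$-boundary is among the untouched prefixes, so that ${\bf cw}_{\sigma}(G,X)$ is genuinely controlled by the same submodularity computation as in Lemma~\ref{lem:linked}.
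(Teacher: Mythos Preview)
Your plan is exactly what the paper sketches: push $X$ to the end of the ordering and then rerun the submodularity refinement of Lemma~\ref{lem:linked}, but only at positions inside the $X$-suffix, choosing the minimizer of $\sum_{h\geq n-|X|}\delta_G(V^{\sigma}_{v_h})$ and deriving a contradiction via Menger and a min cut $(A,B)$. The index bookkeeping you describe is fine, and in particular you are right that $A_2\cup B_1\subseteq X$, so the rearrangement keeps $X$ as a suffix and leaves the cut at position $n-|X|$ untouched.

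One step deserves more care than you give it. The submodularity argument yields $\delta_G(V^{\sigma'}_v)\leq\delta_G(V^{\sigma}_v)$ for every $v$, which controls prefix cuts \emph{in $G$}. But ${\bf cw}_{\sigma}(G,X)=\delta_G(X)+{\bf cw}_{\sigma_X}(G[X])$ is defined via cuts \emph{in $G[X]$}, and for a vertex $v$ with $X$-prefix $P_v$ one has
\[
\delta_G(V^{\sigma}_v)=\delta_{G[X]}(P_v)+\big|E_G\big(V(G)\setminus X,\;X\setminus P_v\big)\big|,
\]
so a non-increase of the left side does not immediately force a non-increase of $\delta_{G[X]}(P_v)$ (the second summand changes with the rearrangement too). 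Your sentence ``the widths of all cuts internal to $X$ do not exceed those of $\sigma$, whence ${\bf cw}_{\sigma'}(G,X)\leq{\bf cw}_{\sigma}(G,X)$'' therefore needs an extra line: you should either run the submodularity inequality once more in $G[X]$ (using that the minimality of $(A,B)$ in $G$ also gives the needed comparison for the restricted cut $A\cap X$ against its competitors), or simply drop the constraint ${\bf cw}_{\sigma}(G,X)\leq r$ from the optimization and argue separately that the refinement never increases ${\bf cw}_{\sigma_X}(G[X])$. The paper's own sketch is equally terse on this point, so you are not missing an idea --- just be aware that this is the place where an additional inequality has to be written down, not merely index bookkeeping.
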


Let $w_1,w_2\in\NN$, $G$ be a graph, and $X\subseteq V(G)$. We say that $X$ is an {\em $(w_1,w_2)$-cutwidth-edge-protrusion} of $G$ 
if $\delta(X)\leq w_1$ and ${\bf cw}(G[X_{i}])\leq w_2$.

The next lemma uses an idea similar to the one of Lemma~\ref{restated}. Here $\sim_{q,h}$ plays the role of  $(q,\ell)$-similarity.

\begin{lemma}
\label{wmopri9o}
There is a computable function $f_2:\NN^2\rightarrow\NN$ such that the following holds:
Let $k$ be a non-negative integer and let ${\cal H}$ be a 
finite set of connected graphs, each having at most $h$ vertices and edges. 
Let also $G$ be a graph and let $X$ be a $(2k,k)$-cutwidth-edge-protrusion of $G$.
If $|X|>f_2(k,h)$, then $G$ contains as a proper strong immersion 
a graph $G'$ where $\aic_{\cal H}(G)=\aic_{\cal H}(G')$.
\end{lemma}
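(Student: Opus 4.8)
The plan is to replay the argument of Theorem~\ref{restated}, with the equivalence $\sim_{q,h}$ taking over the role of $(q,\ell)$-similarity and with Lemma~\ref{linkedhelp} supplying exactly the linkedness needed to certify that the resulting smaller graph is a strong immersion of $G$. First I would fix a good ordering. Put $r=\delta_G(X)+{\bf cw}(G[X])$, so $r\le 2k+k=3k$ since $X$ is a $(2k,k)$-cutwidth-edge-protrusion. Ordering $V(G)\setminus X$ arbitrarily and appending an optimum cutwidth ordering of $G[X]$ produces an ordering $\sigma$ of $G$ with ${\bf cw}_\sigma(G,X)=r$, so by Lemma~\ref{linkedhelp} there is an $X$-linked ordering $\sigma'=\langle v_1,\dots,v_n\rangle$ of $G$ with ${\bf cw}_{\sigma'}(G,X)\le r$ and $X=\{v_{n-|X|+1},\dots,v_n\}$. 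For $i\in[n-|X|,n-1]$ set $c_i=\delta(\{v_1,\dots,v_i\})$. Since $V(G)\setminus X$ lies entirely before $X$ in $\sigma'$, the cut at $i$ consists of at most ${\bf cw}_{\sigma'}(G,X)-\delta_G(X)$ edges inside $X$ together with at most $\delta_G(X)$ edges joining the suffix of $X$ to $V(G)\setminus X$, hence $c_i\le {\bf cw}_{\sigma'}(G,X)\le r\le 3k$. Thus $c_{n-|X|}\cdots c_{n-1}$ is a word of length $|X|$ over an alphabet of size at most $3k+1$.

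Next I would apply the pigeonhole-style unpumping. Let $N=2+\max_{0\le q\le 3k}f_1(q,h)$ with $f_1$ as in Lemma~\ref{k8io90op}, and set $f_2(k,h)=N^{3k+1}$. Assuming $|X|>f_2(k,h)$, Corollary~\ref{lem:wordlngth} (after a trivial shift of the alphabet to $[3k+1]$) produces a value $q\in\{0,\dots,3k\}$ and indices $j_1<\dots<j_N$ in $[n-|X|,n-1]$ with $c_h\ge q$ for all $h\in[j_1,j_N]$ and $c_{j_\ell}=q$ for all $\ell\in[N]$. Then $\min\{c_h:j_1\le h\le j_N\}=q$, so by $X$-linkedness $G$ has $q$ pairwise edge-disjoint paths $P_1,\dots,P_q$ between $\{v_1,\dots,v_{j_1}\}$ and $\{v_{j_N},\dots,v_n\}$. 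For every $\ell\in[N-1]$ we have $j_\ell<j_N$, so each $P_m$ runs from the left side to the right side of the cut at $j_\ell$; since that cut has exactly $c_{j_\ell}=q$ edges and the $P_m$ are edge-disjoint, each $P_m$ meets it in a single edge $e_{j_\ell}^m$, with endpoints $x_{j_\ell}^m\in\{v_1,\dots,v_{j_\ell}\}$ and $y_{j_\ell}^m$ on the other side. I would then define $q$-boundaried graphs $\mathbf{F}_\ell=(G[\{v_1,\dots,v_{j_\ell}\}],(x_{j_\ell}^1,\dots,x_{j_\ell}^q))$ and $\mathbf{B}_\ell=(G[\{v_{j_\ell+1},\dots,v_n\}],(y_{j_\ell}^1,\dots,y_{j_\ell}^q))$; since $\{e_{j_\ell}^1,\dots,e_{j_\ell}^q\}$ is exactly the cut at $j_\ell$, we get $G=\mathbf{F}_\ell\oplus\mathbf{B}_\ell$. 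Each $B_\ell$ is a subgraph of $G[X]$; because ${\bf cw}(G[X])\le k$, the graph $G[X]$ and hence $B_\ell$ is $\mathcal{H}$-immersion free (recall that in the relevant application $\mathcal{H}$ is the immersion-obstruction set of the class of graphs of cutwidth at most $k$).

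Now $\mathbf{B}_1,\dots,\mathbf{B}_{N-1}$ are $N-1>\max_q f_1(q,h)$ many $q$-boundaried graphs, so Lemma~\ref{k8io90op} gives $a<b\le N-1$ with $\mathbf{B}_a\sim_{q,h}\mathbf{B}_b$. I would then take $G'=\mathbf{F}_a\oplus\mathbf{B}_b$; the set $\{v_{j_a+1},\dots,v_{j_b}\}$ is nonempty and absent from $G'$, so $|V(G')|<|V(G)|$. The graph $G'$ is a strong immersion of $G$: map $V(F_a)\cup V(B_b)$ identically, map the edges of $F_a$ and of $B_b$ to themselves, and route each join edge $x_{j_a}^m y_{j_b}^m$ along the subpath of $P_m$ from $x_{j_a}^m$ to $y_{j_b}^m$. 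Since $P_m$ crosses the cut at $j_a$ and the cut at $j_b$ exactly once each, all internal vertices of this subpath lie in $\{v_{j_a+1},\dots,v_{j_b}\}$, which contains no branch vertex, and for distinct $m$ the subpaths are edge-disjoint and disjoint from the identically mapped edges. Finally, applying Lemma~\ref{euitq} with $\mathbf{F}=\mathbf{F}_a$, $\mathbf{G}_1=\mathbf{B}_a$ and $\mathbf{G}_2=\mathbf{B}_b$ yields
$$\aic_{\mathcal{H}}(G)=\aic_{\mathcal{H}}(\mathbf{F}_a\oplus\mathbf{B}_a)=\aic_{\mathcal{H}}(\mathbf{F}_a\oplus\mathbf{B}_b)=\aic_{\mathcal{H}}(G'),$$
which is the assertion, with $G'$ a proper strong immersion of $G$.

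The hard part is making the replacement an actual strong immersion: excising the segment $\{v_{j_a+1},\dots,v_{j_b}\}$ disconnects $\mathbf{F}_a$ from $\mathbf{B}_b$, and it is precisely the $X$-linked ordering from Lemma~\ref{linkedhelp} that provides the $q$ edge-disjoint paths whose middle portions reconnect the two sides without introducing new branch vertices; getting the boundaried-graph decomposition and the path routing to line up cleanly is the technical core, exactly as in Theorem~\ref{restated}. A secondary bookkeeping point is ensuring that every $P_m$ meets every intermediate cut in exactly one edge — this needs the target side $\{v_{j_N},\dots,v_n\}$ to lie strictly to the right of the cut, which forces discarding the last bottleneck index $j_N$ and working only with $j_1,\dots,j_{N-1}$.
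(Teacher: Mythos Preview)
Your proposal follows essentially the same route as the paper: obtain an $X$-linked ordering via Lemma~\ref{linkedhelp}, read off a word of cut sizes and apply Corollary~\ref{lem:wordlngth} to find many positions with the same value, use linkedness to produce edge-disjoint paths that witness the cut edges, invoke Lemma~\ref{k8io90op} to get two $\sim$-equivalent right-side boundaried graphs, splice, and conclude with Lemma~\ref{euitq}. Your execution is in a couple of places more careful than the paper's own argument: you work directly with the true cut sizes $c_i=\delta(\{v_1,\dots,v_i\})$ rather than with the upper bound $k_i=\delta_{G[X]}(\cdot)+\delta_G(X)$ the paper uses (the latter does not literally control the quantity appearing in the definition of $X$-linkedness), and you explicitly discard the last bottleneck index $j_N$ so that each path is guaranteed to cross every relevant cut.

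One point you correctly flag deserves emphasis: Lemma~\ref{euitq} requires both $B_a$ and $B_b$ to be $\mathcal{H}$-immersion free, and your justification (``${\bf cw}(G[X])\le k$, hence $B_\ell$ is $\mathcal{H}$-free'') only works when $\mathcal{H}=\obs_{\le_{\rm i}}(\mathcal{C}_k)$. The paper's proof silently relies on the same unstated assumption. For the general $\mathcal{H}$ in the lemma's statement this step is not justified; however, the lemma is only invoked in the proof of Theorem~\ref{mainlin} with exactly that choice of $\mathcal{H}$, so your observation identifies a cosmetic gap between the lemma's stated generality and its actual use rather than a flaw in your argument.
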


\begin{proof}
We set $f_{2}(k,h)=(f_{1}(3k,h)+1)^{3k+1}-1$.
We have that $|X|>f_{2}(k,h)$ or, equivalently, $|X|\geq (f_{1}(3k,h)+1)^{3k+1}$. We set $\ell=|X|$.
Let $\sigma^*=\langle x_{1},\ldots,x_{\ell}\rangle$ be an ordering of the vertices in $X$
such that ${\bf cw}_{\sigma^*}(G[X])\leq k$. 
Let $\sigma=\langle v_{1},\ldots,v_{n-\ell},v_{n-\ell+1},\ldots,v_{n}\rangle$
be any ordering of $V(G)$ such that $\sigma^*$ is a suffix of $\sigma$, i.e. $\langle x_{1},\ldots,x_{\ell}\rangle=\langle v_{n-\ell+1},\ldots,v_{n}\rangle$. 
It follows that ${\bf cw}_{\sigma}(G,X)\leq {\bf cw}_{\sigma^*}(G[X])+\delta_{G}(X)\leq k+2k=3k$. 
From Lemma~\ref{linkedhelp}, there is an $X$-linked ordering $\sigma'$ of $V(G)$, where ${\bf cw}_{\sigma'}(G,X)\leq 3k$.

We set $k_i=\delta_{G[X]}(\{x_{1},\ldots,x_{i-(n-\ell)})+\delta_{G}(X)$
and observe that $k_{i}\leq k+2k=3k,  i\in[n-\ell,n-1]$.
We set up the alphabet $\mathbb{A}=\{0,1,\ldots,3k\}$ 
and we see $w=k_{n-\ell},k_{n-\ell+1},\ldots,k_{n-1}$
as a word on $\mathbb{A}$. Let also $N=f_{1}(3k,h)$.
Notice that $|w|=|X|=\ell\geq (f_{1}(3k,h)+1)^{3k+1}=(N+1)^{|\mathbb{A}|}$.
From Corollary~\ref{lem:wordlngth}, if 
$|w|\geq (N+1)^{|\mathbb{A}|}$,
there are $a,b\in [n-\ell,n-1], a<b$
and some $p\in\mathbb{A}$ 
such that $k_{a},k_{b}\geq p$
and $p$ appears in $\{k_{a},\ldots,k_{b}\}$
at least $N+1$ times.
Let these appearance be at indices $a\leq i_1<i_2<\ldots<i_{N+1}\leq b$.

By $X$-linkedness, there are $p$ edge-disjoint paths $P^i$, for $i\in [p]$, from $\{v_1,\ldots,v_{i_1}\}$ to $\{v_{i_{N+1}},\ldots,$
$v_{|V(G)|}\}$.
Observe that for each $j\in [N+1]$, each path $P^i$ must cross exactly one edge of the cut $\delta_G(\{v_1,\ldots,v_{i_j}\})$; let this edge be $z^i_jw^i_j$, where 
$z^i_j\in \{v_1,\ldots,v_{i_j}\}$ and $w^i_j\notin \{v_1,\ldots,v_{i_j}\}$.
For each $j\in [N+1]$ we define $p$-boundaried graphs
${\bf F}_j=(F_j,(z^1_j,\ldots,z^p_j))$, where  
$F_{j}=G[\{v_{1},\ldots,v_{i_j}\}]$, and
${\bf G}_{j}=(G_{j},(w^1_j,\ldots,w^p_j))$
where $G_{j}=G[\{v_{i_j+1},\ldots,v_{|V(G)|}\}]$.

As, from Lemma~\ref{k8io90op},
the equivalence relation $\sim_{3k,h}$ has at most $N$ equivalent classes,
there are $j_1,j_2$ such that $a\leq j_1<j_2\leq b$
such that
${\bf G}_{j_1}\sim_{3k,h} {\bf G}_{j_2}$. Let $G'={\bf F}_{i_{j_1}}\oplus {\bf G}_{i_{j_2}}$; it is easy to observe that $G'$ is a proper immersion of $G$, because the edges added when joining can be modeled using
appropriate infixes of the paths $P_i$.
From Lemma~\ref{euitq}, however, we have  $\aic_{\cal H}({\bf F}_{i}\oplus {\bf G}_{i})=\aic_{\cal H}({\bf F}_{i}\oplus {\bf G}_{j})$, and therefore 
$\aic_{\cal H}(G)=\aic_{\cal H}(G')$.
\end{proof}

 \begin{lemma}
 \label{mroklp3}
 Let $k,w,\ell\in\NN$ and let $G$ be a graph.  If $\dcw_k(G)\leq w$
 and $|V(G)|\geq \ell\cdot (2w+1)+2w$, 
 then $G$ has  a $(2k,k)$-cutwidth-edge-protrusion $X$
 where $|X|\geq \ell$.
\end{lemma}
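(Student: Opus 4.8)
The plan is to exhibit the required protrusion $X$ as a sufficiently long contiguous block of an optimum cutwidth ordering of $G$ after deleting a minimum-size set of edges witnessing $\dcw_k(G)\leq w$. Concretely, first fix $F\subseteq E(G)$ with $|F|\leq w$ and $\ctw(G\setminus F)\leq k$, and let $\sigma=\langle v_1,\ldots,v_n\rangle$ be an ordering of $V(G)$ with $\ctw_\sigma(G\setminus F)\leq k$, where $n=|V(G)|$. Let $Y\subseteq V(G)$ be the set of all endpoints of edges of $F$, so $|Y|\leq 2w$. Deleting the vertices of $Y$ from the linear order $\sigma$ leaves $\geq n-2w\geq \ell\cdot(2w+1)$ vertices split into at most $2w+1$ maximal runs of consecutive vertices of $\sigma$; by the pigeonhole principle at least one such run $X=\{v_a,v_{a+1},\ldots,v_b\}$ has $|X|\geq \ell$ (the case $\ell=0$ being trivial with $X=\emptyset$). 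I claim this $X$ is a $(2k,k)$-cutwidth-edge-protrusion.

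Since $X\cap Y=\emptyset$, no edge of $F$ is incident to any vertex of $X$, and hence $G[X]=(G\setminus F)[X]$ and $\delta_G(X)=\delta_{G\setminus F}(X)$. For the bound $\ctw(G[X])\leq k$, I would use the induced suborder $\sigma_X$: for any prefix $P=\{v_a,\ldots,v_c\}$ of $\sigma_X$ (with $a\leq c\leq b$), every edge of $G[X]$ with exactly one endpoint in $P$ has its endpoints $v_i,v_j$ with $i\leq c<j$, so it is an edge of $G\setminus F$ crossing the $\sigma$-cut $(V^{\sigma}_{v_c},V(G)\setminus V^{\sigma}_{v_c})$; therefore $\delta_{G[X]}(P)\leq \delta_{G\setminus F}(V^{\sigma}_{v_c})\leq k$. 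Taking the maximum over prefixes gives $\ctw_{\sigma_X}(G[X])\leq k$, hence $\ctw(G[X])\leq k$.

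For the bound $\delta_G(X)\leq 2k$: since $\delta_G(X)=\delta_{G\setminus F}(X)$ and $X=\{v_a,\ldots,v_b\}$ is an interval of $\sigma$, every edge of $G\setminus F$ with exactly one endpoint in $X$ either joins $X$ with $\{v_1,\ldots,v_{a-1}\}$, and then crosses the $\sigma$-cut $(V^{\sigma}_{v_{a-1}},V(G)\setminus V^{\sigma}_{v_{a-1}})$ (so there are at most $\delta_{G\setminus F}(V^{\sigma}_{v_{a-1}})\leq k$ such edges, and none if $a=1$), or joins $X$ with $\{v_{b+1},\ldots,v_n\}$, and then crosses the $\sigma$-cut at $v_b$ (at most $\delta_{G\setminus F}(V^{\sigma}_{v_b})\leq k$ such edges, none if $b=n$). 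Summing, $\delta_G(X)\leq 2k$, which completes the argument. I expect the only delicate point — and it is a mild one — to be the simultaneous bookkeeping: one must pick $X$ as a contiguous $\sigma$-block disjoint from $Y$ so that, at once, the edges of $G$ inside $X$ coincide with the edges of $G\setminus F$ inside a $\sigma$-interval (yielding the cutwidth bound via $\sigma_X$) and the cut $\delta_G(X)$ is ``sandwiched'' between just two $\sigma$-cuts of $G\setminus F$, each of size at most $k$; the pigeonhole count is then exactly what forces $|X|\geq\ell$ from the hypothesis $|V(G)|\geq \ell\cdot(2w+1)+2w$.
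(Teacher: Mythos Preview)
Your proof is correct and follows essentially the same approach as the paper: fix a deletion set $F$ with $|F|\leq w$ and an optimum ordering $\sigma$ of $G\setminus F$, remove the at most $2w$ endpoints of $F$, and use pigeonhole to locate a long $\sigma$-interval $X$ disjoint from these endpoints, whence the cutwidth and boundary bounds follow from the two $\sigma$-cuts of $G\setminus F$ bracketing $X$. Your write-up is in fact slightly more explicit than the paper's in justifying $\delta_G(X)=\delta_{G\setminus F}(X)$ via $X\cap Y=\emptyset$.
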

\begin{proof}
We denote $n=|V(G)|$.
Let $F$ be a set of edges of $G$ such that if $G'=G\setminus F$, then ${\bf cw}(G')\leq k$.
Let $\sigma=\langle v_{1},\ldots,v_{n}\rangle$ be an ordering of $V(G')$
such that ${\bf cw}_{\sigma}(G')\leq k$.
Let $I$ denote the indices in $\sigma$ of the endpoints of the edges
in~$F$ and notice that $|I| \leq 2w$. We consider the maximal
intervals of $[n]$ that do not intersect~$I$.
The set $[n] \setminus I$ has $n-|I|\geq n-2w$ elements, that are distributed
among at most $|I|+1\leq 2w+1$ such intervals. By the pigeonhole
principle, one interval $\{i, \dots, j\}$ has
at least $\frac{n - 2w}{2w+1} \geq \ell$ elements.

Consider now the set $X=\{v_{i},\ldots,v_{j}\}$, $|X|\geq \ell$.
Notice that if $\sigma'=\langle v_{i},\ldots,v_{j} \rangle$,
then ${\bf cw}_{\sigma'}(G[X])\leq k$. Moreover 
there are at most
$\delta_{G'}(\{v_{1},\ldots,v_{i-1}\})+\delta_{G'}(\{v_{j+1},\ldots,v_{n}\})\leq
2k$ edges with one vertex in $X$ and the other not in $X$. Therefore,
$\delta_{G}(X)\leq 2k$ and $X$ is 
a $(2k,k)$-cutwidth-edge-protrusion of $G$. 
%
\end{proof}

\begin{proof}[Proof of Theorem~\ref{mainlin}]
We set ${\cal H}={\bf obs}_{\leq_{\rm im}}({\cal C}_{k})$.
By Theorem~\ref{restated}, there is a function $f_{3}:\NN\rightarrow\NN$ such that graphs from ${\cal H}$ have at most $h=f_{3}(k)$ vertices. 
Let $G\in{\bf obs}_{\leq_{\rm im}}({\cal C}_{w,k})$.
This means that $\dcw_k(G)=w+1$, while, 
for every proper strong immersion $G'$ of $G$, it holds that 
$\dcw_k(G')\leq w$.
This, together with Observation~\ref{obpot} and Lemma~\ref{wmopri9o}, implies 
that $G$ cannot have 
a $(2k,k)$-cutwidth-edge-protrusion $X$ of more than $\ell=f_{2}(k,h)$, vertices. As  $\dcw_k(G)=w+1$, Lemma~\ref{mroklp3} implies that $|V(G)|< \ell\cdot (2w+3)+2(w+1)=\Oh_{k}(w)$
vertices.
\end{proof}

Notice that Theorem~\ref{mainlin} can be seen as an application of Lemma~\ref{lem:linked} on the existence of linked orderings of optimum cutwidth (along with Lemma~\ref{linkedhelp}, that is an easy extension of it). We stress that this bound is constructive as, by going through the proof, one can  make an estimation of the functions of $k$ hidden in the $\Oh_{k}$ notation. In future work we plan to further develop  the above technique of proving such linear bounds for other edge modification problems. Some analogous work for vertex modification problems have been done in~\cite{FominLMS12} where the parameter $k$ is the number of vertices that one should remove in order to transform a graph to 
one of treewidth at most $k$. The corresponding bound 
in~\cite{FominLMS12} is $w^{\Oh_{k}(1)}$, is non-constructive, and follows a distinct (more elaborated) approach.

\subsection{Lower bound on number of obstructions}

We now focus on the proof of Theorem~\ref{maisnlilowern}.
We need the following result.
\begin{theorem}[\!\!\cite{GOVINDAN2001189}]\label{t:gov}
  For every $k \geq 7$, the number of non-isomorphic connected minimal obstructions in
  ${\bf obs}_{\leq_{\rm i}}({\cal C}_{k})$ is at least $3^{k-7} + 1$.
\end{theorem}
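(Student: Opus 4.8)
The plan is to construct, for every $k\ge 7$, an explicit family $\mathcal{G}_k$ of $3^{k-7}+1$ pairwise non-isomorphic connected graphs, and to show that each member lies in $\obs_{\leq_{\rm i}}(\C_k)$. Since $\obs_{\leq_{\rm i}}(\C_k)$ consists precisely of the minimal immersion obstructions, and (as recorded after Observation~\ref{osesimme}) these are automatically connected, such a family proves the bound. The first ingredient is a checkable criticality criterion. Every proper immersion of a graph is reached from it by a sequence of edge deletions, isolated-vertex deletions, and split-offs (replacing two edges $uv,vw$ at a common vertex $v$ by a single edge $uw$), and cutwidth is monotone under $\leq_{\rm i}$ and unchanged by deleting isolated vertices. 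Hence a connected graph $G$ with no isolated vertices lies in $\obs_{\leq_{\rm i}}(\C_k)$ if and only if $\ctw(G)=k+1$, $\ctw(G-e)\le k$ for every edge $e$, and $\ctw(G')\le k$ for every $G'$ obtained from $G$ by a single split-off. So the verification for each candidate graph reduces to one cutwidth computation plus the statement that every single edge deletion or split-off drops the cutwidth.

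Second, I would set up the combinatorial skeleton inductively on $k$. Fix a small ``degenerate'' graph $S_7$ (for instance a two-vertex graph with $8$ parallel edges) and a small ``branched'' graph $B_7$ (for instance a perfect binary tree of the least height whose cutwidth equals $8$, the lower bound coming from Lemma~\ref{lem:ekjgherk}), each checked directly against the criterion above to be a minimal obstruction for $\C_7$; put $\mathcal{G}_7=\{S_7,B_7\}$. Then I would describe a ``lift'' operation $\Lambda$ that attaches, at a distinguished root of a graph, a bounded gadget raising the cutwidth by exactly $1$, and that comes in three flavours $\Lambda^{(1)},\Lambda^{(2)},\Lambda^{(3)}$. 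The gadget is built so that on a degenerate obstruction the three flavours coincide (by a symmetry of the gadget relative to that root), producing one degenerate obstruction one level up, whereas on any other obstruction they produce three pairwise non-isomorphic graphs. Setting $\mathcal{G}_{k+1}=\{\Lambda^{(t)}(G)\ :\ G\in\mathcal{G}_k,\ t\in\{1,2,3\}\}$, the unique degenerate member of $\mathcal{G}_k$ contributes one new graph while each of the remaining $|\mathcal{G}_k|-1$ members contributes three, so $|\mathcal{G}_{k+1}|=3(|\mathcal{G}_k|-1)+1$; with $|\mathcal{G}_7|=2$ this solves to $|\mathcal{G}_k|=3^{k-7}+1$.

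Third, I would verify the two substantive properties of $\Lambda$. For $\ctw(\Lambda^{(t)}(G))=\ctw(G)+1$: the upper bound is an explicit ordering, namely an optimal ordering of $G$ interleaved appropriately over the gadget; the lower bound is a counting/Menger argument on an optimal ordering of the lifted graph, exactly of the type underlying Lemma~\ref{lem:ekjgherk}. For criticality of $\Lambda^{(t)}(G)$: an edge deletion or split-off inside the gadget zeroes the gadget's contribution by construction and hence drops cutwidth to $\ctw(G)\le k$; an edge operation inside $G$ uses that $G$ itself is critical by the induction hypothesis, once one establishes that in an optimal ordering of $\Lambda^{(t)}(G)$ every $\sigma$-cut is simultaneously tight for the gadget part and for the $G$ part, so that slack produced on either side carries over to the whole graph. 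Finally, non-isomorphism within $\mathcal{G}_k$ is automatic (minimal obstructions form a $\leq_{\rm i}$-antichain), but to rule out accidental collisions between distinct flavour-sequences one recovers the sequence from the graph via a canonical ``peel the outermost gadget'' decomposition, the root at each stage being recoverable as the unique vertex attaining a prescribed extremal local invariant.

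I expect the main obstacle to be precisely this bundle in the third step: designing three gadgets that each add exactly one to the cutwidth, each preserve criticality under all edge deletions and split-offs, and yet are pairwise non-interchangeable in a way that is stable under iterated lifting. The ``tight everywhere'' analysis of the cuts of an optimal ordering, which is what makes criticality propagate through the lift, is the delicate technical heart; it is the same phenomenon that makes single multi-edge obstructions and perfect binary trees critical, but here it must be engineered to hold for three genuinely distinct attachments at once.
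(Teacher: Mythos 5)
This theorem is not proved in the paper at all: it is imported verbatim as a citation to Govindan and Ramachandramurthi~\cite{GOVINDAN2001189}, so there is no in-paper proof to compare against. Judged on its own, your proposal is a blueprint whose arithmetic shape is right (the recurrence $a_{k+1}=3(a_k-1)+1$ with $a_7=2$ does solve to $3^{k-7}+1$, and the criticality criterion via edge deletions and split-offs is the correct reformulation of membership in ${\bf obs}_{\leq_{\rm i}}(\C_k)$), but it is not a proof: the gadgets $\Lambda^{(1)},\Lambda^{(2)},\Lambda^{(3)}$ are never constructed, and the three properties on which everything hinges --- that attaching a gadget raises cutwidth by exactly one, that it preserves criticality under every edge deletion and every split-off anywhere in the resulting graph, and that the three flavours collapse to one graph on the degenerate obstruction yet yield three pairwise non-isomorphic graphs on every other obstruction, and keep doing so stably under iteration --- are all asserted, not established. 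Those properties are the entire content of the theorem; without explicit gadgets and the accompanying ``tight-everywhere'' cut analysis that you yourself flag as the delicate heart, the argument does not close.

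There is also a concrete problem with your proposed base case. You take $\mathcal{G}_7=\{S_7,B_7\}$ with $B_7$ a perfect binary tree of least height having cutwidth $8$. Lemma~\ref{lem:ekjgherk} gives only a \emph{lower bound} on the cutwidth of perfect binary trees; it says nothing about them being minimal obstructions. In fact it is very doubtful that a perfect binary tree lies in ${\bf obs}_{\leq_{\rm i}}(\C_7)$: by symmetry, the optimum-width value is attained by many cuts spread across the tree, and deleting a single leaf (or splitting off at a single internal vertex) is not going to destroy all of them and drop the cutwidth. The graph $S_7$ (two vertices with $8$ parallel edges) is indeed a minimal obstruction and is easy to verify, but $B_7$ is not justified, and without a correct second base graph the recursion does not start. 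To make this a genuine proof you would need, at a minimum, to exhibit a concrete second obstruction for $\C_7$, write down the three gadgets explicitly, and carry out the three verifications; until then this remains a plausible plan rather than a proof.
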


Recall that, given a graph class ${\cal H}$, we defined $\aic_{\cal H}(G)$ as  the minimum number of edges
of $G$ whose removal yields an ${\cal H}$-immersion-free graph. 
We set ${\cal C}_{w,{\cal H}}=\{G\mid \aic_{\cal H}(G)\leq w\}$. In particular ${\cal C}_{0,{\cal H}}$ is the class of all 
${\cal H}$-immersion free graphs. If $G$ and $H$ are graphs, we denote by $G\uplus H$ the disjoint union of $G$ and $H$.

The following observations follow directly from the definition of $\aic_{\cal H}$.

\begin{observation}
\label{obskl}
If $G$ and $H$ are graphs,  then $H\leq_{\rm i}G$ implies that $\aic_{\cal H}(H)\leq \aic_{\cal H}(G)$. 
\end{observation}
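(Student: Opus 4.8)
The plan is to transport an optimal edge-removal set from $G$ to $H$ along an immersion model witnessing $H\leq_{\rm i}G$. First I would fix an edge set $F\subseteq E(G)$ with $|F|=\aic_{\cal H}(G)$ such that $G\setminus F$ admits no graph of ${\cal H}$ as an immersion, and fix an $H$-immersion model $(\phi,\psi)$ of $G$. Since the paths $\psi(e)$, for $e\in E(H)$, are pairwise edge-disjoint, every edge of $F$ lies on at most one such path; let $F'\subseteq E(H)$ be the set of those edges $e$ for which $\psi(e)$ uses at least one edge of $F$. Mapping each edge of $F$ to the (unique, if any) path of the model through it then gives $|F'|\le |F|=\aic_{\cal H}(G)$.

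Next I would observe that restricting $(\phi,\psi)$ to the edges of $E(H)\setminus F'$ yields an immersion model of $H\setminus F'$ in $G\setminus F$: for $e\notin F'$ the path $\psi(e)$ avoids $F$ and hence lies in $G\setminus F$, the map $\phi$ is still an injection, and the relevant paths are still pairwise edge-disjoint. Thus $H\setminus F'\leq_{\rm i}G\setminus F$.

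Finally, by transitivity of $\leq_{\rm i}$: if some $H_0\in{\cal H}$ were an immersion of $H\setminus F'$, then $H_0\leq_{\rm i}H\setminus F'\leq_{\rm i}G\setminus F$ would give $H_0\leq_{\rm i}G\setminus F$, contradicting that $G\setminus F$ is ${\cal H}$-immersion-free. Hence $H\setminus F'$ is ${\cal H}$-immersion-free, and therefore $\aic_{\cal H}(H)\le |F'|\le\aic_{\cal H}(G)$, as claimed. There is no genuine obstacle here; the only point needing a moment's care is the counting argument bounding $|F'|$, which rests entirely on the edge-disjointness of the paths in the immersion model.
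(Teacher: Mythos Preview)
Your argument is correct. The paper does not give a proof of this observation at all; it simply states that it ``follows directly from the definition of $\aic_{\cal H}$''. Your write-up is precisely the natural unpacking of that remark: pull back an optimal deletion set along the immersion model using edge-disjointness of the paths, and use transitivity of $\leq_{\rm i}$ to conclude that the resulting graph is ${\cal H}$-immersion-free.
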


\begin{observation}
\label{obskl33}
If $G$ and $H$ are graphs,  then $\aic_{\cal H}(G\uplus H) = \aic_{\cal H}(G) + \aic_{\cal H}(H)$.
\end{observation}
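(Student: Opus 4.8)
The plan is to establish the two inequalities $\aic_{\cal H}(G \uplus H) \le \aic_{\cal H}(G) + \aic_{\cal H}(H)$ and $\aic_{\cal H}(G \uplus H) \ge \aic_{\cal H}(G) + \aic_{\cal H}(H)$ separately. Both are immediate from the definition of $\aic_{\cal H}$, together with the facts that immersions are preserved under taking subgraphs and that a connected graph cannot be immersed in a way that spans two distinct connected components (any path between branch vertices stays within one component). The latter is the only place where we use that the graphs in $\cal H$ are connected, which holds in every application of this observation in the present section (cf.\ Lemma~\ref{euitq} and the fact that $\obs_{\leq_{\rm i}}(\C_k)$ consists of connected graphs).

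For the upper bound, I would pick $F_G \subseteq E(G)$ with $|F_G| = \aic_{\cal H}(G)$ and $G \setminus F_G$ being $\cal H$-immersion-free, and likewise $F_H \subseteq E(H)$ with $|F_H| = \aic_{\cal H}(H)$ and $H \setminus F_H$ being $\cal H$-immersion-free. Then $(G \uplus H) \setminus (F_G \cup F_H) = (G \setminus F_G) \uplus (H \setminus F_H)$. If some connected $K \in \cal H$ were immersed in this graph, its immersion model would map all branch vertices and all the edge-disjoint paths into a single connected component, so $K$ would already be immersed in $G \setminus F_G$ or in $H \setminus F_H$ — a contradiction. Hence $(G \uplus H) \setminus (F_G \cup F_H)$ is $\cal H$-immersion-free, and therefore $\aic_{\cal H}(G \uplus H) \le |F_G| + |F_H| = \aic_{\cal H}(G) + \aic_{\cal H}(H)$.

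For the lower bound, I would take $F \subseteq E(G \uplus H)$ with $|F| = \aic_{\cal H}(G \uplus H)$ and $(G \uplus H) \setminus F$ being $\cal H$-immersion-free, and split $F = F_G \cup F_H$ along $F_G = F \cap E(G)$ and $F_H = F \cap E(H)$. The graph $G \setminus F_G$ is exactly the part of $(G \uplus H) \setminus F$ on vertex set $V(G)$, hence a subgraph of an $\cal H$-immersion-free graph, and so is itself $\cal H$-immersion-free; thus $\aic_{\cal H}(G) \le |F_G|$, and symmetrically $\aic_{\cal H}(H) \le |F_H|$. Summing gives $\aic_{\cal H}(G) + \aic_{\cal H}(H) \le |F_G| + |F_H| = |F| = \aic_{\cal H}(G \uplus H)$. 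Combining the two inequalities yields the claimed equality. There is no genuine obstacle in this argument; the only step that deserves an explicit sentence is the single-component observation used in the upper bound.
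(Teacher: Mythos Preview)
Your argument is correct, and in fact more careful than the paper itself: the paper states this as an observation that ``follows directly from the definition of $\aic_{\cal H}$'' and gives no proof at all. You also rightly flag the implicit assumption that every graph in ${\cal H}$ is connected --- without it the equality fails (e.g., ${\cal H}=\{K_2\uplus K_2\}$, $G=H=K_2$), and indeed all uses of the observation in Section~\ref{sec:remddist} are for such ${\cal H}$.
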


\begin{observation}
\label{obs0o9}
If $G\in{\bf obs}_{\leq_{\rm i}}({\cal C}_{w,{\cal H}})$, then $\aic_{\cal H}(G)=w+1$.
\end{observation}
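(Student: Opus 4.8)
The plan is to read off the definition of $\obs_{\leq_{\rm i}}$ and then establish the two inequalities $\aic_{\cal H}(G)\geq w+1$ and $\aic_{\cal H}(G)\leq w+1$ separately. The first one is immediate: condition \textbf{O1} states that $G\notin {\cal C}_{w,{\cal H}}$, and by the very definition of ${\cal C}_{w,{\cal H}}=\{G\mid \aic_{\cal H}(G)\leq w\}$ this means precisely $\aic_{\cal H}(G)>w$, i.e.\ $\aic_{\cal H}(G)\geq w+1$. So the whole content is in the converse bound.

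For $\aic_{\cal H}(G)\leq w+1$ I would argue by contradiction, assuming $\aic_{\cal H}(G)\geq w+2$. First fix a minimum edge set $F\subseteq E(G)$ with $G\setminus F$ being ${\cal H}$-immersion-free, so that $|F|=\aic_{\cal H}(G)\geq 2$; in particular $F\neq\emptyset$, and we may pick an edge $e\in F$. Deleting a single edge is an immersion operation, so $G\setminus e\leq_{\rm i} G$, and since $|E(G\setminus e)|<|E(G)|$ the graph $G\setminus e$ is not isomorphic to $G$; hence $G\setminus e\lneqq G$ in the immersion order. Condition \textbf{O2} then gives $G\setminus e\in {\cal C}_{w,{\cal H}}$, that is, $\aic_{\cal H}(G\setminus e)\leq w$.

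The last ingredient is a one-line lemma running in the direction \emph{opposite} to Observation~\ref{obskl} (which only bounds $\aic_{\cal H}$ from above under immersion, and is therefore not enough here): deleting one edge cannot decrease $\aic_{\cal H}$ by more than one, i.e.\ $\aic_{\cal H}(G\setminus e)\geq \aic_{\cal H}(G)-1$. Indeed, if $F'$ is a minimum witness for $G\setminus e$, then $F'\cup\{e\}$ is an edge subset of $G$ with $G\setminus(F'\cup\{e\})=(G\setminus e)\setminus F'$ being ${\cal H}$-immersion-free, so $\aic_{\cal H}(G)\leq |F'|+1=\aic_{\cal H}(G\setminus e)+1$. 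Combining this with $\aic_{\cal H}(G\setminus e)\leq w$ yields $\aic_{\cal H}(G)\leq w+1$, contradicting the standing assumption $\aic_{\cal H}(G)\geq w+2$. Hence $\aic_{\cal H}(G)=w+1$.

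I do not anticipate a real obstacle: the only points needing a little care are (i) invoking \textbf{O2} requires $G\setminus e$ to be a \emph{proper} immersion, which is why we argue with the number of edges; (ii) Observation~\ref{obskl} gives the wrong inequality, so the short lower-bound lemma above has to be proved by hand; and (iii) the degenerate edgeless case — but the contradiction hypothesis already forces $|F|\geq 2$, so $G$ has an edge, and as everywhere in this paper the graphs of ${\cal H}$ have at least one edge, so $\aic_{\cal H}$ is finite throughout. All three are dispatched in the argument above.
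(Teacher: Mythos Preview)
Your proof is correct. The paper does not supply a proof for this observation at all---it is grouped with Observations~\ref{obskl} and~\ref{obskl33} under the sentence ``The following observations follow directly from the definition of $\aic_{\cal H}$''---so your argument is exactly the unpacking one would expect: \textbf{O1} gives the lower bound, and \textbf{O2} applied to $G\setminus e$ together with the trivial inequality $\aic_{\cal H}(G)\leq \aic_{\cal H}(G\setminus e)+1$ gives the upper bound. One cosmetic remark: the detour through contradiction is unnecessary, since already $\aic_{\cal H}(G)\geq w+1\geq 1$ guarantees an edge to delete (under your stated assumption that graphs in ${\cal H}$ have at least one edge, which holds in all applications in the paper).
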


%

%

\begin{lemma}\label{c:zero}
Let ${\cal H}$ be some  $\leq_{\rm i}$-antichain.
  For every non-negative integer $w$, if $G_1, \dots,G_{w+1}$ are (not necessarily
  distinct) members of  ${\cal H}$, then $\biguplus_{i=1}^{w+1} G_i\in{\bf obs}_{\leq_{\rm i}}({\cal C}_{w,{\cal H}})$.
\end{lemma}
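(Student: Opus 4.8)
The plan is to verify directly the two defining properties {\bf O1} and {\bf O2} of membership in ${\bf obs}_{\leq_{\rm i}}({\cal C}_{w,{\cal H}})$ for the graph $K:=\biguplus_{i=1}^{w+1}G_i$. Everything rests on one elementary fact about members of an $\leq_{\rm i}$-antichain: $\aic_{\cal H}(G_i)=1$ for every $i$. Indeed, $\aic_{\cal H}(G_i)\geq 1$ because $G_i\in{\cal H}$ is immersed in itself, so $G_i$ is not ${\cal H}$-immersion-free. For the reverse inequality I would delete any single edge $e$ of $G_i$ (we may assume $G_i$ has one; if $G_i$ is edgeless then $\aic_{\cal H}(G_i)$ is not finite and the statement is to be read vacuously) and argue that $G_i\setminus e$ is ${\cal H}$-immersion-free: any $H'\in{\cal H}$ with $H'\leq_{\rm i}G_i\setminus e$ would satisfy $H'\leq_{\rm i}G_i$, hence $H'$ and $G_i$ are comparable members of the antichain ${\cal H}$ and therefore $H'\cong G_i$; but then $G_i\cong H'\leq_{\rm i}G_i\setminus e\lneqq_{\rm i}G_i$ contradicts antisymmetry of the partial order $\leq_{\rm i}$. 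The same reasoning, which I record for later use, shows that whenever a graph $D$ is a \emph{proper} immersion of some $G_i$, it is ${\cal H}$-immersion-free, i.e.\ $\aic_{\cal H}(D)=0$.

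With this in hand, {\bf O1} is immediate: by Observation~\ref{obskl33} and the above, $\aic_{\cal H}(K)=\sum_{i=1}^{w+1}\aic_{\cal H}(G_i)=w+1>w$, so $K\notin{\cal C}_{w,{\cal H}}$.

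For {\bf O2}, fix any proper immersion $G'\lneqq_{\rm i}K$ together with a $G'$-immersion model $(\phi,\psi)$ in $K$. The key structural observation is that immersions cannot straddle the pieces of a disjoint union: for each connected component $C$ of $G'$ the set $\phi(V(C))\cup\bigcup_{e\in E(C)}\psi(e)$ is a connected subgraph of $K=\biguplus_{i}G_i$, hence lies entirely inside one $G_i$. Grouping the components of $G'$ according to the index of the $G_i$ into which they fall yields a decomposition $G'=\biguplus_{i=1}^{w+1}D_i$, where some $D_i$ may be empty, with $D_i\leq_{\rm i}G_i$ for all $i$ (restrict $\phi$ and $\psi$). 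Since $G'\not\cong K$, we cannot have $D_i\cong G_i$ for every $i$, so there is an index $i_0$ with $D_{i_0}\lneqq_{\rm i}G_{i_0}$; by the remark at the end of the first paragraph, $\aic_{\cal H}(D_{i_0})=0$. Applying Observation~\ref{obskl33} and then Observation~\ref{obskl} we obtain
\[
\aic_{\cal H}(G')=\sum_{i=1}^{w+1}\aic_{\cal H}(D_i)=\sum_{i\neq i_0}\aic_{\cal H}(D_i)\leq\sum_{i\neq i_0}\aic_{\cal H}(G_i)=w,
\]
so $G'\in{\cal C}_{w,{\cal H}}$. Together with {\bf O1} this gives $K\in{\bf obs}_{\leq_{\rm i}}({\cal C}_{w,{\cal H}})$, as claimed.

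I do not anticipate a genuine obstacle. The two places that need care are the combined use of antisymmetry of $\leq_{\rm i}$ with the antichain hypothesis (to see that proper immersions of an antichain member are immersion-free), and the elementary fact that a connected subgraph arising from an immersion model lives inside a single connected piece of the host, which is exactly what allows the ``coordinate-wise'' splitting $G'=\biguplus_i D_i$; the arithmetic afterwards is just additivity and monotonicity of $\aic_{\cal H}$.
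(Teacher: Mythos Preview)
Your proof is correct and follows essentially the same approach as the paper's: establish $\aic_{\cal H}(G_i)=1$ from the antichain hypothesis, use additivity (Observation~\ref{obskl33}) to get {\bf O1}, and for {\bf O2} decompose a proper immersion $G'$ of $K$ coordinate-wise into pieces $D_i\leq_{\rm i}G_i$ with at least one strict, then use monotonicity (Observation~\ref{obskl}) and additivity again. The paper's proof is terser---it asserts both $\aic_{\cal H}(H)=1$ and the decomposition $G'=\biguplus G_i'$ without justification---whereas you spell out the connectivity argument for the splitting and the antisymmetry argument for why proper immersions of antichain members are ${\cal H}$-immersion-free; these are exactly the points that deserve care, and you handle them well.
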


\begin{proof}
  Let $G = \biguplus_{i=1}^{w+1} G_i$.  To prove that 
 $G\in {\bf obs}_{\leq_{\rm i}}({\cal C}_{w,{\cal H}})$ we have to show that it satisfies {\bf O1} and {\bf O2}.
Notice that since $\mathcal{H}$ is an $\leq_{\rm i}$-antichain, $\aic_{\cal H}(H) = 1$ for every $H \in \mathcal{H}$.
By Observations~\ref{obskl33} and~\ref{obs0o9}, $\aic_{\cal H}(G) = \sum_{i=1}^{w+1}\aic_{\cal H}(G_i) = w+1$ and {\bf O1} holds. Therefore, $G\not\in {\cal C}_{w,{\cal H}}$.
Let now $G'$ is a proper immersion of $G$.
This mean that $G'=\biguplus_{i=1}^{w+1}G'_{i}$
where $G'_{i}\leq_{\rm i}G_{i}$ and at least one of $G_{1}',\ldots,G_{w+1}'$ is different than $G_{i}$. W.l.o.g. we assume that this graph is $G_{w+1}$.
As ${\cal H}$ is a $\leq_{\rm i}$-antichain, $G'_{w+1}$
is not isomorphic to a graph of ${\cal H}$. Therefore $\aic_{\cal H}(G'_{w+1})=0$. 
Then, by Observations~\ref{obskl} and~\ref{obskl33}, $\aic_{\cal H}(G')=\sum_{i=1}^{k}\aic_{\cal H}(G'_{i})  +\aic_{\cal H}(G'_{w+1})\leq \sum_{i=1}^{w}\aic_{\cal H}(G_{i})  +0=w$ and {\bf O2} holds.
%
%
%
\end{proof}

\begin{theorem}
\label{u7ui87u}
If $k$ is a non-negative integer and  ${\cal H}$
is a $\leq_{\rm i}$-antichain that contains  at least $q$ connected graphs, then $|{\bf obs}_{\leq_{\rm i}}({\cal C}_{w,{\cal H}})|\geq \binom{ q+ w}{w+1}$.
\end{theorem}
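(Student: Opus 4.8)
The plan is to build an explicit family of $\binom{q+w}{w+1}$ pairwise non-isomorphic graphs, all of which belong to ${\bf obs}_{\leq_{\rm i}}({\cal C}_{w,{\cal H}})$, invoking Lemma~\ref{c:zero} for membership. First I would fix $q$ connected graphs $H_1,\dots,H_q\in{\cal H}$. Since ${\cal H}$ is a $\leq_{\rm i}$-antichain and every graph is immersed into itself, any two distinct members of ${\cal H}$ are non-isomorphic; in particular $H_1,\dots,H_q$ are pairwise non-isomorphic. For each multiset $M$ of cardinality $w+1$ with elements drawn from $\{H_1,\dots,H_q\}$ (repetitions allowed), let $G_M$ be the disjoint union of the graphs listed in $M$, counted with multiplicity. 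By Lemma~\ref{c:zero} — whose hypothesis, namely that ${\cal H}$ is a $\leq_{\rm i}$-antichain, is exactly what we have — each $G_M$ lies in ${\bf obs}_{\leq_{\rm i}}({\cal C}_{w,{\cal H}})$.

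Next I would verify that the assignment $M\mapsto G_M$ is injective up to isomorphism. Here connectedness of the $H_i$ is the key point: the connected components of $G_M$ are precisely the graphs listed in $M$, so the multiset of isomorphism types of the components of $G_M$ is exactly $M$ (using once more that $H_1,\dots,H_q$ are pairwise non-isomorphic). As two graphs are isomorphic only when they have the same multiset of component isomorphism types, $G_M\cong G_{M'}$ forces $M=M'$. Hence the graphs $\{G_M\}_M$ are pairwise non-isomorphic.

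Finally, counting multisets of size $w+1$ over a $q$-element ground set gives $\binom{(w+1)+q-1}{w+1}=\binom{q+w}{w+1}$ by stars and bars, and this is therefore a lower bound on $|{\bf obs}_{\leq_{\rm i}}({\cal C}_{w,{\cal H}})|$, as claimed. I do not anticipate a genuine obstacle here; the only point that deserves to be spelled out carefully is the non-isomorphy claim, i.e., that distinct multisets of connected, pairwise non-isomorphic graphs produce non-isomorphic disjoint unions — this is precisely what makes the count tight. As a closing remark, Theorem~\ref{maisnlilowern} then follows by specializing ${\cal H}={\bf obs}_{\leq_{\rm i}}({\cal C}_k)$: this set is a $\leq_{\rm i}$-antichain by definition of obstruction sets, by Theorem~\ref{t:gov} it contains at least $q=3^{k-7}+1$ connected graphs when $k\geq 7$, and by Observation~\ref{obpot} we have ${\cal C}_{w,{\cal H}}={\cal C}_{w,k}$, whence $|{\bf obs}_{\leq_{\rm i}}({\cal C}_{w,k})|\geq\binom{3^{k-7}+w+1}{w+1}$.
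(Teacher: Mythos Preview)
Your proposal is correct and follows essentially the same approach as the paper: pick $q$ connected members of ${\cal H}$, form all $\binom{q+w}{w+1}$ disjoint unions of $w+1$ of them (with repetition), apply Lemma~\ref{c:zero} for membership in the obstruction set, and use connectedness to argue these are pairwise non-isomorphic. You actually supply more detail than the paper does on the non-isomorphy step, which the paper leaves implicit.
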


\begin{proof}
Let ${\cal H}'$ be some subset of ${\cal H}$ containing $q$ connected graphs.   
Using Lemma~\ref{c:zero}, we observe that every multiset
of cardinality $w+1$  whose
elements belong to ${\cal H}'$ corresponds to a different
(i.e.\ non-isomorphic)
obstruction of ${\cal C}_{w,{\cal H}}$.  
Therefore, $|{\bf obs}_{\leq_{\rm i}}({\cal C}_{w,{\cal H}})|$ is  at least the number of multisets of cardinality
$w+1$ the elements of which are taken from a set of cardinality $q$, which is
known to be~$\binom{q+w}{w+1}$.
\end{proof}
%
%

\begin{proof}[Proof of Theorem~\ref{maisnlilowern}]
From Observation~\ref{obpot}, ${\cal C}_{w,k}={\cal C}_{w,{\cal H}_{k}}$, where ${\cal H}_{k}={\bf obs}_{\leq_{\rm i}}({\cal C}_{k})$.
This means that ${\bf obs}_{\leq_{\rm i}}({\cal C}_{w,k})={\bf obs}_{\leq_{\rm i}}({\cal C}_{w,{\cal H}_k}).$ The result follows from Theorems~\ref{t:gov} and \ref{u7ui87u}.
\end{proof}

\section{Conclusions}\label{sec:conc}

In this paper we have proved that the immersion obstructions for admitting a  layout of cutwidth at most $k$ have sizes single-exponential in $\Oh(k^3\log k)$.
The core of the proof can be interpreted as bounding the number of different behavior types for a part of the graph that has only a small number of edges connecting it to the rest.
This, in turn, gives an upper bound on the number of states for a dynamic programming algorithm that computes the optimum cutwidth ordering on an approximate one.
This last result, complemented with an adaptation of the reduction scheme of Bodlaender~\cite{Bodlaender96} to the setting of cutwidth, yields a direct and self-contained FPT algorithm
for computing the cutwidth of a graph. In fact, we believe that our algorithm can be thought of ``Bodlaender's algorithm for treewidth in a nutshell''. It consists of the same
two components, namely a recursive reduction scheme and dynamic programming on an approximate decomposition, but the less challenging setting of cutwidth makes both components
simpler, thus making the key ideas easier to understand.
For an alternative attempt of simplification of the 
algorithm of Bodlaender and Kloks~\cite{BodlaenderK96}, applied for the case of pathwidth, see~\cite{Furer2016}.

In our proof of the upper bound on the number of types/states, we used a somewhat new bucketing approach. This approach holds the essence of the typical sequences of Bodlaender and Kloks~\cite{BodlaenderK96},
but we find it more natural and conceptually simpler. The drawback is that we lose a $\log k$ factor in the exponent. It is conceivable that we could refine
our results by removing this factor provided we applied typical sequences directly, but this is a price that we are willing to pay for the sake of simplicity and being self-contained.

An important ingredient of our approach is the observation that there is always an optimum cutwidth ordering that is linked: 
the cutsizes along the ordering precisely govern the edge connectivity between prefixes and suffixes.
Recently, there is a growing interest in parameters that are tree-like analogues of cutwidth: tree-cut width~\cite{Wollan15} and carving-width~\cite{SeymourT94}.
In future work, we aim to explore and use linkedness for tree-cut decompositions and carving decompositions in a similar manner as presented here. 

\subsection{Acknowledgements.} The second author thanks Mikołaj Bojańczyk for the common work on understanding and reinterpreting the Bodlaender-Kloks dynamic programming algorithm~\cite{BodlaenderK96},
which influenced the bucketing approach presented in this paper.
We also thank O-joung Kwon for pointing us to~\cite{GeelenGW02,KanteK14}, as well as an anonymous referee for noting that the running time in Lemma~\ref{lem:reduce} can be reduced to polynomial by amortization.

\pagebreak

\bibliographystyle{plainurl}
\bibliography{cutwidth-obs}

\begin{thebibliography}{10}

\bibitem{BellenbaumD02}
Patrick Bellenbaum and Reinhard Diestel.
\newblock Two short proofs concerning tree-decompositions.
\newblock {\em Combinatorics, Probability {\&} Computing}, 11(6):541--547,
  2002.

\bibitem{Bodlaender96}
Hans~L. Bodlaender.
\newblock A linear-time algorithm for finding tree-decompositions of small
  treewidth.
\newblock {\em {SIAM} J. Comput.}, 25(6):1305--1317, 1996.

\bibitem{BodlaenderK96}
Hans~L. Bodlaender and Ton Kloks.
\newblock Efficient and constructive algorithms for the pathwidth and treewidth
  of graphs.
\newblock {\em J. Algorithms}, 21(2):358--402, 1996.

\bibitem{BoothGLR92}
Heather Booth, Rajeev Govindan, Michael~A. Langston, and Siddharthan
  Ramachandramurthi.
\newblock Cutwidth approximation in linear time.
\newblock In {\em Proceedings of the Second Great Lakes Symposium on VLSI},
  pages 70--73. {IEEE}, 1992.

\bibitem{Chatzidimitriou15}
Dimitris Chatzidimitriou, Jean{-}Florent Raymond, Ignasi Sau, and Dimitrios~M.
  Thilikos.
\newblock An ${O}(\log {OPT})$ -approximation for covering/packing minor models
  of $\theta_r$.
\newblock In {\em Proceedings of WAOA 2015}, pages 122--132, 2015.
\newblock Full version available at arXiv:1510.03945.

\bibitem{DiazPS02}
Josep D{\'{\i}}az, Jordi Petit, and Maria~J. Serna.
\newblock A survey of graph layout problems.
\newblock {\em {ACM} Comput. Surv.}, 34(3):313--356, 2002.

\bibitem{FominLMS12}
Fedor~V. Fomin, Daniel Lokshtanov, Neeldhara Misra, and Saket Saurabh.
\newblock Planar $\mathcal{F}$-deletion: Approximation, kernelization and
  optimal {FPT} algorithms.
\newblock In {\em Proceedings of {FOCS} 2012}, pages 470--479. {IEEE} Computer
  Society, 2012.

\bibitem{Furer2016}
Martin Fürer.
\newblock Faster computation of path-width.
\newblock In Veli Mäkinen, J.~Simon Puglisi, and Leena Salmela, editors, {\em
  Combinatorial Algorithms: 27th International Workshop, IWOCA 2016, Helsinki,
  Finland, August 17-19, 2016, Proceedings}, pages 385--396, Cham, 2016.
  Springer International Publishing.

\bibitem{garey1979computers}
Michael~R. Garey and David~S. Johnson.
\newblock {\em Computers and intractability}, volume 174.
\newblock Freeman New York, 1979.

\bibitem{GeelenGW02}
James~F. Geelen, A.~M.~H. Gerards, and Geoff Whittle.
\newblock Branch-width and well-quasi-ordering in matroids and graphs.
\newblock {\em J. Comb. Theory, Ser. {B}}, 84(2):270--290, 2002.
\newblock A correction is available at
  \url{http://www.math.uwaterloo.ca/\~jfgeelen/Publications/bn-corr.pdf}.

\bibitem{GiannopoulouPTRW2016}
A.~C. {Giannopoulou}, M.~{Pilipczuk}, D.~M. {Thilikos}, J.-F. {Raymond}, and
  M.~{Wrochna}.
\newblock {Linear kernels for edge deletion problems to immersion-closed graph
  classes}.
\newblock {\em ArXiv e-prints arXiv:1609.07780}, September 2016.

\bibitem{GOVINDAN2001189}
Rajeev Govindan and Siddharthan Ramachandramurthi.
\newblock A weak immersion relation on graphs and its applications.
\newblock {\em Discrete Mathematics}, 230(1):189 -- 206, 2001.

\bibitem{HeggernesLMP11}
Pinar Heggernes, Daniel Lokshtanov, Rodica Mihai, and Charis Papadopoulos.
\newblock Cutwidth of split graphs and threshold graphs.
\newblock {\em {SIAM} J. Discrete Math.}, 25(3):1418--1437, 2011.

\bibitem{HeggernesHLN12}
Pinar Heggernes, Pim van~'t Hof, Daniel Lokshtanov, and Jesper Nederlof.
\newblock Computing the cutwidth of bipartite permutation graphs in linear
  time.
\newblock {\em {SIAM} J. Discrete Math.}, 26(3):1008--1021, 2012.

\bibitem{KanteK14}
Mamadou~Moustapha Kant{\'{e}} and O{-}joung Kwon.
\newblock An upper bound on the size of obstructions for bounded linear
  rank-width.
\newblock {\em CoRR}, arXiv:1412.6201, 2014.

\bibitem{KORACH199397}
Ephraim Korach and Nir Solel.
\newblock Tree-width, path-width, and cutwidth.
\newblock {\em Discrete Applied Mathematics}, 43(1):97 -- 101, 1993.

\bibitem{Lagergren98}
Jens Lagergren.
\newblock Upper bounds on the size of obstructions and intertwines.
\newblock {\em J. Comb. Theory, Ser. {B}}, 73(1):7--40, 1998.

\bibitem{LeightonR99}
Frank~Thomson Leighton and Satish Rao.
\newblock Multicommodity max-flow min-cut theorems and their use in designing
  approximation algorithms.
\newblock {\em J. {ACM}}, 46(6):787--832, 1999.

\bibitem{RobertsonS10}
Neil Robertson and Paul~D. Seymour.
\newblock Graph minors {XXIII.} {Nash-Williams'} immersion conjecture.
\newblock {\em J. Comb. Theory, Ser. {B}}, 100(2):181--205, 2010.

\bibitem{SeymourT94}
Paul~D. Seymour and Robin Thomas.
\newblock Call routing and the ratcatcher.
\newblock {\em Combinatorica}, 14(2):217--241, 1994.

\bibitem{Takahashi94-Mi}
Atsushi Takahashi, Shuichi Ueno, and Yoji Kajitani.
\newblock Minimal acyclic forbidden minors for the family of graphs with
  bounded path-width.
\newblock {\em Discrete Math.}, 127(1-3):293--304, 1994.
\newblock Graph theory and applications (Hakone, 1990).

\bibitem{ThilikosSB05}
Dimitrios~M. Thilikos, Maria~J. Serna, and Hans~L. Bodlaender.
\newblock Cutwidth {I:} {A} linear time fixed parameter algorithm.
\newblock {\em J. Algorithms}, 56(1):1--24, 2005.

\bibitem{ThilikosSB05a}
Dimitrios~M. Thilikos, Maria~J. Serna, and Hans~L. Bodlaender.
\newblock Cutwidth {II:} {A}lgorithms for partial $w$-trees of bounded degree.
\newblock {\em J. Algorithms}, 56(1):25--49, 2005.

\bibitem{Thomas90}
Robin Thomas.
\newblock A {M}enger-like property of tree-width: The finite case.
\newblock {\em J. Comb. Theory, Ser. {B}}, 48(1):67--76, 1990.

\bibitem{Wollan15}
Paul Wollan.
\newblock The structure of graphs not admitting a fixed immersion.
\newblock {\em J. Comb. Theory, Ser. {B}}, 110:47--66, 2015.

\bibitem{Yannakakis85}
Mihalis Yannakakis.
\newblock A polynomial algorithm for the min-cut linear arrangement of trees.
\newblock {\em J. {ACM}}, 32(4):950--988, 1985.

\end{thebibliography}

\end{document}